\newtheorem{theorem}{Theorem}[section]
\newtheorem{proposition}[theorem]{Proposition}
\newtheorem{corollary}[theorem]{Corollary}
\newtheorem{lemma}[theorem]{Lemma}
\newtheorem{remark}{Remark}
\newtheorem{definition}{Definition}
\numberwithin{equation}{section}
\newcommand{\vo}{V_{\omega,1}}
\newcommand{\vt}{V_{\omega,2}}
\newcommand{\pno}{p_{n,1}}
\newcommand{\pnt}{p_{n,2}}
\newcommand{\ppnt}{p_{n+1,2}}
\newcommand{\ppru}{\mathcal{P}}
\newcommand{\qppru}{\mathcal{Q}}
\newcommand{\pru}{{\mathcal{P}'}}
\newcommand{\qpru}{{\mathcal{Q'}}}
\newcommand{\tT}{{T'}}
\newcommand{\tA}{{A'}}
\newcommand{\tB}{{B'}}
\newcommand{\transn}{T_{\omega,n}}
\newcommand{\prodtransn}{{\bf T}_{\omega,n}}
\newcommand{\prodtransnm}{{\bf T}_{\omega,n-1}}
\newcommand{\ttransn}{T'_{\omega,n}}
\newcommand{\deltap}{\delta^+}
\newcommand{\deltam}{\delta^-}
\newcommand{\D}{D_\omega}
\newcommand{\Z}{\mathbb{Z}}
\newcommand{\R}{\mathbb{R}}
\newcommand{\C}{\mathbb{C}}
\newcommand{\n}{\mathbb{N}}
\newcommand{\p}{\mathbb{P}}
\newcommand{\esp}{\mathbb{E}}
\newcommand{\bra}{\langle}
\newcommand{\ket}{\rangle}
\newcommand{\norm}[1]{\left\lVert #1 \right\rVert}
\newcommand{\e}{\mathrm{e}}
\newcommand{\h}{\mathcal{H}}
\newcommand{\vp}{\phi}
\newcommand{\X}{{\bf X}}
\newcommand{\K}{\mathcal{K}}
\def \simless {\mathbin{\lower 3pt\hbox{$\rlap{\raise 5pt
              \hbox{$\char'074$}}\mathchar"7218$}}}
\author[O. Bourget, G. R. Moreno Flores and A. Taarabt]{Olivier Bourget$^1$, Gregorio R. Moreno Flores$^{2,*}$ and Amal Taarabt$^3$}
\date{}
\thanks{{\it Key words and phrases.} 
Dirac model, decaying disordered, phase transition, dynamical localization.}
\thanks{ AMS 2010 {\it subject classifications}. 82B44, 47B80}
\thanks{$^*$ Corresponding author}
\thanks{$^1$ $^2$ $^3$ Facultad de Matem\'aticas, Pontificia Universidad Cat\'olica de Chile.}
\thanks{$^1$  Partially supported by Fondecyt grant 1161732}
\thanks{$^2$  Partially supported by Fondecyt grant 1171257, N\'ucleo Milenio `Modelos Estoc\'asticos de Sistemas Complejos y Desordenados' and MATH Amsud `Random Structures and Processes in Statistical Mechanics'}
\thanks{$^3$  Partially supported by Fondecyt grant 11190084}
\address{Facultad de Matem\'aticas\\
Pontificia Universidad Cat\'olica de Chile\\
Vicu\~na Mackenna 4860, Macul\\
Santiago, Chile}
\email{bourget@mat.uc.cl, grmoreno@mat.uc.cl, amtaarabt@mat.uc.cl}
\title[Dirac Operators in a Decaying Potential]{One-dimensional Discrete Dirac Operators in a Decaying Random Potential I: Spectrum and Dynamics}
\begin{document}

\begin{abstract}
	We study the spectrum and dynamics of  a one-dimensional discrete Dirac operator in a random potential obtained by damping an i.i.d. environment with an envelope of type $n^{-\alpha}$ for $\alpha>0$.
	We recover all the spectral regimes previously obtained for the analogue Anderson model in a random decaying potential, namely: absolutely continuous spectrum in the super-critical region $\alpha>\frac12$; a transition from pure point to singular continuous spectrum in the critical region $\alpha=\frac12$; and pure point spectrum in the sub-critical region $\alpha<\frac12$.
	From the dynamical point of view, delocalization in the super-critical region follows from the RAGE theorem. 
	In the critical region, we exhibit a simple argument based on lower bounds on eigenfunctions showing that no dynamical localization can occur even in the presence of point spectrum. 
	Finally, we show dynamical localization in the sub-critical region by means of the fractional moments method and provide control on the eigenfunctions. 
	
\end{abstract}

\maketitle

\tableofcontents


\section{Introduction}

The emergence of two-dimensional materials and the subsequent avalanche of related studies led to significant theoretical and experimental advances in condensed matter. 
The experimental discovery of 
graphene, a two-dimensional material composed of carbon atoms arranged in a honeycomb structure, was accomplished in 2004 \cite{NGM}. 
Due to its unusual and remarkable properties 
such as Klein tunnelling and finite minimal conductivity \cite{KNG},
graphene has attracted great attention in the recent years. It has emerged as a fascinating 
system for fundamental studies in condensed matter physics, as well as a promising candidate material 
for future applications in nanoelectronics and molecular devices. 
The simplest model for the dynamics of charge carriers in such a structure is the discrete Laplacian on a honeycomb lattice but
at low excitations energies this dynamics is actually described by a
massless two-dimensional Dirac operator \cite{CGPNG}. In particular, the Dirac cone structure gives graphene massless fermions, leading to half integer \cite{GMN,STZ}, fractional \cite{BGKSS,ADDLS} and
fractal \cite{DWM,HYY} quantum Hall Effects, in addition to ultrahigh carriers mobility \cite{BJS} and many other novel phenomena and properties.

In related contexts, Dirac operators have found various applications in electronic transport \cite{SAHR}, photonic structures \cite{RH1,RH2} and utracold matter in optical lattices \cite{BEG}.
Dirac operators are also used to study relativistic and non-relativistic electron localization phenomena as well as in investigations of electrical conduction
in disordered systems \cite{BDMRS,RB,dO-Bernoulli}.
Further electronic and transport studies of Dirac operators are hence  relevant for understanding the charge transport mechanism of a variety of physical systems.

Most of the spectral and dynamical aspects of random Dirac operators parallel well known results for the Anderson model obtained for instance in the works \cite{AM, Bu1, Bu2, BDFG, Car, CKM, DG, GK1, GT, GMP, JZ, KS}. Nonetheless, they require non-trivial adaptations of the proofs due to the matrix form and first order structure of the model and, in some situations, led to new behaviours. For the discrete model in an independent and identically distributed potential with a regular enough distribution, dynamical localization was obtained in \cite{dO-DL} by means of the fractional moment method of \cite{AM} in the three classical regimes: large disorder, near the band edge, and for all energies in the one-dimensional setting. The work \cite{dO-Bernoulli} considers the one-dimensional model with a Bernoulli potential. Dynamical localization is obtained for all energies in the massive case by means of a multiscale analysis which primary input is the positivity of the Lyapunov exponent (see \cite{CKM} for the Anderson model in this situation). In the massless case, the authors observe special configurations of the atoms of the potential which lead to zero Lyapunov exponents and transport for certain energies, a phenomenon which is not encountered in the Anderson model (see also \cite{dO-letter} and, for related phenomena in different contexts, see\cite{dBG, HS}). 
The work \cite{dO-LB} establishes dynamical lower bounds in one dimension in the spirit of \cite{GKT}.
The very recent work \cite{BCZ} establishes band edge localization for a continuous random Dirac-like operator under an open gap assumption.

%
%

Even though localization is well established for discrete random Dirac operators, the existence of continuous spectrum is an open question. In the Anderson model, absolutely continuous spectrum was shown to exist on tree graphs \cite{Kl, ASW, FHS} but there are still no available results in this direction on the lattice. 
A delocalization-localization transition has been proved for the related random Landau Hamiltonians where non-trivial transport occurs near Landau levels \cite{GKT}. 
To understand how the absolutely continuous spectrum can survive the addition of disorder in the Anderson model, it has been proposed to modulate the random potential by a decaying envelope, a point of view that has been followed since at least the work \cite{Kr} where extended states were obtained. Subsequent works in this direction include \cite{B1, B2, FGKM}. In one-dimension, the model was shown to display a rich phase diagram with different kinds of spectrum arising for different values of the parameters \cite{DSS, D, KLS}. From the dynamical perspective, dynamical localization was shown in \cite{Si82} for slowly decaying potentials while transport was observed for critical rate of decay in \cite{GKT}. 


In this work, we propose to follow this perspective by studying the one-dimensional Dirac operator in a decaying random potential. In a related spirit, sparse potentials were considered in \cite{dO-sparse-I, dO-sparse-II} but the model considered here has been untouched so far. 
Our results include
\begin{enumerate}
	\item[1.-] the nature of the spectrum depending on the decay rate of the potential,
	
	\item[2.-] transport for critically decaying potentials, and
	
	\item[3.-] dynamical localization for slowly decaying potentials.
\end{enumerate}
From the technical point of view, we follow the martingale approach of \cite{KLS} to study the spectrum of the operator. This technique relies on a decomposition of the Pr\"ufer transform including martingales terms which can be estimated by probabilistic arguments. To obtain such a decomposition, we introduce a novel Pr\"ufer transform for the Dirac operator leading to an explicit recursion on the complex plane which is in turn suitable for a martingale analysis. This transform is closer in spirit to the one introduced in \cite{KRS} for the Anderson model and differs from the one used in \cite{dO-sparse-I} in the context of the Dirac operator with sparse potentials. It has the advantage that the disorder variables are nicely factorized into linear and quadratic terms only.

Our proof of delocalization for critically decaying potentials is based on lower bounds on eigenfunctions and seems to be novel. It is much less quantitative than the bounds obtained in \cite{GKT} for the Anderson model but has the advantage to be very simple.

We prove dynamical localization for slowly decaying potentials following the fractional moment method of \cite{AM} by relating the fractional moments of the Green's function to estimates the norm of transfer matrices. The corresponding result for the Anderson model in a decaying random media was obtained in \cite{Si82} by means of the Kunz-Souillard method \cite{KS,Bu1,Bu2,DG}. It is likely that a suitable adaptation of these techniques for Dirac operators could be applied in our context. We choose this different perspective as it relies directly on the analysis of the Pr\"ufer transform that we developed to study the spectrum of the operator and allows us to consider random variables with unbounded densities under some mild regularity assumptions, unlike the Kunz-Souillard method which assumes bounded densities. In a related context, our approach was also successful in providing a proof of dynamical localization for the continuum Anderson model with a slowly decaying random potential \cite{BMT02}.
In addition, the Pr\"ufer transform analysis provides lower bounds on eigenfunctions which we use to show that certain stretched exponential moments blow up, a fact that in some sense quantifies the strength of localization.

The present article is organized as follows: in Section \ref{sec:Prufer}, we present the transfer matrix analysis which will be the central ingredient in our proofs. In particular, we define the Pr\"ufer transform in Section \ref{sec:the-transform}. 
Section \ref{sec:asymptotics-TM} contains the asymptotics of the transfer matrices obtained via martingale methods. 
We show absolutely continuous spectrum for the super critical regime in Section \ref{sec:ac}. 
The spectral transition and transport in the critical region are proved in Section \ref{sec:critical}. 
For the sub-critical regime, we show spectral and dynamical localization in Section
\ref{sec:DL-pp} and
\ref{sec:DL} respectively. 
Finally, the appendix contains several technical estimates and some parts of the proofs which were deferred to lighten the presentation.

\bigskip

{\bf Notation.}
We set $\n^*=\{1,2,\cdots\}$ and let $\h$ be the Hilbert space $l^2(\n^*,\C^2)$ with its natural canonical basis
$\{\delta^\pm_n:\, n\in\n^*\}$. 
A vector $\Phi=(\Phi_n)_n\in\h$ is given by two sequences $\vp^\pm=(\vp_n^\pm)\in\ell^2(\n^*,\C)$ such that 
\begin{eqnarray*}
	\Phi_n
	=
	\begin{pmatrix}
		\vp^+_n \\ \vp^-_n
	\end{pmatrix} \quad\quad \text{for}\ n\in\n^*.
\end{eqnarray*}
We will occasionally denote this relation by $\Phi=\vp^+ \otimes \vp^-$.
If $B$ is a set, we write $\chi_B$ for its characteristic function.
Constants such as $ C(a,b,\dots)$ will be finite and positive and will depend only on the parameters or quantities $a, b,\dots$; they will be independent of the other parameters or quantities involved in the equation. Note that the value of $C(a,b,\dots)$ may change from line to line.
\newline
Given an open interval $I\subset\R$, we consider $\mathcal{C}_{c,+}^\infty$ is the class of infinitely differentiable non-negative real valued functions  with compact support contained in $I$.
We set $P_I(H)=\chi_I(H)$ the spectral projection of an operator $H$ on the interval $I$ and $\sigma(H)$ for its spectrum.
The pure point, absolutely continuous, and singular continuous components will be denoted by $\sigma_{pp}(H), \sigma_{ac}(H)$ and $\sigma_{sc}(H)$ respectively. Finally, we consider the position operator $|\X|$ on $\h$ defined by $|\X|\delta^{\pm}_{n} = n$.


\section{Model and Results}


\subsection{Dirac operator with decaying random potential}

We consider the \textit{free Dirac operator} $D$ defined by
\begin{equation}\label{free Dirac}
	D
	=
	\begin{pmatrix}
		m & d \\
		d^* & -m
	\end{pmatrix} \quad \mathrm{on}\quad \h,
\end{equation}
with mass $m\ge0$ and where $d$ and $d^*$ are the finite difference operators acting on  $\ell^2(\n^*,\C)$ as $(du)_n=u_n-u_{n+1}$ and  $(d^*u)_n=u_n-u_{n-1}$  for $u=(u_n)_n\in\ell^2(\n^*,\C)$ with the convention $u_{0}=0$. 
Writing $\Phi=\vp^+ \otimes \vp^-$ for $\Phi\in\h$, we have
\begin{equation*}
	D \Phi
	=
	\begin{pmatrix}
		m \vp^+ + d\vp^-
		\\
		d^*\vp^+ - m \vp^-
	\end{pmatrix}.
\end{equation*}
To define the perturbed operator, we introduce a family of integrable random variables $\{ V_{\omega,i}(n):\, n\in \n^*, i=1,2\}$
defined on a probability space $(\Omega,\mathcal{F},\p)$. We denote the expected value with respect to $\p$ by $\esp$. We then define the random multiplication operator $V_{\omega}$ acting on the canonical vectors as
\begin{equation}\label{V_omega}
	V_{\omega}\delta^+_n = \vo(n) \delta^+_n,
	\quad
	V_{\omega}\delta^-_n = \vt(n) \delta^-_n.
\end{equation}
Let $\lambda>0$, $\alpha>0$ and let $(a_n)_n$ be a positive sequence such that $\displaystyle \lim_{n\to\infty}n^{\alpha}a_n = 1$. In most of the following, we will assume that
\begin{enumerate}
	\item[ \textbf{(A1)}] The random variables $\{V_{\omega,i}(n);\, n\in\n^*,\, i=1,\, 2\}$ are independent.
	
	\item[\textbf{(A2)}] $\esp[V_{\omega,i}(n)]=0$.
	
	\item[\textbf{(A3a)}] $\esp[V_{\omega,i}(n)^2]^{1/2}= \lambda a_n$.
	
	\item[\textbf{(A3b)}]	$\esp[V_{\omega,i}(n)^4] \leq C a_n^2$, for some finite $C>0$.
	
	\item[\textbf{(A4)}] There exist a $\p$-almost surely finite constant $C(\omega)>0$ and $\varepsilon>0$ such that 
		\begin{equation*}
			|V_{\omega,i}(n)| \leq C(\omega) n^{-\frac{2\alpha}{3}-\varepsilon},
\end{equation*}			
	for all $n\in\n^*$, $\omega\in\Omega$ and $i=1,2$.
\end{enumerate}
More general hypothesis will be considered and clarified in due time.
Notice that these assumptions can be achieved for instance by considering
\begin{eqnarray}\label{eq:basic-model}
	V_{\omega,1}(n) = \lambda a_n \omega_{1,n},
	\quad 
	V_{\omega,2}(n) = \lambda a_n \omega_{2,n},
\end{eqnarray}
for a family of independent integrable random variables $\{\omega_{n,i}:, n\in\n^*,i=1,2\}$ defined on $(\Omega,\mathcal{F},\p)$ such that $\esp[\omega_{n,i}]=0$ and $\esp[\omega_{n,i}^2]=1$ with some suitable conditions on their moments.

FInally, the \textit{Dirac operator in a decaying random potential} is given by
\begin{equation}\label{Dirac}
	\D=D_{\omega,\lambda,\alpha} = D+V_{\omega}, \quad\mathrm{on}\quad \h.
\end{equation}
Notice that $\D$ is a non-ergodic family of bounded self-adjoint operators on $\h$ for every $\omega\in\Omega$. In particular, the existence of deterministic spectral components is not straightforward. Nonetheless, as $V_{\omega}$ is compact, the essential spectra of $\D$ and $D$ coincide.




\subsection{Spectral regimes}



The spectral structure of the Dirac operator $D$ can be inferred from the simple relation
\begin{equation}\label{dirac^2}
 D^2=\begin{pmatrix}
        \Delta+m^2&0\\0&\Delta+m^2
       \end{pmatrix},
\end{equation}
where $\Delta$ is the discrete Laplacian defined on $\ell^2(\n^*,\C)$ by $(\Delta u)_n=2u(n)-u(n+1)-u(n-1)$, with the convention $u(0)=u(-1)=0$. It is well-known that the spectrum of $\Delta$ fills the interval $[0,4]$ and consists of purely absolutely continuous spectrum in its interior. In particular, $\sigma(D^2)=[m^2,m^2+4]$ where this equality reminds the relation between momentum and energy
in relativistic quantum mechanics. 
%
The spectrum of the free Dirac operator \eqref{free Dirac} is hence given by
\begin{equation*}
\Sigma:= \sigma(D)=[-\sqrt{m^2+4},-m]\cup[m,\sqrt{m^2+4}].
\end{equation*}
It is known that $\mathring\Sigma:=(-\sqrt{m^2+4},-m)\cup(m,\sqrt{m^2+4})$ consists of purely absolutely continuous spectrum \cite[Proposition 2.8]{dO-sparse-II}. 

Since $\D$ is a compact pertubation of $D$, the essential spectrum of $\D$ coincides with $\Sigma$. In particular, $\D$ can only have discrete spectrum at energies outside $\Sigma$. As the family $(\D)_{\omega\in\Omega}$ is not ergodic, there is no a priori guaranty that the sepctral components are deterministic.

\begin{figure}[h!]
\begin{center}
\includegraphics[scale=0.6]{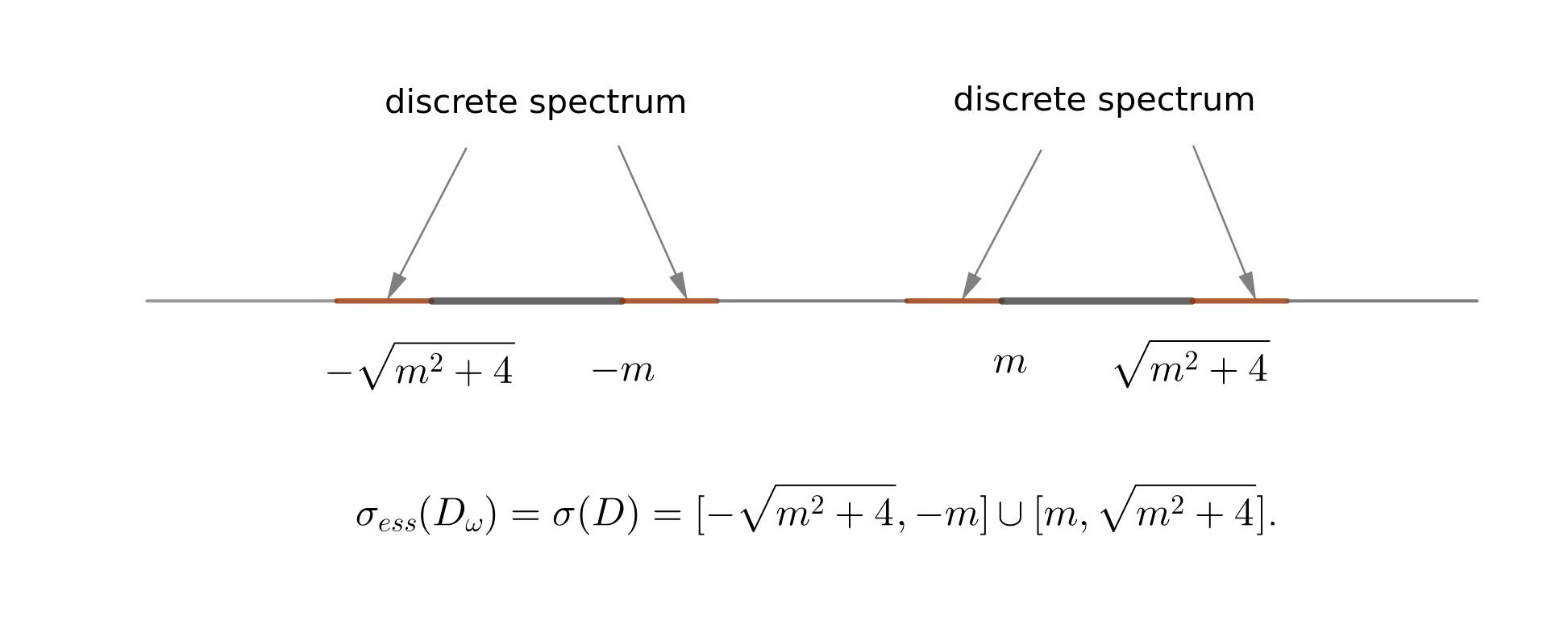}
\caption{Spectral structure of the operator $\D$.} \label{figd}
\end{center}
\end{figure}

We present our result on the nature of the spectrum of $\D$ for different values of the parameters.
Let $\lambda_m:\Sigma \to [0,\infty)$ be the function defined by
\begin{equation*}
	\lambda_m(E)^2
	=
	\frac12
	\frac{(E^2-m^2)(m^2+4-E^2)}{m^2+E^2}.
\end{equation*}
Let $\lambda^*(m)$ be its maximal value in $[m,\sqrt{m^2+4}]$ and for $|\lambda|<\lambda^*(m)$, let $E^*_-(\lambda,m)<E^*_+(\lambda,m)$ be the two roots of the equation $\lambda_m(E)=|\lambda|$ in $[m,\sqrt{m^2+4}]$.

\begin{theorem}\label{thm:spectrum}
	Assume \textbf{(A1)}-\textbf{(A4)}.
	Then, the essential spectrum of $\D$ is $\p$-a.s. equal to $\Sigma$. Furthermore, 

\begin{enumerate}
	\item[(1)] \textbf{Super-critical case.} If $\alpha>\frac12$ then, for all $\lambda>0$, the spectrum of $\D$ is almost surely purely absolutely continuous in $\mathring\Sigma$.
					\vspace{1ex}

    \item[(2)] \textbf{Critical case.} If $\alpha=\frac12$ then for all $\lambda > 0$, the a.c. spectrum of $\D$ is almost surely empty. Moreover,
			\begin{enumerate}[label=\alph*.-]
			\item[a.] If $\lambda\geq \lambda^*(m)$, then the spectrum of $\D$ is almost surely pure point in $\mathring\Sigma$.
			
			\vspace{1ex}
			
			\item[b.] If $\lambda< \lambda^*(m)$ then, almost surely, the spectrum of $\D$ is purely singular continuous in $\{E\in\mathring\Sigma:\, |E|\in(E^*_-(\lambda,m),E^*_+(\lambda,m))\}$ and pure point in the complement of this set.
		\end{enumerate}

	\vspace{1ex}
		
	\item[(3)] \textbf{Sub-critical case.} If $\alpha<\frac12$ then for all $\lambda> 0$, the spectrum of $\D$ is almost surely pure point in $\mathring\Sigma$.
\end{enumerate}
\end{theorem}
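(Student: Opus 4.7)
The plan is to reduce all three spectral statements to the asymptotic behavior of the Prüfer radius $R_n(E)$ constructed in Section \ref{sec:the-transform} for energies $E\in\mathring\Sigma$. The essential spectrum equals $\Sigma$ almost surely simply because $V_\omega$ is a compact (in fact trace-class under (A4)) perturbation of $D$, so Weyl's theorem applies pointwise. For the finer decomposition, the strategy is the classical Gilbert--Pearson/Last--Simon subordinacy on the half-line, translated through the Prüfer transform: bounded $\log R_n(E)$ gives purely absolutely continuous spectrum, polynomial growth $R_n(E)\sim n^{\kappa(E)}$ gives singular continuous spectrum when $\kappa(E)<1/2$ and pure point when $\kappa(E)>1/2$, while stretched-exponential growth yields enough decay of generalized eigenfunctions to conclude pure point spectrum via the Schnol--Simon argument.

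The workhorse is the multiplicative decomposition of the Prüfer radius established in Section \ref{sec:asymptotics-TM}, which should take the form
\begin{equation*}
\log R_{n+1}^2(E) - \log R_n^2(E) \;=\; \lambda_m(E)^2\, a_n^2 \;+\; L_n(E) \;+\; Q_n(E) \;+\; \mathcal{E}_n(E),
\end{equation*}
where $L_n$ is a linear martingale increment in $V_{\omega,i}(n)$ with quadratic variation of order $a_n^2$, $Q_n$ is a centered quadratic term of order $a_n^2$, and $\mathcal{E}_n$ collects deterministic oscillations (in the Prüfer phase) of mean zero plus negligible higher order terms. The constant $\lambda_m(E)^2$ arises after averaging the quadratic perturbation against the rotation by $k(E)$ coming from the free Dirac transfer matrix, which is why it takes the explicit form stated before the theorem. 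From here, Burkholder's inequality and the a.e.~equidistribution of the Prüfer phases (obtained along a subsequence using the Weyl-type argument of \cite{KLS}) control $L_n$ and the oscillatory part of $\mathcal{E}_n$ so that $\sum_{k\le n}(L_k+Q_k+\mathcal{E}_k)=o\bigl(1+\sum_{k\le n}a_k^2\bigr)$ almost surely, for Lebesgue-a.e. $E\in\mathring\Sigma$.

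With this asymptotic in hand, the three regimes follow by comparing $\sum_{k\le n}a_k^2\sim c\,n^{1-2\alpha}/(1-2\alpha)$ for $\alpha<1/2$, $\sim c\log n$ for $\alpha=1/2$, and $<\infty$ for $\alpha>1/2$. In the super-critical case $\log R_n(E)$ is $\p$-a.s. bounded on a set of full measure in $\mathring\Sigma$, so subordinacy gives purely absolutely continuous spectrum. In the sub-critical case $\log R_n(E)\gtrsim n^{1-2\alpha}$, a stretched exponential bound that makes any polynomially bounded generalized eigenfunction square summable, yielding pure point spectrum via Schnol. The critical case gives $R_n(E)\sim n^{\kappa(E,\lambda)}$ with an explicit $\kappa$ proportional to $\lambda_m(E)^2$ and quadratic in $\lambda$; the threshold $\kappa(E,\lambda)=1/2$ is exactly the equation $\lambda_m(E)=|\lambda|$ defining $E^*_\pm(\lambda,m)$ and $\lambda^*(m)$, and the Jitomirskaya--Last subordinacy criterion then separates the singular continuous regime (slow growth, $\kappa<1/2$) from the pure point regime ($\kappa>1/2$), while a separate argument rules out absolutely continuous spectrum (the unbounded growth of $\log R_n$ excludes it).

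The main obstacle is the critical regime. It requires a \emph{sharp} two-sided asymptotic for $\log R_n(E)$, not just a one-sided bound: both the precise constant $\lambda_m(E)^2$ emerging from the quadratic terms after phase averaging, and the fact that the martingale fluctuations are of strictly lower order ($\sqrt{\log n}$ versus $\log n$), must be controlled uniformly in $E$ on compact subsets of $\mathring\Sigma$. A secondary difficulty is the non-ergodic nature of $\D$: spectral statements cannot be deduced from a deterministic Lyapunov exponent and must instead be assembled energy-by-energy on sets of full Lebesgue measure, then promoted to spectral-type statements via Fubini and the trace-class nature of the spectral projections on appropriate subintervals.
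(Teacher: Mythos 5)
Your high-level plan captures the right architecture (Prüfer radius asymptotics, the three regimes distinguished by $\sum a_n^2$, the $\ell^2$ threshold $\kappa=1/2$ recovering $\lambda^*(m)$ and $E^*_\pm$), and the super-critical part is essentially sound modulo which Last--Simon theorem one cites. But there are two genuine gaps that would stop the proof for Parts 2 and 3.

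First, you repeatedly slide from ``$\|{\bf T}_{\omega,n}(E)\|\sim n^{\kappa(E,\lambda)}$'' (or $\sim e^{\beta n^{1-2\alpha}}$) to ``the subordinate solution decays at that rate,'' but these are different statements. Knowing that the product of transfer matrices grows tells you nothing a priori about the existence of a \emph{single} initial direction $\widehat{\vartheta}_0$ along which $\|{\bf T}_{\omega,n}(E)\widehat{\vartheta}_0\|\sim\|{\bf T}_{\omega,n}(E)\|^{-1}$. That is the real content behind the pure point and singular continuous conclusions: one must show the contracting direction $\vartheta(A_n)$ of the unimodular matrices $A_n={\bf T}_{\omega,n}(E)$ stabilizes, and this requires a separate argument controlling the ratio $r_n=\|A_n e_1\|/\|A_n e_2\|$ and showing $\log|r_n-r_\infty|/\log\|A_n\|\le -2$. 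In the paper this is exactly the role of Proposition~\ref{thm:decay-eigenfunctions} for $\alpha=1/2$ and Proposition~\ref{thm:asymptotics-eigenfunctions-pp} for $\alpha<1/2$, both resting on \cite[Lemma 8.7]{KLS} (reproduced as Lemma~\ref{thm:eigendirection}) and on a refinement of the martingale estimate to the \emph{difference} of two Prüfer radii. Without this you have only a lower bound on generic solutions, not a decaying one, and you cannot apply any subordinacy criterion.

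Second, ``pure point via Schnol'' is the wrong tool, and the gap matters. Schnol (or Berezanskii/Simon expansion theorems) tells you that spectrally a.e.\ $E$ admits a polynomially bounded generalized eigenfunction; it does not let you go from ``for Lebesgue-a.e.\ $E$ there exists an $\ell^2$ solution'' to ``the spectrum is pure point,'' since those energies form a Lebesgue-null set that the spectral measure could in principle avoid. The correct bridge is the Simon--Wolff / spectral averaging mechanism of rank-one perturbation theory \cite{Si95}: the randomness at the boundary site acts as a rank-one coupling, and spectral averaging promotes ``a.e.\ $E$ has a decaying solution'' to ``a.s.\ pure point.'' This is precisely what the paper invokes at the end of the proof of Part~2, and implicitly for Part~3 via Proposition~\ref{thm:asymptotics-eigenfunctions-pp}. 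Your sketch never touches this, so the pure point claims in Parts 2a, 2b, and 3 remain unproved.

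Two smaller points. For Part~1, your route is ``a.s.\ boundedness for a.e.\ $E$ implies purely a.c.'' via Last--Simon Theorem~1.1, which is defensible but slightly different from the paper, which bounds $\esp[R_n^4]$ uniformly, integrates over $E$, and applies Fatou together with Last--Simon Theorem~1.3 ($\liminf_n\int_a^b\|\mathcal{T}_n(E)\|^4\,dE<\infty$); the paper's route avoids having to establish a.s.\ convergence of $\log R_n$ and is the cleaner choice given (A3b). Finally, a bookkeeping error: the deterministic increment of $\log R_n^2$ is $\frac{\lambda^2(p_1^2+p_2^2)}{4\sin^2(2k)}a_n^2=\frac{\lambda^2}{\lambda_m(E)^2}a_n^2$, not $\lambda_m(E)^2 a_n^2$ as you wrote; the quantity $\lambda_m(E)^2$ enters the threshold $\beta(E,\lambda)=1/2$, not the increment itself.
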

\begin{remark}
	The assumptions \textbf{(A1)}-\textbf{(A4)} are made in such a way that the three parts of the above theorem can be jointly proved. Nonetheless, they can be weakened in the following ways:
	\begin{enumerate}
		
		\item Part 1 can be proved replacing \textbf{(A3a)} and \textbf{(A3b)} by
			\begin{eqnarray*}\label{eq:minimal-condition-ac}
				\sum_{n,i} \left(
					\esp[V_{\omega,i}(n)^2]+\esp[V_{\omega,i}(n)^4]
				\right)
				< \infty.
			\end{eqnarray*}
			The hypothesis \textbf{(A4)} is not required in this part.
			Furthermore, this is the only place where \textbf{(A3b)} is used.
			
		\item Assumption \textbf{(A4)} can be verified under some moments conditions. For instance, if 		
			\begin{eqnarray*}
				\esp[|V_{\omega,i}(n)|^p] \leq C n^{-(\frac{2\alpha}{3}+\varepsilon)p},
			\end{eqnarray*}
		for some $C>0$, $\varepsilon>0$ and $p>1$ such that $p>\frac{1}{\varepsilon}$, an application of Borel-Cantelli yields that for all $\varepsilon'<\varepsilon-\frac{1}{p}$, we have
			\begin{eqnarray*}
			\qquad
				|V_{\omega,i}(n)| \leq n^{-\frac{2\alpha}{3}-\varepsilon'},
			\end{eqnarray*}
			for all $n\geq \tau$, $i=1,2$, for some $\p$-almost surely finite $\tau=\tau(\omega)$, so that \textbf{(A4)} holds.
			
			\item If the potential has the form \eqref{eq:basic-model} for independent random variables $\{\omega_{n,i}:\, n\in\n^*,\, i=1,2\}$, then \textbf{(A2)}, \textbf{(A3a)} and \textbf{(A3b)} are satisfied if
			\begin{eqnarray*}
				\esp[\omega_{n,i}]=0,
				\quad
				\esp[\omega_{n,i}^2]=1
				\quad
				\text{and}
				\quad
				\esp[\omega_{n,i}^4] \leq C a_n^{-2},
			\end{eqnarray*}
			respectively. By Borel-Cantelli lemma, the condition \textbf{(A4)} is satisfied if the $\omega_{n,i}$'s have finite moments of order $p>\frac{3}{\alpha}$ such that
				\begin{eqnarray*}
					\esp[|\omega_{n,i}|^p] \leq C n^{\gamma p},
				\end{eqnarray*}
				for some $C\in(0,\infty)$ and $\gamma < \frac{\alpha}{3}-\frac{1}{p}$.
			
		\item Part 3 can be proved as a consequence of our dynamical localization result below which requires some moments assumptions that imply \textbf{(A4)}. However, the proof of dynamical localization requires some regularity of the law of the random variables, for instance \textbf{(A5)}. See Theorem \ref{thm:DL} and its consequences in Proposition \ref{thm:consequences-DL}.

		\item We note that our hypothesis are slightly more general than the ones stated in \cite{KLS} for the Anderson model in a decaying random potential. In particular, we allow unbounded random variables. In \cite{KLS}, it is assumed that \textbf{(A4)} holds with a deterministic constant. Assuming the weaker random condition requires minor adjustments. Hypothesis \textbf{(A4)} is made in order to truncate some expansions to order $2$.
	\end{enumerate}
\end{remark}
\begin{remark}
	The proofs of a.c. spectrum in Part 1 and absence of a.c. spectrum in Part 2 are based on general criteria of Last and Simon \cite{LS} developed for the Anderson model on $\ell^2(\mathbb{N})$. By means of the transform $\deltam_n \mapsto \delta_{2n}$, $\deltap_n\mapsto\delta_{2n+1}$, the operator $\D$ can be seen as a finite-range operator on $\ell^2(\mathbb{N})$ to which the criteria of Last and Simon applies with minor adaptations. These criteria were already applied for Dirac operators in a sparse potential in \cite{dO-sparse-II}.
	
\end{remark}


\subsection{Dynamical regimes}

Delocalization in the regions of continuous spectrum is a consequence of the RAGE theorem \cite{CFKS}.
The next theorem establishes the absence of dynamical localization in the critical regime even in the region of pure point spectrum.
\begin{theorem}\label{thm:transport}
	Let $\alpha=\frac12$, $\lambda>0$ and $I$ be a compact interval such that $I\subset\mathring \sigma_{pp}(\D)$. Then, there exists $p_0=p_0(I)>0$ such that
	for $\p$-almost every $\omega$, 
	\begin{eqnarray}\label{delocalization}
		\limsup_{t\to \infty} \left\| |\X|^{p/2} \e^{-it \D} \psi\right\|^2 = \infty,
	\end{eqnarray}
	for all $p>p_0$ and $\psi \in {\text Ran}P_I(\D)$. 
\end{theorem}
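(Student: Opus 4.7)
\textbf{Proof proposal for Theorem \ref{thm:transport}.}

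Since $I \subset \mathring\sigma_{pp}(\D)$ almost surely, any $\psi \in \operatorname{Ran} P_I(\D)$ admits an expansion $\psi = \sum_k c_k \Psi_k$ where $\Psi_k$ are normalized $\ell^2$ eigenfunctions of $\D$ with eigenvalues $E_k \in I$. The plan is to reduce the dynamical assertion to a static lower bound on such eigenfunctions by time-averaging, and then establish that lower bound via the Pr\"ufer asymptotics of Section \ref{sec:asymptotics-TM}. For the reduction, I would use
\[
\limsup_{t\to\infty} \bigl\| |\X|^{p/2} \e^{-it\D}\psi\bigr\|^2 \geq \liminf_{T\to\infty}\frac{1}{T}\int_0^T \bigl\| |\X|^{p/2} \e^{-it\D}\psi\bigr\|^2\, dt,
\]
expand the integrand pointwise in $n$, invoke Fubini on non-negative summands, and let the off-diagonal phases dephase in the Ces\`aro mean (using that the point spectrum of a half-line random operator is almost surely simple, which follows from cyclicity of $\delta_1^{\pm}$) to obtain the lower bound
\[
\liminf_{T\to\infty}\frac{1}{T}\int_0^T \bigl\| |\X|^{p/2} \e^{-it\D}\psi\bigr\|^2\, dt \;\geq\; \sum_k |c_k|^2\,\bigl\| |\X|^{p/2} \Psi_k\bigr\|^2.
\]
It therefore suffices to exhibit a deterministic $p_0=p_0(I)$ so that, for every $E\in I$, any $\ell^2$ eigenfunction $\Psi$ of $\D$ at energy $E$ satisfies $\| |\X|^{p/2}\Psi\|^2 = +\infty$ for $p>p_0$.

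The core step is the eigenfunction lower bound. In the critical regime $\alpha=\tfrac12$, the martingale analysis of the Pr\"ufer transform developed earlier is expected to yield, for each $E\in\mathring\Sigma$, a deterministic exponent $\gamma(E)=\gamma(E,\lambda,m)>0$ and a.s.\ matching polynomial bounds
\[
n^{\gamma(E)-\eta} \;\leq\; R_n^{+}(E)\;\leq\; n^{\gamma(E)+\eta}, \qquad n^{-\gamma(E)-\eta}\;\leq\; R_n^{-}(E)\;\leq\; n^{-\gamma(E)+\eta},
\]
for every $\eta>0$ and every $n$ large enough, where $R_n^{\pm}(E)$ are the Pr\"ufer radii of the two independent solutions. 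On the pure point region we have $2\gamma(E)>1$, and the $\ell^2$ eigenfunction must correspond to the decaying branch, giving the \emph{two-sided} bound $|\Psi(n)|^2+|\Psi(n+1)|^2 \asymp n^{-2\gamma(E)\pm\eta}$. In particular,
\[
\bigl\| |\X|^{p/2}\Psi\bigr\|^2 \;=\; \sum_n n^p |\Psi(n)|^2 \;\geq\; c\sum_n n^{p-2\gamma(E)-\eta},
\]
which diverges for $p\geq 2\gamma(E)-1+\eta$. Continuity (or upper semicontinuity) of $\gamma$ on the compact $I$ then allows one to set $p_0(I):=\sup_{E\in I}\bigl(2\gamma(E)-1\bigr)<\infty$, uniformly in $\omega$.

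Combining the two steps, for any $p>p_0(I)$ and any non-zero $\psi\in\operatorname{Ran} P_I(\D)$, some $c_k\neq 0$, and the corresponding term in the expansion contributes $|c_k|^2 \|\,|\X|^{p/2}\Psi_k\|^2 = +\infty$. The Ces\`aro lower bound then forces $\limsup_{t\to\infty}\|\,|\X|^{p/2}\e^{-it\D}\psi\|^2 = +\infty$, proving \eqref{delocalization}. The main obstacle is the \emph{lower} bound on $R_n^{-}(E)$ uniformly in $E\in I$: the upper bound follows from Gronwall-type estimates on the Pr\"ufer recursion, but the matching lower bound requires controlling the martingale fluctuations with a law-of-iterated-logarithm type precision, and the uniformity on $I$ needs to be transferred from a countable dense subset via the joint continuity of the Pr\"ufer transform in $(n,E)$. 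The almost-sure simplicity of the point spectrum used in the dephasing should follow by the classical cyclicity argument applied to the boundary vectors $\delta_1^{\pm}$.
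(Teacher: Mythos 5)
Your high-level plan coincides with the paper's: reduce the dynamical statement to a static lower bound on eigenfunctions via Ces\`aro time-averaging of the moments, then push that lower bound through the sum $\sum_n n^p \|\Psi(n)\|^2$. The reduction step is essentially what the paper does. Where you part ways is in the eigenfunction lower bound, and there your route runs into a genuine obstacle that the paper avoids entirely.

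You propose to establish $\|\Psi(n)\| \gtrsim n^{-\gamma(E)-\eta}$ by proving sharp \emph{two-sided} asymptotics for the decaying Pr\"ufer branch $R_n^-(E)$, i.e. a matching polynomial lower bound for the subordinate solution, uniformly on $I$. As you yourself note, this requires controlling the martingale fluctuations of $\log R_n^-$ from below with essentially optimal precision and transferring the estimate uniformly across $E\in I$ — a substantial piece of work, and not something the existing Section~\ref{sec:asymptotics-TM} machinery gives you directly (Theorem~\ref{thm:lyapunov-exponents} and Proposition~\ref{thm:decay-eigenfunctions} give the \emph{rate} of decay, not a uniform-in-$(n,E)$ two-sided bound). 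This is exactly the "main obstacle" you flag, and it is a real gap in the proposal.

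The paper's Lemma~\ref{thm:lower-bound-eigenfunctions} sidesteps this entirely with a much softer argument that you missed: the transfer matrix $\transn$ has determinant $1$, so $\|\prodtransnm\|=\|\prodtransnm^{-1}\|$, and therefore any nontrivial solution $\Phi_n=\prodtransnm\widehat\vartheta_0$ satisfies the trivial lower bound
\begin{equation*}
\|\Phi_n\| \;\ge\; \|\prodtransnm\|^{-1}.
\end{equation*}
Thus a \emph{lower} bound on eigenfunctions is equivalent to an \emph{upper} bound on the transfer-matrix norm, which is cheap: one bounds $\esp[R_n^2]\le \prod_{j\le n}(1+b/j)\le Cn^b$ (the same conditioning-and-iterating computation already used in the super-critical a.c.\ proof), applies Chebyshev with any $\kappa$ satisfying $b-2\kappa<-1$, and concludes by Borel--Cantelli that $\|\prodtransnm(E)\|\le c_\omega^{-1}n^{\kappa}$ for all $n$, hence $\|\Phi_{\omega,E,n}\|\ge c_\omega n^{-\kappa}$. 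No subordinate solution, no sharp Lyapunov asymptotics, no law-of-iterated-logarithm control, and the uniformity in $E\in I$ enters only through the uniform constant in the moment bound. The resulting $p_0$ is cruder than your proposed $\sup_E(2\gamma(E)-1)$, but the theorem only asserts existence of some $p_0$.

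Two smaller remarks. First, in the dephasing step you invoke simplicity of the point spectrum; the paper phrases the Ces\`aro limit as $\sum_l|a_l|^2\||\X|^{p/2}\varphi_l\|^2$ after a dominated-convergence argument, which likewise implicitly uses that distinct indices carry distinct eigenvalues, so no discrepancy there — but simplicity is indeed needed and should be stated (it follows from cyclicity of the boundary vectors on the half-line, as you say). Second, the paper works with a truncated position operator $\X_N$ and extracts a sequence of times along which the moment is large, which is a cleaner way to make the dominated-convergence exchange legitimate; your version applied directly to $|\X|^{p/2}$ would need some care since the integrand is not a priori integrable.

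In summary: the framing is right, but the technical core of your proof — the matching lower bound on the decaying Pr\"ufer radius — is missing and is substantially harder than needed. The missing idea is the unimodularity trick, which replaces it with an upper moment bound that you essentially already have.
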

The work \cite{GKT} gives precise quantitative lower bounds on the moments for the one-dimensional Anderson model with a variety of potentials, including the critically decaying random case. This information is missing in the above theorem which proof is nonetheless elementary and robust. General lower bounds in the spirit of \cite{GKT} for one-dimensional Dirac operators were obtained in \cite{dO-LB}. It is likely that our analysis could be used as an input for their method to obtain bounds on the transport exponents.

To characterize the dynamical localization, we define the eigenfunction correlator 
\begin{equation}\label{correlator}
	Q_\omega(u,\sigma;n,\sigma';I)
	=
	\sup_{\substack{f\in \mathcal{C}_{c,+}^\infty\\ \norm{f}_\infty\le1}}
	 \left| \langle \delta^{\sigma}_u , P_I(\D)f(\D) \delta^{\sigma'}_n \rangle \right|,
\end{equation}
for $u,n \in \n^*$ and $\sigma,\sigma'\in\{+,-\}$.

\begin{definition}
 We say that $\D$ exhibits dynamical localization in an interval $I\subset\R$ if we have
 \begin{equation}\label{eq:DL}
  \sum_{n,\sigma'}\esp\left[Q_\omega(u,\sigma;n,\sigma';I)^2\right]<\infty,
 \end{equation}
for all $u\in\n^*$ and $\sigma\in\{+,-\}$.
\end{definition}
Next, we state the additional hypothesis needed for the proof of dynamical localization:
\begin{enumerate}
	\item[\textbf{(A5)}] There exist $C>0$ and $\varepsilon>0$ such that
				\begin{equation*}
					\esp[|V_{\omega,i}(n)|^3] \leq C n^{-(2\alpha + \varepsilon)}.
				\end{equation*}
	
	\item[\textbf{(A6)}] There exist $p>1$, $\gamma\geq 0$ and $C>0$ such that 
			\begin{enumerate}
				\item[\textbf{(A6a)}] For each $n\in\n^*$ and $i=1,2$, the random variable $V_{\omega,i}(n)$ admits a 		density $\rho_{n,i}$ so that
					\begin{equation*}
						\int_{\R} \rho_{n,i}(y)^p dy \leq C \lambda^{-\gamma} a_n^{-\gamma}.
					\end{equation*}
				\item[\textbf{(A6b)}] For all $n\in\n^*$ and $i=1,2$,
					\begin{equation*}
						\int_{\R} |y|^s \
						\rho_{n,i}(y)^p dy \leq C \lambda^{-\gamma} a_n^{-\gamma},
						\quad\
						\text{for all}\ \, 0\leq s<\frac{p-1}{p},
					\end{equation*}
					and
					\begin{equation*}
						\int_{\R} (1+|y|)^{-s} \rho_{n,i}(y)^p dy \geq C \lambda^{\gamma} a_n^{\gamma},
						\quad\ 
						\text{for all}\ \, 0\leq s<1.
					\end{equation*}
			\end{enumerate}
\end{enumerate}
By density, we refer to a non-negative function $\rho_{n,i} \in L^1(\R)$ such that
\begin{equation*}
	\p[V_{\omega,i}(n)\in A] = \int_A \rho_{n,i}(y)\, dy,
\end{equation*}
for all Borel set $A\subset \R$.
Notice that we do not assume $\rho_{n,i}$ to be bounded. Note that \textbf{(A5)} implies \textbf{(A4)} by a standard Borel-Cantelli argument.

The next theorem contains our result on dynamical localization in the \textit{sub-critical} regime.
\begin{theorem}\label{thm:DL}
	Let $0<\alpha<\frac12$ and $\lambda> 0$. Assume \textbf{(A1)}-\textbf{(A3a)}, \textbf{(A5)} and \textbf{(A6)}. Then, for each $u\in\n^*$ and each compact energy interval $I\subset \mathring\Sigma$, there exists constants $C=C(u,I)>0$, $c(u,I)>0$ and $\kappa=\kappa(I)>0$ such that
	\begin{equation}\label{eq:FM-bound}
		\esp[Q_\omega(u,\sigma;n,\sigma';I)^2] \leq C (\lambda a_n)^{-\kappa} \e^{-cn^{1-2\alpha}},
	\end{equation}
	for all $n\in \n^*$ and $\sigma,\sigma'\in\{+, -\}$.
	 In particular, $\D$ almost surely exhibits dynamical localization in the interval $I$.
\end{theorem}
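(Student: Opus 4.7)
The plan is to implement the fractional moments method of \cite{AM}, tailored to the decaying setting via the Pr\"ufer analysis of Section \ref{sec:Prufer}. First, using spectral averaging against smooth approximations of $\chi_I$ together with the Plancherel formula, one reduces the bound \eqref{eq:FM-bound} to a bound on fractional moments of the Green's function boundary values: for some small $s\in(0,1)$ to be fixed,
\[
\esp\bigl[Q_\omega(u,\sigma;n,\sigma';I)^2\bigr] \leq C \int_I \esp\bigl[|G_\omega(u,\sigma;n,\sigma';E+i0)|^s\bigr]\, dE.
\]
This is a by-now standard reduction going back to \cite{AM} and carries over to the Dirac setting with minor adaptations since $\D$ is a finite-range operator on $\h$.

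Next, on the half-line the Green's function can be written explicitly in terms of the Dirichlet and $\ell^2$-at-infinity solutions of $(\D - E)\Phi = 0$. Expressing both solutions via the transfer matrices $\prodtransn(E)$ and using Wronskian identities yields, for $n > u$, a pointwise bound of the form
\[
|G_\omega(u,\sigma;n,\sigma';E+i0)| \leq C_u(\omega, E)\, \|\prodtransn(E)\|^{-1},
\]
where $C_u$ depends polynomially on norms of the transfer matrices up to site $u$ and has finite $2s$-moment bounded uniformly in $E\in I$ by a constant depending only on $u$ (using Cauchy--Schwarz to separate the two factors). The core of the proof is then the fractional moment bound on the transfer matrix at site $n$: for $s>0$ sufficiently small,
\[
\sup_{E\in I} \esp\bigl[\|\prodtransn(E)\|^{-s}\bigr] \leq C (\lambda a_n)^{-\kappa} e^{-c\, n^{1-2\alpha}}.
\]
Using the Pr\"ufer transform of Section \ref{sec:the-transform}, $\log\|\prodtransn(E)\|$ decomposes as a telescopic sum of one-step contributions $\Delta_k$ depending only on $V_{\omega,1}(k), V_{\omega,2}(k)$ and on the Pr\"ufer phase $\theta_{k-1}$; by construction, the disorder enters $\Delta_k$ only through linear and quadratic terms modulo an $O((\lambda a_k)^3)$ error, which is controlled by hypothesis \textbf{(A5)}. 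Conditioning on $\mathcal{F}_{k-1}$, integrating $V_{\omega,i}(k)$ against its density via \textbf{(A6a)} and exploiting the second-moment identity \textbf{(A3a)}, one shows that the conditional fractional moment contracts by $1 - cs\lambda^2 a_k^2 + O(a_k^{2+\eta})$ at each step. Iterating from $k=n$ down to $k=1$ and using $\sum_{k=1}^n a_k^2 \asymp n^{1-2\alpha}$ produces the claimed exponential factor, while the polynomial prefactor $(\lambda a_n)^{-\kappa}$ absorbs a single boundary integration where the lower bound in \textbf{(A6b)} must be invoked to control the inverse density near the origin.

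Combining these steps and integrating over the compact interval $I$ yields \eqref{eq:FM-bound}. Since $\alpha < \frac12$ gives $n^{1-2\alpha}\to\infty$, the right-hand side is summable in $n$ even with the polynomial factor $(\lambda a_n)^{-\kappa}$ in front, so \eqref{eq:DL} follows by summing over $n$ and $\sigma'\in\{+,-\}$, yielding dynamical localization in $I$. The main obstacle is the sharp per-step contraction estimate for the fractional moment: unlike in the uniform-disorder case treated in \cite{AM}, where the one-step contraction is of order $1$ and is essentially automatic from the non-degeneracy of the disorder, here it shrinks to zero at rate $a_k^2$ and must be extracted with the correct constant from an integration against a possibly unbounded density. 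This is precisely the place where the new Pr\"ufer transform pays off, as its factorization of the disorder into linear and quadratic terms allows integrating out each $V_{\omega,i}(k)$ in essentially closed form and keeping track of the correct exponential rate without losing more than a polynomial factor in $\lambda a_n$.
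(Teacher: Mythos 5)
Your overall architecture is the right one — reduce to fractional moments of the Green's function, then to negative fractional moments of transfer matrices, then extract stretched-exponential decay from the decaying Lyapunov drift $\sum_k a_k^2 \asymp n^{1-2\alpha}$ — and you correctly identify that $\textbf{(A5)}$ truncates the expansion and $\textbf{(A6)}$ handles the integration against unbounded densities. The reduction to Green's function moments (the paper does it via finite-volume eigenfunction correlators, $s$-correlators, and the Aizenman–Warzel spectral averaging identity, rather than your smooth $\chi_I$/Plancherel route, but these are essentially interchangeable) and the polynomial prefactor bookkeeping via $\textbf{(A6b)}$ are fine.

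The serious gap is in your claimed \emph{per-site} contraction. You assert that conditioning on $\mathcal{F}_{k-1}$ and integrating out $V_{\omega,i}(k)$ yields a conditional fractional-moment contraction of the form $1 - cs\lambda^2 a_k^2 + O(a_k^{2+\eta})$ at \emph{every} step $k$. This cannot be true as stated. Looking at the Pr\"ufer recursion \eqref{eq:prufer-transform-s1-radii}, the conditional expected increment of $\log R_k^2$ given $\mathcal{F}_{k-1}$ is
\[
\frac{p_2^2 a_k^2}{\sin^2(2k)}\Bigl(\tfrac14 + \tfrac12\cos(2\bar\theta_{k}) - \tfrac14\cos(4\bar\theta_{k})\Bigr) + \frac{p_1^2 a_k^2}{\sin^2(2k)}\Bigl(\tfrac14 + \tfrac12\cos(2(\bar\theta_{k}-k)) - \tfrac14\cos(4(\bar\theta_{k}-k))\Bigr) + O(a_k^{2+\eta}),
\]
and the oscillatory pieces $\tfrac12\cos(2\bar\theta)-\tfrac14\cos(4\bar\theta)$ range over $[-\tfrac34,\tfrac34]$, which strictly dominates the constant $\tfrac14$. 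For Pr\"ufer phases near a bad angle the conditional drift is negative, so the conditional fractional moment at that step \emph{expands} rather than contracts; no per-site contraction with a deterministic rate $1 - cs\lambda^2 a_k^2$ is available. This is precisely why the paper does not condition site by site. Instead (Lemma \ref{thm:first-bound-Tmn} and Lemma \ref{thm:key-lemma}) it first chooses a fixed block length $n_0 = n_0(I)$ large enough that, averaged over an $n_0$-block, the phase oscillations cancel — this is the content of the equidistribution estimate Lemma \ref{thm:control-phases-integrated}, which hinges on the Diophantine-type trigonometric bound from \cite{KLS} and the almost-sure control $|\theta_{j+1}-\theta_j+2k|\lesssim j^{-2\alpha/3}$ from \textbf{(A4)}/\textbf{(A5)} — and only then does the conditional contraction $1 - c_4/l^{2\alpha}$ per block follow. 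Your scheme, iterated from $k=n$ down to $k=1$, would produce a product some of whose factors exceed $1$; without the blocking and the phase-averaging lemma you have no control over how often that happens, so the claimed $e^{-cn^{1-2\alpha}}$ decay does not follow. You need to insert a blocking step and an input like Corollary \ref{thm:integrated-lyapunov} that shows the drift is positive once averaged over blocks of a fixed, $I$-dependent, length.
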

\begin{remark}
	We follow the fractional moments method \cite{AM,AW}.
	The analogue of Theorem \ref{thm:DL} for the Anderson model in a sub-critical decaying potential was obtained in \cite{Si82} by means of the Kunz-Souillard method \cite{KS} (see also \cite{DG, Bu1, Bu2}). It is likely that this method can be adapted to our setting. However, its application in \cite{Si82} requires to assume that the random variables admit a bounded density (which may grow with $n$). This hypothesis is not needed here. Instead, we assume \textbf{(A5)} which implies some regularity on their laws. 
	Our proof can be easily adapted to the Anderson model with, in fact, some simplifications.
\end{remark}
\begin{remark}
	With minor modifications, we may merely assume that $a_n \geq cn^{-\alpha}$ for some $\alpha\in(0,\frac12)$.
\end{remark}
Although the lack of ergodicity of the model induces the dependence of \eqref{eq:FM-bound}  on the base site $u$, it can be shown that the bound \eqref{eq:DL} still implies pure point spectrum and finiteness of the moments. In particular, it implies Part 3 in Theorem \ref{thm:spectrum} paying the price of hypothesis \textbf{(A5)} and \textbf{(A6)}.
\begin{proposition}\label{thm:consequences-DL}
	Assume that dynamical localization for $\D$ holds in the sense of \eqref{eq:DL} in an energy interval $I\subset\R$. Then the following holds
	\begin{enumerate}
		\item[(1)] The spectrum of $\D$ is almost surely pure point in $I$.
		
		\item[(2)] For all $p>0$,
			\begin{equation*}
				\esp\left(\sup_{t\in\R}
					\norm{
						|\X|^{\frac{p}{2}}\e^{-it\D}P_I(\D)\psi_0}^2
				\right)<\infty,
			\end{equation*}
			for all $\psi_0 \in\h$ with bounded support. 
	\end{enumerate}
\end{proposition}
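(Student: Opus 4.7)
The plan is to deduce both conclusions from the pointwise correlator bound
\[
	\sup_{t\in\R}\big|\langle \delta^\sigma_n, \e^{-it\D}P_I(\D)\delta^{\sigma'}_m\rangle\big|
	\;\le\; 4\,Q_\omega(n,\sigma;m,\sigma';I),
\]
which I would obtain by writing $\e^{-itx}\chi_I(x)$ as a complex linear combination of four non-negative bounded Borel functions supported in a neighbourhood of $I$, approximating each in the $L^2$-sense against the spectral measures of $\D$ by elements of $\mathcal{C}_{c,+}^\infty$, and using the self-adjointness of $P_I(\D)f(\D)$ for real $f$ to get the symmetry $Q_\omega(n,\sigma;m,\sigma';I)=Q_\omega(m,\sigma';n,\sigma;I)$.

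For Part (1), the strategy is the orbit-tightness characterization of pure point spectrum. For $\psi_0=\delta^{\sigma'}_m$, the correlator bound gives
\[
	\sup_{t\in\R}\sum_{n>N,\,\sigma}\big|\langle \delta^\sigma_n,\e^{-it\D}P_I(\D)\psi_0\rangle\big|^2
	\;\le\; 16\sum_{n>N,\,\sigma}Q_\omega(m,\sigma';n,\sigma;I)^2,
\]
and by \eqref{eq:DL} the right-hand side is $\p$-almost surely the tail of a convergent series, hence vanishes as $N\to\infty$. This tightness in $\h$ of the orbit $\{\e^{-it\D}P_I(\D)\psi_0\}_{t\in\R}$ is equivalent to its precompactness, and the RAGE theorem then places $P_I(\D)\psi_0$ almost surely in the pure point subspace of $\D$. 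Since the $\delta^{\sigma'}_m$'s span $\h$, the spectrum of $\D$ in $I$ is almost surely pure point.

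For Part (2), linearity reduces the problem to $\psi_0=\delta^{\sigma'}_m$, and combining the correlator bound with Fubini yields
\[
	\esp\Big[\sup_{t\in\R}\norm{|\X|^{p/2}\e^{-it\D}P_I(\D)\psi_0}^2\Big]
	\;\le\; 16\sum_{n,\sigma}n^p\,\esp\big[Q_\omega(m,\sigma';n,\sigma;I)^2\big].
\]
The main obstacle is that \eqref{eq:DL} only controls the unweighted sum, whereas arbitrary polynomial weights $n^p$ are needed; in the setting of this paper this is exactly bridged by the quantitative estimate \eqref{eq:FM-bound} of Theorem~\ref{thm:DL}, which upgrades the bare summability of $\esp[Q_\omega^2]$ to a stretched-exponential bound $C(\lambda a_n)^{-\kappa}\e^{-cn^{1-2\alpha}}$ that dominates every polynomial. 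A final linear extension to arbitrary $\psi_0$ of bounded support then gives conclusion (2), together with (1) completing the proof.
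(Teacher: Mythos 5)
Your proof follows the same route as the paper's: for Part (1), bound the tail $\sup_t\norm{(1-\chi_R)\e^{-it\D}P_I(\D)\delta^{\sigma'}_m}^2$ by a tail sum of squared correlators, invoke \eqref{eq:DL} and Fatou to show it vanishes almost surely, and conclude via the RAGE theorem; for Part (2), reduce to finitely supported $\psi_0$ and bound the supremum over $t$ by the polynomially weighted correlator sum. Two details of yours are worth noting. First, you are more careful than the paper about the elementary step of bounding $\sup_t\big|\langle\delta^\sigma_n,\e^{-it\D}P_I(\D)\delta^{\sigma'}_m\rangle\big|$ in terms of $Q_\omega$: since the correlator \eqref{correlator} is defined by a supremum over non-negative $f\in\mathcal{C}_{c,+}^\infty$, one does need to decompose $\e^{-itx}$ into a combination of such pieces, and your explicit constant $4$ makes this step precise where the paper leaves it implicit. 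Second, you correctly observe that for Part (2) the hypothesis \eqref{eq:DL} alone, i.e.\ bare summability of $\esp[Q_\omega^2]$, does not control the weighted sums $\sum_{n,\sigma'} n^p\,\esp[Q_\omega^2]$ for all $p>0$; the paper's own proof of Part (2) in fact invokes the quantitative estimate \eqref{eq:FM-bound} from Theorem~\ref{thm:DL}, which holds under the stronger hypotheses \textbf{(A1)}--\textbf{(A3a)}, \textbf{(A5)}, \textbf{(A6)}. Your remark makes explicit a mismatch between the proposition's stated assumption (only \eqref{eq:DL}) and what its proof of Part (2) actually uses; with that caveat, your argument is sound and matches the paper's.
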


Our analysis provides a control on the eigenfunctions of $\D$. Let $0<\alpha<\frac12$ and denote by $\Phi_{\omega,E}=(\Phi_{\omega,E,n})_n$ the eigenfunction of $\D$ corresponding to the eigenvalue $E$. In Proposition \ref{thm:asymptotics-eigenfunctions-pp}, we follow \cite[Theorem 8.6]{KLS} to show that
\begin{equation*}
	\lim_{n\to \infty} \frac{1}{n^{1-2\alpha}} \log \| \Phi_{\omega,E,n}\| = -\beta(E,\lambda), \quad \p-\text{a.s.},
\end{equation*}
for almost every fixed $E\in\mathring\Sigma$. In particular, this shows that for almost every $E\in\mathring\Sigma$, $\p$-almost surely, there exists a constant $C_{\omega,E}$ such that
\begin{equation*}
	\| \Phi_{\omega,E,n}\| \leq C_{\omega,E}\ \e^{-\beta(E,\lambda)n^{1-2\alpha}}.
\end{equation*}
It is known that certain types of decay of eigenfunctions are closely related to dynamical localization \cite{DeRJLS1,DeRJLS2,GT}. Such criteria usually require a control on the localization centres of the eigenfunctions, uniformly in energy intervals. This information is missing in the above bound. We provide this uniform control in the next proposition.
\begin{proposition} \label{thm:SULE}
	Let $0<\alpha<\frac12$ and $\lambda> 0$. Under the hypothesis of Theorem \ref{thm:DL},
	for all compact energy interval $I\subset \mathring\Sigma$, there exists two deterministic constants $c_1=c_1(I),\, c_2=c_2(I)$ and almost surely finite random quantities $c_{\omega}=c_{\omega}(I),\, C_{\omega}=C_{\omega}(I)$ such that
	\begin{eqnarray}\label{eq:SULE}
		c_{\omega} \, e^{-c_1 n^{1-2\alpha}} 
		\leq 
		\| \Phi_{\omega,E,n}\| 
		\leq 
		C_{\omega} e^{-c_2 n^{1-2\alpha}}, \quad \p-\text{a.s.},
	\end{eqnarray}
	for all $E \in I\cap\sigma(\D)$ and all $n\in\n^*$.
\end{proposition}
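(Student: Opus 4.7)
The plan is to split \eqref{eq:SULE} into an upper and a lower bound, the former deduced from the fractional moment estimate \eqref{eq:FM-bound} of Theorem \ref{thm:DL}, the latter from the deterministic bounds on transfer matrix norms produced by the Pr\"ufer analysis of Sections \ref{sec:the-transform}--\ref{sec:asymptotics-TM}. Throughout, I fix a compact interval $I\subset\mathring\Sigma$ and normalize each eigenfunction by $\|\Phi_{\omega,E}(1)\|=1$; this is possible since the Dirichlet boundary condition $\varphi^+_0=0$ combined with the eigenvalue equation forces a linear relation between the two components of $\Phi_{\omega,E}(1)$ and precludes $\Phi_{\omega,E}(1)=0$, as otherwise iteration through the transfer matrix $T_{\omega,E,n}$ would yield $\Phi_{\omega,E}\equiv 0$.

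For the upper bound I apply Chebyshev's inequality to \eqref{eq:FM-bound}: for any $0<c_2<c/2$,
\begin{equation*}
\p\bigl(Q_\omega(1,\sigma;n,\sigma';I)\geq e^{-c_2 n^{1-2\alpha}}\bigr)\leq C\, n^{\alpha\kappa}\,e^{-(c-2c_2)n^{1-2\alpha}},
\end{equation*}
which is summable in $n$ since $\alpha<\tfrac12$. Borel-Cantelli then produces an almost surely finite $C_\omega=C_\omega(I)$ such that $Q_\omega(1,\sigma;n,\sigma';I)\leq C_\omega\, e^{-c_2 n^{1-2\alpha}}$ for every $n\in\n^*$ and every $\sigma,\sigma'\in\{+,-\}$. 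Approximating the characteristic function of $\{E\}$ by a sequence of $\mathcal{C}^\infty_{c,+}$ bumps supported in $I$ and concentrating at $E$ then yields the standard correlator-eigenprojection inequality
\begin{equation*}
|\Phi_{\omega,E}(1,\sigma)|\,|\Phi_{\omega,E}(n,\sigma')|\leq Q_\omega(1,\sigma;n,\sigma';I),
\end{equation*}
and under our normalization one of the signs $\sigma\in\{+,-\}$ satisfies $|\Phi_{\omega,E}(1,\sigma)|\geq 1/\sqrt{2}$. Summing the resulting bound over $\sigma'\in\{+,-\}$ yields the upper bound in \eqref{eq:SULE}.

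For the lower bound I use the transfer matrix formulation. Iterating the eigenvalue equation gives $\Phi_{\omega,E}(n)=\mathbf{T}_{\omega,E,n-1}\Phi_{\omega,E}(1)$, where each one-step transfer matrix $T_{\omega,E,k}$ has unit determinant (as follows by direct computation from $\D\Phi=E\Phi$). Hence $\det\mathbf{T}_{\omega,E,n-1}=1$, and for a $2\times 2$ matrix $M$ of unit determinant one has $\|M^{-1}\|=\|M\|$, so that
\begin{equation*}
\|\Phi_{\omega,E}(n)\|\geq\frac{\|\Phi_{\omega,E}(1)\|}{\|\mathbf{T}_{\omega,E,n-1}\|}=\frac{1}{\|\mathbf{T}_{\omega,E,n-1}\|}.
\end{equation*}
The Pr\"ufer analysis of Section \ref{sec:asymptotics-TM} provides, in the sub-critical regime $\alpha<\tfrac12$, the almost sure bound $\|\mathbf{T}_{\omega,E,n-1}\|\leq C_\omega\, e^{c_1 n^{1-2\alpha}}$ for some $C_\omega=C_\omega(I)$ almost surely finite, which gives the lower bound in \eqref{eq:SULE}.

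The main obstacle is obtaining all of these bounds \emph{uniformly} over the random countable set $E\in I\cap\sigma(\D)$. For the correlator bound the uniformity is automatic, since the supremum defining $Q_\omega(\cdot,\cdot;\cdot,\cdot;I)$ already absorbs the choice of $E\in I$ through the test function $f\in\mathcal{C}^\infty_{c,+}$. For the transfer matrix bound one needs a version of the Pr\"ufer asymptotics of Section \ref{sec:asymptotics-TM} which is uniform in $E\in I$; this will be obtained by exploiting the joint continuity in $E$ of the Pr\"ufer radii and angles, combined with a standard covering argument on the compact interval $I$, in the spirit of \cite[Theorem~8.6]{KLS}.
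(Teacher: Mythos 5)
Your overall division into an upper bound (via correlators) and a lower bound (via transfer matrices) matches the paper's plan, but both halves contain genuine gaps.

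For the lower bound, you invoke "the almost sure bound $\|\mathbf{T}_{\omega,E,n-1}\|\leq C_\omega\, e^{c_1 n^{1-2\alpha}}$ with $C_\omega=C_\omega(I)$ almost surely finite" as coming from the Pr\"ufer analysis of Section~\ref{sec:asymptotics-TM}. No such result is established there. Theorem~\ref{thm:lyapunov-exponents} yields, for each \emph{fixed} $E$, a full-measure set $\Omega_E$ on which $\log\|\prodtransn(E)\| / \sum_{j\leq n}j^{-2\alpha}\to\beta(E,\lambda)$; the set $\Omega_E$ depends on $E$, so this gives neither a uniform bound over the uncountable interval $I$ nor even over the $\omega$-dependent set $I\cap\sigma(\D)$, whose elements are correlated with the transfer matrices. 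Lemma~\ref{thm:bounds-on-Tmn} gives moment estimates uniform in $E$, but those are expectations, not almost sure bounds. The paper circumvents exactly this by never claiming a deterministic upper bound on $\|\prodtransn(E)\|$: in Lemma~\ref{thm:lower-bound-eigenfunctions} it runs Chebyshev plus Borel--Cantelli, estimating $\p\bigl[\|\Phi_{\omega,E,n}\|\leq e^{-c_1n^{1-2\alpha}}\bigr]\leq e^{-2c_1n^{1-2\alpha}}\,\esp[\|\prodtransnm(E)\|^2]$ and using the uniform-in-$E$ bound $\esp[R_n^2]\leq\prod_j(1+bj^{-2\alpha})\leq Ce^{b'n^{1-2\alpha}}$. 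Your proposed covering argument would require controlling the modulus of continuity of $E\mapsto\|\prodtransn(E)\|$, which itself grows like $\|\prodtransn(E)\|$, so the mesh of the covering must shrink exponentially in $n$; the plan as stated gives no mechanism to sum the resulting error over all $n$, and is not a substitute for the paper's probabilistic argument.

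For the upper bound, the inequality $|\Phi_{\omega,E}(1,\sigma)|\,|\Phi_{\omega,E}(n,\sigma')|\leq Q_\omega(1,\sigma;n,\sigma';I)$ is misstated. Approximating $\chi_{\{E\}}$ produces the rank-one spectral projection $P_{\{E\}}(\D)=\|\Phi_{\omega,E}\|_{\ell^2}^{-2}\,|\Phi_{\omega,E}\rangle\langle\Phi_{\omega,E}|$, so the correlator actually bounds $\|\Phi_{\omega,E}\|_{\ell^2}^{-2}\,|\Phi_{\omega,E}(1,\sigma)|\,|\Phi_{\omega,E}(n,\sigma')|$. With your normalization $\|\Phi_{\omega,E}(1)\|=1$ one has $\|\Phi_{\omega,E}\|_{\ell^2}\geq 1$, so the missing factor works against you. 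Equivalently, writing $\tilde\Phi$ for the $\ell^2$-normalized eigenfunction, the correct conclusion is $\|\tilde\Phi(1)\|\,\|\tilde\Phi(n)\|\lesssim Q_\omega(1;n)$, and to extract $\|\tilde\Phi(n)\|\lesssim Q_\omega(1;n)$ one still needs a lower bound on $\|\tilde\Phi(1)\|$ uniformly over $E\in I\cap\sigma(\D)$ — that is, precisely the statement that all localization centers sit at $n=1$, which is what Proposition~\ref{thm:SULE} is designed to establish and must therefore be supplied, not assumed.
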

 This asymptotics, although less precise about the exact rate of decay, is uniform in energy intervals. The upper bound \eqref{eq:SULE} can be seen as a stretched form of the condition SULE  where the localization centres are all equal to $0$ (see \cite{DeRJLS1}, equation (2)). 

The lower bound above allows us to characterize the `strength' of the localization according to the next theorem.
\begin{theorem}\label{thm:lower-bound-DL}
	Let $0<\alpha<\frac12$ and $\lambda> 0$.
	Let $I\subset \mathring\Sigma$. Under the hypothesis of Theorem \ref{thm:DL}, we have
		\begin{equation*}
				\esp\left(\sup_{t\in\R}
					\norm{\e^{|\X|^\kappa}
						 \e^{-it\D}P_I(\D)\psi_0}^2
				\right)<\infty,
			\end{equation*}
	for all $\kappa < 1-2\alpha$ and all $\psi_0 \in\h$ with bounded support, while
	\begin{eqnarray*}
		\limsup_{t\to \infty} \left\| \e^{|\X|^{\kappa}} \e^{-it \D} \psi\right\|^2 = \infty,
		\quad
		\p-a.s.,
	\end{eqnarray*}
	for all $\kappa>1-2\alpha$ and $\psi \in {\text Ran}P_I(\D)$.
\end{theorem}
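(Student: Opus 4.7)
The strategy is to prove the two halves of the statement separately, each driven by one of the main outputs of the sub-critical analysis: the eigenfunction-correlator bound \eqref{eq:FM-bound} of Theorem \ref{thm:DL} drives the upper (finite-moment) half, while the pointwise lower bound of Proposition \ref{thm:SULE} drives the lower (divergent) half.

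For the upper bound ($\kappa<1-2\alpha$), by linearity it suffices to take $\psi_0=\delta^\sigma_u$. Decomposing $\e^{-itx}\chi_I(x)$ into its four non-negative real pieces and approximating each by elements of $\mathcal{C}^{\infty}_{c,+}$ yields the uniform estimate $\sup_t|\langle\delta^{\sigma'}_n,\e^{-it\D}P_I(\D)\delta^\sigma_u\rangle|\leq 4\,Q_\omega(u,\sigma;n,\sigma';I)$, and hence
\[
\sup_{t\in\R}\bigl\|\e^{|\X|^\kappa}\e^{-it\D}P_I(\D)\delta^\sigma_u\bigr\|^{2}\leq 16\sum_{n,\sigma'} e^{2n^\kappa}\,Q_\omega(u,\sigma;n,\sigma';I)^2.
\]
Taking expectation and invoking \eqref{eq:FM-bound} produces a bound of the form $C\sum_n e^{2n^\kappa}(\lambda a_n)^{-\kappa_0}\e^{-cn^{1-2\alpha}}$, which is summable as soon as $\kappa<1-2\alpha$, since the exponential $\e^{-cn^{1-2\alpha}}$ then dominates both $e^{2n^\kappa}$ and the polynomial factor $(\lambda a_n)^{-\kappa_0}$.

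For the divergence ($\kappa>1-2\alpha$), fix $\psi\in\mathrm{Ran}\,P_I(\D)\setminus\{0\}$. Since the spectrum of $\D$ in $I$ is almost surely pure point (Proposition \ref{thm:consequences-DL}) and half-line eigenvalues are simple, expand $\psi=\sum_j c_j\Phi_j$ in an orthonormal eigenbasis with eigenvalues $E_j\in I$ and pick $j_0$ with $c_{j_0}\neq 0$. Arguing by contradiction, assume $\limsup_{t\to\infty}\|\e^{|\X|^\kappa}\e^{-it\D}\psi\|^2<\infty$, so there exist $T_0,L$ with $\|\e^{|\X|^\kappa}\e^{-it\D}\psi\|^2\leq L$ for all $t\geq T_0$. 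The shifted Cesàro average
\[
\psi_T:=\frac{1}{T}\int_{T_0}^{T_0+T}\e^{itE_{j_0}}\e^{-it\D}\psi\,dt
\]
satisfies $\|\e^{|\X|^\kappa}\psi_T\|\leq L^{1/2}$ by the triangle and Jensen inequalities, while by the spectral theorem $\psi_T\to \e^{iT_0E_{j_0}}c_{j_0}\Phi_{j_0}$ strongly in $\h$, hence componentwise. Fatou's lemma applied to the defining sum of the weighted $\ell^2$ norm then gives $|c_{j_0}|^2\|\e^{|\X|^\kappa}\Phi_{j_0}\|^2\leq L$. But Proposition \ref{thm:SULE} forces
\[
\|\e^{|\X|^\kappa}\Phi_{j_0}\|^2\geq c_\omega^2\sum_n \e^{2n^\kappa-2c_1n^{1-2\alpha}}=+\infty
\]
when $\kappa>1-2\alpha$, a contradiction.

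The only delicate point is handling generic $\psi$ rather than a single eigenfunction: \emph{a priori} cancellations in $\sum_j c_j\Phi_j$ could destroy the pointwise Proposition \ref{thm:SULE} bound when passing from individual eigenfunctions to a linear combination. The Cesàro average is precisely the device that filters out one spectral component at a time and thereby transports the single-eigenfunction lower bound to the whole spectral subspace; the simplicity of eigenvalues ensures that a single $\Phi_{j_0}$ is singled out. For $\psi=\Phi_j$ the argument collapses since $\|\e^{|\X|^\kappa}\e^{-it\D}\Phi_j\|=\|\e^{|\X|^\kappa}\Phi_j\|$ is $t$-independent and infinite.
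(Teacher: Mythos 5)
Your proof is correct; both halves rest on exactly the same two inputs the paper itself invokes (the correlator bound of Theorem~\ref{thm:DL} and the lower bound of Proposition~\ref{thm:SULE}), but your argument for the divergence half takes a genuinely different route. The paper disposes of Theorem~\ref{thm:lower-bound-DL} in two lines: the finite-moment half is said to be ``completely analogous'' to the moment estimate in Proposition~\ref{thm:consequences-DL} (which matches your calculation with $|\X|^p$ replaced by $\e^{2|\X|^\kappa}$), and the divergence half is obtained by ``following the strategy of proof of Theorem~\ref{thm:transport} using the lower bound in Proposition~\ref{thm:SULE}.'' That strategy is direct and constructive: one works with the \emph{truncated} position operator $\X_N=\X\chi_{[-N,N]}$ (so all operators are bounded), computes the Ces\`aro limit $\lim_T\frac1T\int_0^T\|\e^{|\X_N|^\kappa}\e^{-it\D}\psi\|^2\,dt=\sum_l|a_l|^2\|\e^{|\X_N|^\kappa}\varphi_l\|^2$ (cross-terms average to zero since distinct eigenvalues), bounds it below by $c_\omega^2\sum_{n\le N}\e^{2n^\kappa-2c_1 n^{1-2\alpha}}\to\infty$, and then extracts a random sequence $(t_N)$ along which the weighted norm diverges. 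Your version instead argues by contradiction, uses a \emph{phase-shifted} Ces\`aro average to filter out a single eigencomponent $\Phi_{j_0}$, and then invokes Fatou against the SULE lower bound. Both are legitimate; the paper's scheme sidesteps all unbounded-operator subtleties (your $\psi_T$ needs Bochner integrability of $t\mapsto\e^{|\X|^\kappa}\e^{-it\D}\psi$, which follows by Pettis' theorem from weak measurability and the norm bound, and the interchange of $\e^{|\X|^\kappa}$ with the integral uses closedness of $\e^{|\X|^\kappa}$ --- worth making explicit), and it also delivers a constructive diverging time sequence rather than a mere existence-by-contradiction; on the other hand your reformulation cleanly localizes the issue to a single eigenfunction and makes the role of eigenvalue simplicity transparent. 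One small slip: the strong limit of your $\psi_T$ is $c_{j_0}\Phi_{j_0}$, not $\e^{iT_0E_{j_0}}c_{j_0}\Phi_{j_0}$ (the $j=j_0$ integrand $\e^{it(E_{j_0}-E_{j_0})}$ is identically $1$), but the contradiction is unaffected.
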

\begin{remark}
	The lower bound in Proposition \ref{thm:SULE} and the second statement inTheorem \ref{thm:lower-bound-DL} will be proved assuming only \textbf{(A1)}-\textbf{(A3a)}.
\end{remark}

\section{Transfer matrices and Pr\"ufer transform}\label{sec:Prufer}

Let $\Phi=\vp^+\otimes \vp^-$
be a solution of the eigenfunction equation $\D\Phi = E\Phi$. 
Then, the coordinates of $\Phi$ solve the system of equations
\begin{eqnarray}
 (m+\vo(n)) \vp^+_n+\vp^-_n-\vp^-_{n+1}
 &=&
 E\vp^+_n\notag
 \\
 \label{syst1}
 \vp^+_n-\vp^+_{n-1}-(m-\vt(n))\vp^-_n
 &=&
 E\vp^-_n.
\end{eqnarray}
Let $p_1(E)=m-E$ and $p_2(E)=m+E$, and $\pno=p_1(E-\vo(n))$ and $\pnt=p_2(E-\vt(n))$. 
Then, the system \eqref{syst1} above can be written in the more compact form
\begin{eqnarray}
\nonumber
 \pno \vp^+_n+\vp^-_n-\vp^-_{n+1}
 &=&
 0
 \\
 \label{eq:eigensystem}
 \vp^+_n-\vp^+_{n-1}-\pnt \vp^-_n
 &=&
 0.
\end{eqnarray}
In the following, we will consider two different indexations of sequences in $(\C^2)^{\n}$ given by
\begin{eqnarray}\label{index}
	\Phi_n 
	=
	\begin{pmatrix}
		\vp_n^+
		\\
		\vp^-_n
	\end{pmatrix},
	\quad
	\Phi_n'
	=
	\begin{pmatrix}
		\vp^-_{n+1}
		\\
		\vp^+_n
	\end{pmatrix}.
\end{eqnarray}
We call these the first and second coordinate system respectively. The reasons to consider these two representations will become apparent in our proof of dynamical localization. 
\newline
In the following, we will assume that $p_1(E)<0$ and $p_2(E)>0$, which means that $E\in(m,\sqrt{m^2+4})$. By symmetry, the behaviour of the system in the other band of the spectrum is completely analogous. Occasionally, we will write the final result of the computations for the complementary case $p_1(E)>0$ and $p_2(E)<0$ to highlight the similarities.

\subsection{Transfer matrices}



\subsubsection{First coordinate system}

Shifting indexes in the second equation in \eqref{eq:eigensystem}, we obtain
\begin{eqnarray*}
 \pno \vp^+_n+\vp^-_n-\vp^-_{n+1}
 &=&
 0
 \\
 \vp^+_{n+1}-\vp^+_{n}-\ppnt\ \vp^-_{n+1}
 &=&
 0
\end{eqnarray*}
which can be written in matrix form as
\begin{eqnarray*}
	\begin{pmatrix}
		0 & 1 \\
		1 & -\ppnt
	\end{pmatrix}
	\Phi_{n+1}
	=
	\begin{pmatrix}
		\pno & 1 \\
		1 & 0
	\end{pmatrix}
	\Phi_n,
\end{eqnarray*}
yielding $\Phi_{n+1} = \transn \Phi_n$, where
\begin{eqnarray}\label{transfer}
	\transn
	=
	\begin{pmatrix}
		\pno \ \ppnt + 1 & \ppnt \\
		\pno & 1
	\end{pmatrix},
\end{eqnarray}
is the transfer matrix. 
Setting the disorder to be zero, we obtain the transfer matrix of the free system. In fact, if $D\Phi=E\Phi$, then $\Phi_{n+1}=T\Phi_n$ with
\begin{eqnarray*}
	T
	=
	T(E)
	=
	\begin{pmatrix}
		p_1p_2+1 & p_2 \\
		p_1 & 1
	\end{pmatrix},
\end{eqnarray*}
where we wrote $p_1=p_1(E)$ and $p_2=p_2(E)$ for simplicity.
Noticing that $\pno = p_1 + \vo(n)$ and $\ppnt = p_2-\vt(n+1)$, we can see that the transfer matrix admits the decomposition
\begin{eqnarray}
	\label{eq:decomposition-transfer-matrices}
	\transn
	&=&
	T + \vo(n) A_1 + \vt(n+1) A_2 +\vo(n)\vt(n+1)A_3,
\end{eqnarray}
with
\begin{eqnarray}\label{A123}
	A_1
	=
	\begin{pmatrix}
		 p_2 & 0\\
		1& 0
	\end{pmatrix},
	\quad
	A_2
	=
	\begin{pmatrix}
		-p_1  & -1\\
		0& 0
	\end{pmatrix},
	\quad
	A_3
	&=&
	\begin{pmatrix}
		-1 & 0\\
		0 & 0
	\end{pmatrix}.
\end{eqnarray}


\subsubsection{Second coordinate system}

The same computations applied directly to the system \eqref{eq:eigensystem} yield the recursion $\Phi'_{n+1} = \ttransn \Phi'_n$ with
\begin{eqnarray}\label{transfer'}
	\ttransn
	=
	\begin{pmatrix}
		\pno\pnt+1 & \pno \\
		\pnt & 1
	\end{pmatrix}.
\end{eqnarray}
Setting the disorder to be $0$, we obtain the transfer matrix for the free system 
\begin{eqnarray*}
	\tT=\tT(E)
	=
	\begin{pmatrix}
		p_1p_2 + 1 & p_1 \\
		p_2 & 1
	\end{pmatrix}.
\end{eqnarray*}
Once again, we can see that the transfer matrix $\ttransn$ admits the decomposition
\begin{eqnarray}\label{decomp'}
	\ttransn
	&=&
	\tT + \vo(n) A'_1 + \vt(n+1) A'_2 +\vo(n)\vt(n+1)A'_3,
\end{eqnarray}
with
\begin{eqnarray}\label{A123'}
	\tA_1
	=
	\begin{pmatrix}
		p_2 & 1\\
		0 & 0
	\end{pmatrix},
	\quad
	\tA_2
	=
	\begin{pmatrix}
		-p_1 & 0\\
		-1 & 0
	\end{pmatrix},
	\quad
	\tA_3
	=
	\begin{pmatrix}
		-1 & 0 \\
		0 & 0
	\end{pmatrix}.
\end{eqnarray}


\subsection{Natural basis}

Our next task consists in finding a suitable coordinate system where the transfer matrices can be written as a perturbation of the identity. These coordinate systems will be obtained by diagonalization of the free transfer matrices $T$ and $\tT$. We detail the computations for the first coordinate system and only write the final results for the second one since the arguments are identical.

\subsubsection{First coordinate system}

Recall that
\begin{eqnarray*}
	T=T(E)
	=
	\begin{pmatrix}
		p_1p_2 + 1 & p_2 \\
		p_1 & 1
	\end{pmatrix}.
\end{eqnarray*}
The starting point of our analysis is the observation that
\begin{eqnarray*}
	T
	=
	\frac{1}{p_1p_2}
	M^2
	\quad 
	\text{with}
	\quad
	M
	=
	\begin{pmatrix}
		p_1p_2 & p_2 \\
		p_1 & 0
	\end{pmatrix}
\end{eqnarray*}
Let us diagonalize $M$. The solutions of the characteric equation $\det (M-xI)=0$ are given by
\begin{eqnarray*}
	x 
	&=& 
	\frac{p_1p_2 \pm \sqrt{p_1^2p_2^2+4p_1p_2}}{2}
	=
	\frac{p_1p_2 \pm \sqrt{p_1p_2(p_1p_2+4)}}{2}.
\end{eqnarray*}
Note that $p_1p_2+4 = m^2+4-E^2 \geq 0$ while $p_1p_2<0$. Hence
\begin{eqnarray*}
	x
	=
	\sqrt{-p_1p_2}
	\left(
		-\frac{\sqrt{-p_1p_2}}{2}
		\pm
		\frac{i}{2}
		\sqrt{p_1p_2 + 4}
	\right)
	=:
	\sqrt{-p_1p_2} \ \e^{\pm ik},
\end{eqnarray*}
where $\cos k = -\frac{\sqrt{-p_1p_2}}{2}$.
This shows that the eigenvalues of $T$ are equal to $-\e^{\pm 2ik}$.
\begin{remark}\label{rk:range-k}
	In particular one has $\cos k <0$. This still leaves us the freedom to choose the sign of $\sin k$. Below, we will need to choose it in such a way that $\sin 2k >0$. Hence, we assume $\sin k < 0$. In other words, we take $k\in(-\pi,-\tfrac{\pi}{2})$. Notice that 
	$$\displaystyle \sin(2k)=\frac12 \sqrt{(E^2-m^2)(m^2+4-E^2)},$$
	 vanishes exactly on the four edges of the spectrum. Moreover, we have $E^2=m^2 + 4\cos^2 k$.
\end{remark}
The coordinates of the eigenvectors of $M$ (and hence of $T$) must satisfy the equation
\begin{eqnarray*}
	p_1p_2a + p_2b &=&\sqrt{-p_1p_2} e^{\pm ik} a
	\\
	p_1 a &=& \sqrt{-p_1p_2}e^{\pm ik} b.
\end{eqnarray*}
Recalling our assumption $p_1<0$ and $p_2>0$, we can choose them as
\begin{eqnarray*}
	v_{\pm}
	=
	\begin{pmatrix}
		-\sqrt{p_2}\ \e^{\pm ik} \\ \sqrt{-p_1}
	\end{pmatrix}.
\end{eqnarray*}
We can now generate our natural basis.
Let $v(1)=v_+$ and $v(n+1)=Tv(n)$. We define
\begin{eqnarray*}
	\ppru_n 
	=
	\begin{pmatrix}
		\Re v(n) & \Im v(n)
\end{pmatrix}.	 
\end{eqnarray*}
Since $v(1)$ is an eigenvector of $T$ with eigenvalue $-\e^{2ik}$, we have
\begin{eqnarray*}
	v(n) = T^{n-1} v(1) = (-1)^{n-1} \e^{2i(n-1)k}v(1)
	=
	(-1)^{n-1}
	\begin{pmatrix}
		-\sqrt{p_2}\ \e^{i(2n-1)k} \\ \sqrt{-p_1}\ \e^{i(2n-2)k}
	\end{pmatrix}.
\end{eqnarray*}
Hence, our matrix of change of basis is given by
\begin{eqnarray}\label{eq:change-of-basis-first-system}
	\ppru_n
	=
	(-1)^{n-1}
	\begin{pmatrix}
		-\sqrt{p_2} \cos((2n-1)k) & -\sqrt{p_2} \sin((2n-1)k)\\
		\sqrt{-p_1} \cos((2n-2)k) & \sqrt{-p_1} \sin((2n-2)k)
	\end{pmatrix}.
\end{eqnarray}
and satisfies $\ppru_{n+1} = T \ppru_n$.

	If $p_1>0$ and $p_2<0$, the eigenvectors of $T$ can be taken as
	\begin{eqnarray*}
	v_{\pm}
	=
	\begin{pmatrix}
		\sqrt{-p_2} \ \e^{\pm ik} \\
		\sqrt{p_1}
	\end{pmatrix},
	\end{eqnarray*}
	yielding the change of basis
	\begin{eqnarray*}
		\qppru_n
		=
		(-1)^{n-1}
		\begin{pmatrix}
			\sqrt{-p_2}\cos((2n-1)k) & \sqrt{-p_2}\cos((2n-1)k)
			\\
			\sqrt{p_1}\cos((2n-2)k) & \sqrt{p_1}\cos((2n-2)k)
		\end{pmatrix},
	\end{eqnarray*}
	which again satisfies $\qppru_{n+1}=T \qppru_n$.



\subsubsection{Second coordinate system}

The same analysis can be performed with the transfer matrix
\begin{eqnarray*}
	\tT=\tT(E)
	=
	\begin{pmatrix}
		p_1p_2 + 1 & p_1 \\
		p_2 & 1
	\end{pmatrix},
\end{eqnarray*}
yielding the change of basis
\begin{eqnarray*}
	\pru_{\!\!  n}
	=
	(-1)^{n-1}
	\begin{pmatrix}
		\sqrt{-p_1} \cos((2n-1)k) & \sqrt{-p_1} \sin((2n-1)k)\\
		\sqrt{p_2} \cos((2n-2)k) & \sqrt{p_2} \sin((2n-2)k)
	\end{pmatrix},
\end{eqnarray*}
for $p_1<0$ and $p_2>0$, and
\begin{eqnarray*}
	\qpru_n
	=
	(-1)^{n-1}
	\begin{pmatrix}
		-\sqrt{p_1} \cos((2n-1)k) & -\sqrt{p_1} \sin((2n-1)k)\\
		\sqrt{-p2} \cos((2n-2)k) & \sqrt{-p2} \sin((2n-2)k)
	\end{pmatrix},
\end{eqnarray*}
for $p_1>0$ and $p_2<0$.

\subsection{Transfer matrices in the natural basis}

Recall the decomposition \eqref{eq:decomposition-transfer-matrices}:
\begin{eqnarray*}
	T_{\omega,n}
	=
	T + \vo(n) A_1 + \vt(n+1) A_2 +\vo(n)\vt(n+1)A_3.
\end{eqnarray*}
The objective is to write $T_{\omega,n}$ in each of the natural basis introduced above. Let us illustrate this in the basis $\ppru_n$. We introduce new coordinates $\Psi_n$ in such a way that $\Phi_n = \ppru_n \Psi_n$. Hence, the recursion for $\Psi_n$ becomes
\begin{eqnarray*}
	\Psi_{n+1} = \ppru_{n+1}^{-1} \Phi_{n+1} = \ppru_{n+1}^{-1} T_{\omega, n} \Phi_n = \ppru_{n+1}^{-1} T_{\omega, n} \ppru_n \Psi_n.
\end{eqnarray*}
Summarizing, we write
\begin{eqnarray*}
	\Psi_{n+1} = M_{\omega, n} \Psi_n \quad \text{with} \quad M_{\omega, n}= \ppru_{n+1}^{-1} T_{\omega, n} \ppru_n.
\end{eqnarray*}
Note that $\ppru_{n+1}^{-1} T\ \ppru_n=\ppru_{n+1}^{-1}\ppru_{n+1}=I$. Hence,
\begin{eqnarray*}
	M_{\omega,n}
	=
	I + \vo(n) B_{n,1} + \vt(n+1) B_{n,2} +\vo(n)\vt(n+1)B_{n,3},
\end{eqnarray*}
with
\begin{eqnarray}\label{B123}
	B_{n,1} = \ppru_{n+1}^{-1} A_1 \ppru_n,
	\quad
	B_{n,2} = \ppru_{n+1}^{-1} A_2 \ppru_n,
	\quad
	B_{n,3} = \ppru_{n+1}^{-1} A_3 \ppru_n.
\end{eqnarray}

The computation of the matrices in \eqref{B123} is lengthy but rather straightforward. However, we present some linear algebraic preliminaries which make them quicker and more elegant. 
We will identify vectors and complex numbers in the usual way as
$\displaystyle
v
=
\begin{pmatrix}
	v_1 \\ v_2
\end{pmatrix}
=
v_1 + iv_2
$.
For two vectors $v$ and $w$, we denote by
$
\displaystyle
\begin{pmatrix}
	v \\ w
\end{pmatrix}
$
and
$
\displaystyle
\begin{pmatrix}
	v & w
\end{pmatrix}
$
the matrices whose rows and columns are $v$ and $w$ respectively. 
A matrix with rows $v$ and $w$ can be written in terms of projections as
\begin{eqnarray*}
	M
	=
	\begin{pmatrix}
		v \\ w
	\end{pmatrix}
	=
	\begin{pmatrix}
		\langle v|
		\\
		\langle w|
	\end{pmatrix}.
\end{eqnarray*}
Noting that $w_2 - i w_1 = -i(w_1+iw_2)$, the inverse matrix can be represented by columns as
\begin{eqnarray*}
	M^{-1}
	=
	\frac{1}{\det ( v \,\, w)}
	\begin{pmatrix}
		w_2 & -v_2 \\
		-w_1 & v_1
	\end{pmatrix}
	=
	\frac{i}{\det ( v \, \, w)}
	( -w \,\, v).
\end{eqnarray*}

\subsubsection{First coordinate system}


Recall the change of basis matrices \eqref{eq:change-of-basis-first-system}.
Letting
$$\displaystyle 
v_m 
= 
\begin{pmatrix}
	\cos(mk) \\ \sin(mk) 
\end{pmatrix}
=
\e^{imk}
,$$
allows us to represent $\ppru_n$ as
\begin{eqnarray*}
	\ppru_n
	=
	(-1)^{n-1}
	\begin{pmatrix}
		-\sqrt{p_2} \langle v_{2n-1}| \\ \\
		\sqrt{-p_1} \langle v_{2n-2}|
	\end{pmatrix}.
\end{eqnarray*}
Now, since $\det \ppru_n=\det (T^{n-1} \ppru_1)=\det \ppru_1=-\sin(2k)$, the inverse of $\ppru_n$ is given by
\begin{eqnarray*}
	\ppru_n^{-1}
	=
	\frac{(-1)^{n-1}i}{\sin(2k)}
	\left(\sqrt{-p_1}v_{2n-2} \,\, \sqrt{p_2} v_{2n-1} \right).
\end{eqnarray*}
Let us compute the matrices $B_{n,1},\, B_{n,2}$ and $B_{n,3}$ following this formalism.
We start with the simplest matrix which is $B_{n,3}$:
\begin{eqnarray}\label{B1}
	B_{n,3}
	&=&
	\ppru_{n+1}^{-1}A_3\ppru_n
	=
	(-1)^{n-1}
	\ppru_{n+1}^{-1}
	\begin{pmatrix}
		-1 & 0 \\
		0 & 0
	\end{pmatrix}
	\begin{pmatrix}
		-\sqrt{p_2} \langle v_{2n-1}| \\ \\
		\sqrt{-p_1} \langle v_{2n-2}|
	\end{pmatrix}
	\notag
	\\
	&=&
	\frac{-i}{\sin(2k)}
	\left(\sqrt{-p_1}v_{2n} \,\, \sqrt{p_2} v_{2n+1} \right)
	\begin{pmatrix}
		\sqrt{p_2} \langle v_{2n-1}| \\ \\
		0
	\end{pmatrix}
	\\
	\notag
	&=&
	\frac{-i}{\sin(2k)} \sqrt{-p_1p_2}
	v_{2n}\langle v_{2n-1}|
	=
	\frac{i}{\sin k} v_{2n}\langle v_{2n-1}|,
\end{eqnarray}
where we used the relation $\sqrt{-p_1p_2}=-2\cos k$.
By a similar procedure, we obtain
\begin{eqnarray*}
	B_{n,1}
	&=&
	-\frac{ip_2}{\sin(2k)}
	\left(
		2 \cos(k)v_{2n} - v_{2n+1} 
	\right)
	\langle v_{2n-1}|.
\end{eqnarray*}
At this point, we use Chebyshev's identity $v_m = 2\cos(k) \, v_{m-1}-v_{m-2}$ which yields
\begin{eqnarray*}
	B_{n,1}
	=
	-\frac{ip_2}{\sin(2k)}
	v_{2n-1}
	\langle v_{2n-1}|.
\end{eqnarray*}
Similarly,
\begin{eqnarray}\label{B2}
	B_{n,2}
	&=&
	\frac{i p_1}{\sin(2k)}
	v_{2n} \langle v_{2n} |.
\end{eqnarray}

\subsubsection{Second coordinate system}

The same computations as above yield the following matrices corresponding to the coordinate system $	\ppru_n' $:

\begin{eqnarray}\label{B123'}
	&&\tB_{n,1}
	=
	-\frac{ip_2}{\sin(2k)}v_{2n}\langle v_{2n}|,
	\quad
	\tB_{n,2}
	=
	-\frac{ip_1}{\sin(2k)} v_{2n-1}\langle v_{2n-1}|,
	\notag
	\\
	&&\tB_{n,3}
	=
	-\frac{i}{\sin k}v_{2n}\langle v_{2n-1}|.
\end{eqnarray}

\subsection{The Pr\"ufer transform}\label{sec:the-transform}


The matrices computed above can be represented as an explicit transformation on the complex plane. Recalling the correspondence between vectors and complex numbers, we have
\begin{eqnarray}\label{eq:matrix-to-transform}
	\e^{i\alpha}\langle \e^{i\beta} | \e^{i\theta}
	=
	\cos(\theta-\beta)\ \e^{i\alpha}
	=
	\cos(\theta-\beta)\ \e^{-i(\theta-\alpha)}\e^{i\theta}.
\end{eqnarray}

\subsubsection{First coordinate system}
With the representation mentioned above, the matrices $B_{n,i}$ for $i\in\{1,2,3\}$ can be expressed as 
\begin{eqnarray*}
	B_{n,1}\, \e^{i\theta}
	&=&
	-\frac{ip_2}{\sin(2k)}
	\cos \bar\theta \, 
	\e^{-i\bar\theta}
	\e^{i\theta},
	\\
	B_{n,2} \, \e^{i\theta}
	&=&
	\frac{i p_1}{\sin(2k)}
	\cos( \bar\theta-k)
	\, \e^{-i (\bar\theta-k)}
	\e^{i\theta},
	\\
	B_{n,3} \, \e^{i\theta}
	&=&
	\frac{i}{\sin k}
	\cos \bar\theta \,
	\e^{-i (\bar \theta-k)}
	\e^{i\theta},
\end{eqnarray*}
where $\bar\theta =\theta- (2n-1)k$.

Recall that we defined a new representation of $\Phi$ through the relation $\Phi_n = \ppru_n \Psi_n$.
We write $\Psi_n$ in terms of its coordinates $\psi_n^\pm$ and
introduce new variables $\zeta_n,\, R_n$ and $\theta_n$ through the relation $\zeta_n = \psi_n^++i\psi^-_n=R_n \e^{i\theta_n}$.
We also define $\bar \theta_n =  \theta_n-(2n-1)k$. These are called the Pr\"ufer coordinates. Recalling that
\begin{eqnarray}\label{dec eigenfunction B123}
	\Phi_{n+1} 
	= 
	\Big{(}
		I + \vo(n) B_{n,1} + \vt(n+1)B_{n,2} + \vo(n)\vt(n+1)B_{3,n}	
	\Big{)}
	\Phi_n,
\end{eqnarray}
the recursion for the Pr\"ufer coordinates becomes
\begin{eqnarray}
	\nonumber
	\zeta_{n+1}
	&=&
	\Big{(}
	1
	-\frac{ip_2}{\sin(2k)} \vo(n)
	\cos \bar\theta_n \, 
	\e^{-i\bar\theta_n}
	+
	\frac{i p_1}{\sin(2k)}\vt(n+1)
	\cos( \bar\theta_n-k)
	\, \e^{-i (\bar\theta_n-k)}
	\\
		\label{eq:prufer-transform-s1}
	&&
	\phantom{blablablablablablabla}
	+
	\frac{i}{\sin k}\vo(n)\vt(n+1)
	\cos \bar\theta_n \,
	\e^{-i (\bar \theta_n-k)}
	\Big{)}
	\,
	\zeta_n.
	\end{eqnarray}
\begin{remark}\label{rk:measurability-s1}
	Let $\mathcal{F}_n$ be the $\sigma-$algebra defined by 
	$\mathcal{F}_n=\sigma(\vo(j),\, \vt(j+1):\, 1\leq j \leq n)$. The previous representation shows that the variables $\zeta_n,\, R_n,\,\theta$ and $\bar{\theta}_n$ are measurable with respect to $\mathcal{F}_{n-1}$ (as a consequence, so is $\Phi_n$, a fact that could be read from the original system of equations). In particular, they are independent of $\vo(n)$ and $\vt(n+1)$. This fact will be crucial in our analysis as it will turn certain objects into martingales with respect to the filtration $(\mathcal{F}_n)_n$.
\end{remark}

\subsubsection{Second coordinate system}


Analogously, we define $\Psi'_n$ such that $\Phi'_n = \pru_n \Psi'_n$ and introduce a new set of corresponding Pr\"ufer variables $\zeta'_n = R'_n \e^{i\theta'_n}$. Setting $\bar{\theta}'_n = \theta'_n - (2n-1)k$, we obtain the recursion
\begin{eqnarray}
	\nonumber
	\zeta'_{n+1}
	&=&
	\Big{(}
	1
	-\frac{ip_2}{\sin(2k)} \vo(n)
	\cos( \bar\theta_n-k) \, 
	\e^{-i(\bar{\theta}'_n-k)}
	+
	\frac{i p_1}{\sin(2k)}\vt(n)
	\cos \bar{\theta}'_n
	\, \e^{-i \bar{\theta}'_n}
	\\
		\label{eq:prufer-transform-s2}
	&&
	\phantom{blablablablablablabla}
	+
	\frac{i}{\sin k}\vo(n)\vt(n)
	\cos \bar{\theta}'_n \,
	\e^{-i (\bar{\theta}'_n-k)}
	\Big{)}
	\,
	\zeta'_n.
	\end{eqnarray}
\begin{remark}\label{rk:measurability-s2}
	We can see that the new Pr\"ufer variables fulfill the same measurability properties highlighted in Remark \ref{rk:measurability-s1} with respect to the filtration $\mathcal{F}'_n=\sigma(\vo(j),\, \vt(j):\, 1\leq j \leq n)$.
\end{remark}

\subsection{Equivalence of systems}

The results of this section hold without any assumption on the potential.
The next lemma shows that the asymptotics of the systems in the original coordinates and Pr\"ufer coordinates are equivalent. We will write the result for the first system since the adaptation to the second system is straightforward.
\begin{lemma}\label{thm:equivalence-systems}
Let $\Phi\in\h$ be expressed through its Pr\"ufer coordinates associated to a fixed energy $E\in\mathring\Sigma$. Then,
	\begin{eqnarray}\label{bnd Phi}
	\frac{\sin^2 (2k)}{2E} R_n^2 \leq \| \Phi_n \|^2 \leq 4E^2 R_n^2.
\end{eqnarray}
\end{lemma}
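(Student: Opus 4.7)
The plan is to unfold the change-of-basis relation $\Phi_n = \ppru_n \Psi_n$ using the explicit matrix \eqref{eq:change-of-basis-first-system}, and reduce $\|\Phi_n\|^2$ to a simple trigonometric quadratic form in the Pr\"ufer phase $\bar\theta_n$. Writing $\Psi_n = (\psi_n^+,\psi_n^-)^T$ with real entries (for real solutions of the eigenvalue equation) and $\psi_n^+ + i\psi_n^- = R_n\e^{i\theta_n}$ so that $R_n^2 = (\psi_n^+)^2 + (\psi_n^-)^2$, the identity $\cos(mk)\psi_n^+ + \sin(mk)\psi_n^- = R_n\cos(\theta_n - mk)$ applied at $m=2n-1$ and $m=2n-2$ gives, after squaring each coordinate of $\ppru_n\Psi_n$,
\begin{equation*}
\|\Phi_n\|^2 \;=\; R_n^2\,\bigl(p_2\cos^2\bar\theta_n + (-p_1)\cos^2(\bar\theta_n + k)\bigr),
\end{equation*}
where I used $\theta_n - (2n-2)k = \bar\theta_n + k$. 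The problem is then reduced to bounding the quadratic form $Q(\theta):= p_2\cos^2\theta + (-p_1)\cos^2(\theta + k)$ uniformly in $\theta \in \R$.

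For the upper bound one has $Q(\theta)\leq p_2 + (-p_1) = 2E$, which yields $\|\Phi_n\|^2 \leq 2E\,R_n^2$, well within the claimed bound. For the lower bound I would rewrite $Q$ via the double-angle formulas as $Q(\theta)=E + \tfrac{1}{2}(A\cos 2\theta - B\sin 2\theta)$ with $A = p_2 - p_1\cos 2k$ and $B = -p_1\sin 2k$, so that $\min_\theta Q = E - \tfrac{1}{2}\sqrt{A^2+B^2}$, and expand $A^2 + B^2 = p_1^2 + p_2^2 - 2p_1 p_2\cos 2k$.

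The key algebraic step is the identity $(2E)^2 - (A^2+B^2) = 4\sin^2(2k)$. To derive it, I plan to substitute $1-\cos 2k = 2\sin^2 k$ together with the relation $-p_1 p_2 = 4\cos^2 k$ coming from the diagonalization of $T$ in Section~\ref{sec:Prufer}:
\begin{equation*}
(2E)^2 - (A^2+B^2) \;=\; (p_2 + (-p_1))^2 - (p_1^2 + p_2^2) + 2 p_1 p_2\cos 2k \;=\; -2 p_1 p_2 (1 - \cos 2k) \;=\; 16\cos^2 k\,\sin^2 k \;=\; 4\sin^2(2k).
\end{equation*}
Rationalizing and using $\sqrt{A^2+B^2}\leq 2E$ then gives $\min_\theta Q = \tfrac{2\sin^2(2k)}{2E + \sqrt{A^2+B^2}} \geq \tfrac{\sin^2(2k)}{2E}$, from which $\|\Phi_n\|^2 \geq \tfrac{\sin^2(2k)}{2E}R_n^2$.

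The one non-routine step is establishing the algebraic identity linking $A^2+B^2$ to $\sin^2(2k)$; once in place, both estimates are elementary trigonometric bounds. The same argument applies verbatim to the second coordinate system with $\pru$ in place of $\ppru$ (interchanging the roles of $p_1$ and $p_2$), yielding the analogous equivalence for $\|\Phi'_n\|^2$ in terms of $R'_n$.
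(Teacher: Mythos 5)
Your proof is correct, and it takes a route that is more concrete than the paper's. The paper proceeds entirely through the spectral data of $\ppru_n^*\ppru_n$: it records $\tr(\ppru_n^*\ppru_n)=2E$ and $\det(\ppru_n^*\ppru_n)=\sin^2(2k)$ and draws both bounds from $\gamma_1+\gamma_2=2E$, $\gamma_1\gamma_2=\sin^2(2k)$. You instead unfold $\Phi_n=\ppru_n\Psi_n$ and compute $\|\Phi_n\|^2/R_n^2$ as the explicit quadratic form $Q(\bar\theta_n)=p_2\cos^2\bar\theta_n+(-p_1)\cos^2(\bar\theta_n+k)$, then extremize via $Q=E+\tfrac12(A\cos 2\theta-B\sin 2\theta)$. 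The two are of course equivalent: your $\max_\theta Q$ and $\min_\theta Q$ are exactly the eigenvalues $\gamma_2,\gamma_1$, and your key identity $(2E)^2-(A^2+B^2)=4\sin^2(2k)$ is precisely the determinant computation $\gamma_1\gamma_2=\sin^2(2k)$ in disguise, while $\sqrt{A^2+B^2}\le 2E$ encodes $\gamma_2\le\gamma_1+\gamma_2$. Your lower-bound chain, after rationalization, reproduces the paper's $\gamma_1=\sin^2(2k)/\gamma_2\ge\sin^2(2k)/(2E)$ exactly. What your explicit route buys is that the intermediate formula for $\|\Phi_n\|^2$ in terms of the Pr\"ufer phase is displayed, which may be independently useful; the cost is a few more lines of trigonometry.

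One point to flag concerning the upper bound. You prove $\|\Phi_n\|^2\le 2E\,R_n^2$, which is the sharp constant ($\max_\theta Q=\gamma_2=E+\tfrac12\sqrt{A^2+B^2}$, and $\sqrt{A^2+B^2}<2E$ strictly off the band edges). The lemma states $4E^2R_n^2$, and you remark that your bound is ``well within the claimed bound.'' That comparison only holds when $E\ge\tfrac12$: for $E<\tfrac12$ one has $4E^2<2E$, so your bound is \emph{weaker} than the stated one. In fact the stated constant $4E^2$ is itself incorrect for small $E$ --- e.g.\ with $m=0$ and $E=0.2$ one finds $\gamma_2\approx 0.22>4E^2=0.16$ --- so the paper's $4E^2$ appears to be an arithmetic slip in the line $\|\ppru_n\|^2=\lambda_2^2\le 4E^2$ (it should read $\|\ppru_n\|^2=\gamma_2\le 2E$). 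Your $2E$ is the correct constant, and since the lemma is used only to transport polynomial/exponential asymptotics between $\|\Phi_n\|$ and $R_n$, either $E$-dependent constant serves the purpose downstream; but the phrase ``well within the claimed bound'' should be dropped or qualified.
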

\begin{proof}
	We assume $E\in (m,\sqrt{m^2+4})$, the other case being similar. We first note that
\begin{eqnarray*}
	\text{Tr} (\ppru^*_n \ppru_n) 
	&=&
	-p_1 + p_2 = 2E \in (2m,2\sqrt{m^2+4}),
	\\
	\det (\ppru^*_n \ppru_n) 
	&=& \sin^2 (2k).
\end{eqnarray*}
Now, let $0<\gamma_1 \leq \gamma_2$ represent the eigenvalues of $\ppru^*_n \ppru_n$. Then,
\begin{eqnarray*}
	\gamma_1+\gamma_2 = 2E,\quad \quad \gamma_1 \gamma_2 = \sin^2(2k).
\end{eqnarray*}
For the lower bound, we have
\begin{eqnarray}\label{eq:lower-bound-eigenvalue}
	\gamma_1 = \frac{\sin^2 (2k)}{\gamma_2} = \frac{\sin^2 (2k)}{\text{Tr} (\ppru^*_n \ppru_n)-\gamma_1} \geq \frac{\sin^2 (2k)}{2E}.
\end{eqnarray}
The upper bound follows from $\| \ppru_n \|^2 = \lambda_2^2 \leq 4E^2$.
\end{proof}
It turns out that the Pr\"ufer radii also allow us to control the asymptotics of transfer matrices. Denote $\prodtransn = \transn \cdots T_{\omega,1}$ so that $\Phi_{n} = \prodtransnm \Phi_1$.
We write $R_n(\theta)$ when the recursion for $(\Psi_n)_n$ is started from $\Psi_1 = \e^{i\theta}$.
\begin{lemma} \label{thm:comparison}
For any pair of initial angles $\theta_1 \neq \theta_2$, there exist constants $c(\theta_1,\theta_2),\, C(\theta_1,\theta_2)$ such that
\begin{eqnarray*}
	c(\theta_1,\theta_2) \max\{R_n(\theta_1),\, R_n(\theta_2)\}
	\leq
	\| \mathbf{T}_{\omega, n-1} \|^2
	\leq	
	C(\theta_1,\theta_2) \max\{ R_n(\theta_1),\, R_n(\theta_2)\},
\end{eqnarray*}
for all $n\geq 1$.
\end{lemma}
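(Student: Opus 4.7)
The strategy is to exploit Lemma \ref{thm:equivalence-systems}, which asserts $\|\Phi_n\| \asymp R_n$ with constants depending only on $E \in \mathring{\Sigma}$, together with a simple $2\times 2$ basis-change argument that transfers pointwise control at two distinct initial angles into operator-norm control via the linear recursion $\Phi_n = \prodtransnm \Phi_1$.

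For the lower bound, for each $\theta \in \{\theta_1, \theta_2\}$ I would start the recursion from $\Psi_1 = \e^{i\theta}$, so that $\Phi_1(\theta) = \ppru_1 \e^{i\theta}$. Since $\det \ppru_1 = -\sin(2k)\neq 0$ in the interior of the spectrum, $\|\Phi_1(\theta)\|$ is bounded from above and below by positive constants depending only on $E$. Writing $\Phi_n(\theta) = \prodtransnm \Phi_1(\theta)$ and applying Lemma \ref{thm:equivalence-systems}, I get
\[
  \|\prodtransnm\|^2 \ \ge\ \frac{\|\Phi_n(\theta)\|^2}{\|\Phi_1(\theta)\|^2} \ \ge\ c(E)\, R_n(\theta)^2,
\]
and selecting $\theta$ to be whichever of $\theta_1, \theta_2$ maximizes the right-hand side gives the desired lower bound.

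For the upper bound, the crucial observation is that since $\theta_1$ and $\theta_2$ are distinct (modulo $\pi$), the unit vectors $\e^{i\theta_1}$ and $\e^{i\theta_2}$ are linearly independent in $\R^2$; applying the invertible matrix $\ppru_1$ shows $\Phi_1(\theta_1)$ and $\Phi_1(\theta_2)$ form a basis of $\R^2$. Any unit vector $v \in \R^2$ therefore decomposes as
\[
  v = a_1 \Phi_1(\theta_1) + a_2 \Phi_1(\theta_2), \qquad |a_i| \le C(\theta_1,\theta_2,E),
\]
where the constant is controlled by $|\sin(\theta_1 - \theta_2)|^{-1}$ up to a factor depending only on $E$ (this is just the inverse of the $2\times 2$ matrix whose columns are $\Phi_1(\theta_1), \Phi_1(\theta_2)$). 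Applying $\prodtransnm$ and the upper bound of Lemma \ref{thm:equivalence-systems} then yields
\[
  \|\prodtransnm v\| \ \le\ |a_1|\,\|\Phi_n(\theta_1)\| + |a_2|\,\|\Phi_n(\theta_2)\| \ \le\ C(\theta_1,\theta_2,E)\, \max\{R_n(\theta_1), R_n(\theta_2)\},
\]
and taking the supremum over unit $v$ closes the argument.

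There is no substantive obstacle; the proof is essentially deterministic once Lemma \ref{thm:equivalence-systems} is in hand. The only point meriting care is the dependence of the constant on the pair $(\theta_1,\theta_2)$, which necessarily diverges as the two angles coalesce modulo $\pi$, since in that limit the two evaluations carry the same information and cannot alone control the full operator norm.
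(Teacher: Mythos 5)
Your proof is correct, and the lower bound is exactly the paper's argument: $\prodtransnm\Phi_1=\ppru_n\Psi_n$ together with the eigenvalue bound \eqref{eq:lower-bound-eigenvalue} (equivalently, Lemma \ref{thm:equivalence-systems}) and the boundedness of $\|\Phi_1(\theta)\|=\|\ppru_1\e^{i\theta}\|$ away from $0$ and $\infty$. Where you diverge is the upper bound: the paper deduces it from the cited unimodular-matrix lemma of Kiselev--Last--Simon (stated in Appendix \ref{app:unimodular}), applied in the Pr\"ufer coordinates where the products $\ppru_{n}^{-1}\prodtransnm\ppru_1$ have determinant one, and then conjugates back with $\ppru_n,\ppru_1^{-1}$ whose norms are controlled by $E$. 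You instead give a self-contained two-vector basis decomposition $v=a_1\Phi_1(\theta_1)+a_2\Phi_1(\theta_2)$ with $|a_i|\lesssim|\sin(\theta_1-\theta_2)|^{-1}$, which needs only linear independence, not unimodularity, and avoids the restriction $|\theta_1-\theta_2|\le\tfrac{\pi}{2}$ appearing in the KLS lemma; the price is a less explicit constant, which is irrelevant here since the constants are allowed to depend on $(\theta_1,\theta_2)$ (and, as in the paper's own lower bound, on $E$). Your observation that distinctness must be understood modulo $\pi$ is a genuine (minor) sharpening of the hypothesis as stated, since $R_n(\theta+\pi)=R_n(\theta)$ and a single direction cannot control the operator norm; note also that, like the paper's proof, your argument really establishes the comparison of $\|\prodtransnm\|^2$ with $\max\{R_n(\theta_1)^2,R_n(\theta_2)^2\}$, which is the intended reading of the displayed inequality.
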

\begin{proof}
	From the relation $\prodtransnm \Phi_1 = \ppru_n \Psi_n$ and \eqref{eq:lower-bound-eigenvalue}, we obtain
	\begin{eqnarray*}
		\| \prodtransnm \|^2 \geq \| \ppru_n \Psi_n \|^2 \geq \frac{\sin^2(2 k)}{2E} R_n^2.
	\end{eqnarray*}
	The upper bound is more delicate and follows from a general result on unimodular matrices that we defer to the appendix.
\end{proof}


\section{Asymptotics of transfer matrices}\label{sec:asymptotics-TM}

The results of this section are given for the transfer matrices in the first system. This will be enough for the proof of Theorem \ref{thm:spectrum} and \ref{thm:transport}. Theorem \ref{thm:DL} will require the corresponding estimates for both systems. Once again, the statements and proofs are identical.

To stress the dependence on the energy, we write $\transn(E)$ for the transfer matrix at a fixed energy $E$ and $\prodtransn(E)=\transn(E)\cdots T_{\omega,1}(E)$. Recall that the parameters $E$ and $k$ are linked through the relation $E = \sqrt{m^2 + 4\cos^2(k)}$ for $E\in(m,\sqrt{m^2+4})$ and $k\in(-\pi,-\tfrac{3\pi}{2})$. 

\subsection{Almost sure asymptotics}\label{sec:asymptotics-TM-as}


The following theorem gives the asymptotics of transfer matrices for the critical and sub-critical regime. It will be the key to our proof of spectral transition for $\alpha=\frac12$. For the proof of dynamical localization for $\alpha\in(0,\frac12)$, we will need an integrated version which is given in Section \ref{sec:asymptotics-TM-l1}.
\begin{theorem}\label{thm:lyapunov-exponents}
	Let $\alpha\in(0,\tfrac12]$.
	Assume \textbf{(A1)}-\textbf{(A3a)} and \textbf{(A4)}.
	For each fixed energy corresponding to a value of $k \in (-\pi, -\tfrac{\pi}{2})$ different from $-\frac{5\pi}{8},-\frac{3\pi}{4}$ and $-\frac{7\pi}{8}$, there exists a measurable set $\Omega_E \subset \Omega$ of full probability such that
	\begin{eqnarray}\label{beta}
		\beta
		=
		\beta(E,\lambda)
		:=
		\lim_{n\to\infty} \frac{\log \| \prodtransn(E) \|}{\sum^n_{j=1} j^{-2\alpha}}
		=
		\frac{\lambda^2( p_1^2+p_2^2)}{8 \sin^2(2k)},
	\end{eqnarray}
	for all $\omega\in\Omega_E$ and all $\lambda>0$.
\end{theorem}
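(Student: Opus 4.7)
The plan is to reduce the norm asymptotics to that of the Pr\"ufer radius $R_n$ via Lemma \ref{thm:comparison}, and to extract the growth rate directly from the recursion \eqref{eq:prufer-transform-s1}. Writing the latter as $\zeta_{n+1} = (1+X_n)\zeta_n$, hypothesis \textbf{(A4)} gives $|X_n| \le C(\omega)\, n^{-2\alpha/3-\varepsilon}$ $\p$-almost surely, which validates the second-order Taylor expansion
\begin{equation*}
\log R_{n+1}^2 - \log R_n^2 \,=\, 2\Re X_n - \Re(X_n^2) + O(|X_n|^3),
\end{equation*}
whose cubic remainder contributes $o(c_n)$ to $\log R_n^2$ by Stolz-Cesaro, where $c_n := \sum_{j=1}^n j^{-2\alpha}$.

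Writing $X_n = \alpha_n \vo(n) + \beta_n \vt(n+1) + \gamma_n \vo(n)\vt(n+1)$ with $\alpha_n,\beta_n,\gamma_n$ the $\mathcal{F}_{n-1}$-measurable prefactors in \eqref{eq:prufer-transform-s1}, Remark \ref{rk:measurability-s1} combined with \textbf{(A1)}-\textbf{(A2)} implies that $\sum_{j\le n} 2\Re X_j$ is an $\mathcal{F}_n$-martingale. Its predictable bracket is $O(\lambda^2 c_n)$ by \textbf{(A3a)}, and since $\sum_j a_j^2/c_j^2 < \infty$ in both regimes $\alpha\le\tfrac12$, Kronecker's lemma yields $\sum_{j\le n} 2\Re X_j = o(c_n)$ $\p$-a.s. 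Splitting $\Re(X_n^2) = \esp[\Re(X_n^2)\mid\mathcal{F}_{n-1}] + D_n$, the fluctuation $D_n$ is a martingale difference with bracket $\sum O(a_j^4) < \infty$, hence also $o(c_n)$. By independence and \textbf{(A2)},
\begin{equation*}
\esp[\Re(X_n^2)\mid\mathcal{F}_{n-1}] \,=\, \lambda^2 a_n^2\,\Re(\alpha_n^2+\beta_n^2) + O(a_n^4),
\end{equation*}
and the identity $\cos^2 u\,\cos(2u) = \tfrac14 + \tfrac12\cos(2u) + \tfrac14\cos(4u)$ decomposes $\Re(\alpha_n^2+\beta_n^2)$ as an angle-independent piece, whose cumulative contribution to $\log R_n^2$ is exactly $2\beta(E,\lambda)\,c_n$, plus oscillating pieces $a_n^2\cos(2c\bar\theta_n + \phi_c)$ with $c\in\{1,2\}$.

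The main obstacle is bounding the oscillatory sums $\Sigma_n^{(c)} = \sum_{j\le n} a_j^2\cos(2c\bar\theta_j + \phi_c)$ by $o(c_n)$. Setting $Y_j := e^{2ic\bar\theta_j}$, the angle dynamics $\bar\theta_{j+1} = \bar\theta_j - 2k + \arg(1+X_j)$ gives $Y_{j+1} = e^{-4ick}Y_j(1+\eta_j)$ with $\eta_j = 2ic\,\Im X_j + O(|X_j|^2)$, so $\eta_j$ decomposes into an $\mathcal{F}_{j}$-martingale increment and a predictable part of order $a_j^2$ which itself contains angle-constant and higher-frequency ($2\bar\theta_j,\,4\bar\theta_j$) pieces. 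Iterating this decomposition and applying Abel summation against the monotone sequence $a_j^2$ reduces $\Sigma_n^{(c)}$ to controlling geometric partial sums of deterministic rotations at frequencies $4ck$ and, after a second round, $4(c\pm c')k$ with $c,c'\in\{1,2\}$. These partial sums remain uniformly bounded precisely off the resonance set where one of the corresponding rotations is trivial; a direct check shows that within $k\in(-\pi,-\tfrac{\pi}{2})$ this set reduces to $\{-5\pi/8,\,-3\pi/4,\,-7\pi/8\}$, explaining the exclusions. Combining the three contributions and invoking Lemma \ref{thm:comparison} identifies the growth rate of $\log\|\prodtransn(E)\|$ as $\beta(E,\lambda)\,c_n$, as required.
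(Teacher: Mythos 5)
Your argument runs along the same skeleton as the paper's: reduce to the Pr\"ufer radius via Lemma \ref{thm:comparison}, Taylor expand $\log R_n^2$ to second order (your identity $\log R_{n+1}^2-\log R_n^2=2\Re X_n-\Re(X_n^2)+O(|X_n|^3)$ is indeed equivalent to the paper's real expansion in \eqref{eq:prufer-transform-s1-radii}), extract martingale pieces controlled via Kronecker/Azuma, and identify the angle-independent main term $2\beta(E,\lambda)c_n$. The point where you genuinely diverge is the control of the oscillatory sums $\Sigma_n^{(c)}$: the paper invokes the block Weyl-sum lemma of \cite[Lemma 8.5]{KLS} (stated in Appendix~\ref{sec:control-phases}) combined with the slow drift $|\bar\theta_{j+1}-\bar\theta_j+2k|\lesssim j^{-2\alpha/3}$ supplied by \textbf{(A4)}, whereas you iterate the angle recursion $Y_{j+1}=e^{-4ick}Y_j(1+\eta_j)$ and Abel-sum against $a_j^2$. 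This alternative route is workable and, done carefully, is actually cleaner than the paper's: from $(1-e^{-4ick})S_n=O(1)+e^{-4ick}\sum_{j<n}Y_j\eta_j$ and the crude bound $\sum_{j<n}|\eta_j|\lesssim\sum_{j<n}j^{-2\alpha/3-\varepsilon}\lesssim n^{1-2\alpha/3}$ you obtain $|S_n|=O(n^{1-2\alpha/3})$, and Abel summation against $a_j^2\sim j^{-2\alpha}$ then yields $|\Sigma_n^{(c)}|=O(\max\{1,\,n^{1-8\alpha/3}\})=o(c_n)$ since $1-8\alpha/3<1-2\alpha$. This only requires $4ck\notin 2\pi\Z$ for $c\in\{1,2\}$, i.e.\ $k\neq -3\pi/4$, and needs no second round at all.

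However, the resonance-set bookkeeping in your write-up is wrong, and that is a genuine gap. You assert that the second-round frequencies $4(c\pm c')k$, $c,c'\in\{1,2\}$, give precisely the resonance set $\{-5\pi/8,-3\pi/4,-7\pi/8\}$. They do not: $c+c'=3$ produces the frequency $12k$, whose resonance condition $12k\in 2\pi\Z$, i.e.\ $k\in\frac{\pi}{6}\Z$, excludes $-5\pi/6$ and $-2\pi/3$ inside $(-\pi,-\pi/2)$. A literal implementation of your two-round scheme therefore excludes five points, not three, and does not match the statement of the theorem. (In fact the paper's set $\{-5\pi/8,-3\pi/4,-7\pi/8\}=\frac{\pi}{8}\Z\cap(-\pi,-\pi/2)$ comes from the condition $y\notin\pi\Z$ with $y=-8k$ in the KLS lemma, which is a different mechanism than the one you invoke.) There is a further structural issue with iterating: the angle-independent part of the predictable piece of $\eta_j$ contributes $\sum_{j<n}O(a_j^2)\sim c_n$ to $S_n$, so no amount of frequency decomposition improves $|S_j|$ beyond $O(c_j)$ --- the trivial \textbf{(A4)}-bound on $\sum|\eta_j|$ already gives the correct order once you Abel-weight by $a_j^2$. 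To repair the proof, drop the second round and use the crude bound above (which, as a bonus, excludes only $-3\pi/4$), or simply adopt the paper's block/Weyl-sum argument; one should also note that $\lim n^\alpha a_n=1$ does not by itself give monotonicity of $(a_j^2)$, which the Abel step tacitly uses and should be addressed.
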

\begin{proof}
	According to Lemma \ref{thm:comparison}, it is enough to show that
	\begin{eqnarray*}
		\lim_{n\to\infty} \frac{\log R_n^2}{\sum^n_{j=1} j^{-2\alpha}}
		=
		\frac{\lambda^2( p_1^2+p_2^2)}{4 \sin^2(2k)}.
	\end{eqnarray*}
	From \eqref{eq:prufer-transform-s1}, we obtain a recursion for the radii $(R_n)_n$ given by
	\begin{eqnarray}
		\nonumber
		R_{n+1}^2
		&=&
		\Big{(}
	1	
		+\frac{p_2^2}{\sin^2(2k)}\cos^2(\bar\theta_n)\vo(n)^2
		+\frac{p_1^2}{\sin^2(2k)}\cos^2(\bar\theta_n-k)\vt(n+1)^2
		\\
		\nonumber
		&&
		\quad
		+\frac{1}{\sin^2k}\cos^2(\bar\theta_n)\vo(n)^2\vt(n+1)^2
		\\
		\label{eq:prufer-transform-s1-radii}
		&&
		\quad
		-\frac{p_2}{\sin(2k)}\sin(2\bar\theta_n)\vo(n)
		+\frac{p_1}{\sin(2k)}\sin(2(\bar\theta_n-k))\vt(n+1)
		\\
		\nonumber
		&&
		\quad
		+ \frac{1}{\sin k}\cos(\bar\theta_n)\sin(\bar\theta_n-k)\vo(n)\vt(n+1)
		\\
		\nonumber
		&&
		\quad
		-2\frac{p_1p_2}{\sin^2(2k)}\cos k\cos(\bar\theta_n)\cos(\bar\theta_n-k) \vo(n)\vt(n+1)
		\\
		\nonumber
		&&
		\quad
		-2\frac{p_2}{\sin k \sin(2k)}\cos k\cos^2(\bar\theta_n)\vo(n)^2\vt(n+1)
		\\
		\nonumber
		&&
		\quad
		+2\frac{p_1}{\sin k\sin(2k)}\cos(\bar\theta_n)\cos(\bar\theta_n-k)\vo(n)\vt(n+1)^2
		\Big{)}
		R_n^2
		\\
		\nonumber
	&=:&
	\left( 1 + \Gamma_{\omega}(n)\right) \, R_n^2.
\end{eqnarray}
Note that $R_{n+1}^2,R_n^2>0$, $\p$-almost surely, so that $1 + \Gamma_{\omega}(n)>0$, $\p$-almost surely.
	Iterating this relation and using the expansion $\log(1+\varepsilon)\simeq \varepsilon - \frac12 \varepsilon^2$, we get
	\begin{eqnarray*}
		\log R_n^2
		&=&
		\log \prod^{n-1}_{j=1}
		\Big{\{} 1+\Gamma_{\omega}(j)\Big{\}}
		\\
		&=&
		\sum^{n-1}_{j=1}
		\left\{
		\frac{p_2^2}{\sin^2(2k)} \left( \cos^2(\bar\theta_j) - \frac12\sin^2(2\bar\theta_j) \right)\vo(j)^2
		\right.
		\\
		&&\phantom{blablab}
		+ \frac{p_1^2}{\sin^2(2k)} \left( \cos^2(\bar\theta_j-k) - \frac12\sin^2(2(\bar\theta_j-k)) \right)\vt(j+1)^2
		\\
		&&\phantom{blablab}
		-\frac{p_2}{\sin(2k)} \sin(2\bar\theta_j)\vo(j)
		+\frac{p_1}{\sin(2k)} \sin(2(\bar\theta_j-k))\vt(j+1)
		\\
		&&\phantom{blablab}
		+\frac{1}{\sin k}\cos(\bar \theta_n)\sin(\bar\theta_j-k)\vo(j)\vt(j+1)
		\\
		&&\phantom{blablab}
		\left.
		-\frac{2p_1p_2}{\sin^2(2k)} \cos k \cos(\bar\theta_j)\cos(\bar\theta_j-k) \vo(j)\vt(j+1)
		+ \K_{\omega}(j)
		\right\},
	\end{eqnarray*}
	where $\K_{\omega}(j)$ collects all the monomials of order higher than $3$ in the disorder variables. 
Now we use the identity $\cos^2(\beta)-\frac12 \sin^2(2\beta) = \frac14 + \frac12 \cos(2\theta)-\frac14 \cos(4\theta)$ to rewrite $\log R_n^2 $ as	\begin{eqnarray}
\notag
	\log R_n^2
	&=&
	\frac{p_2^2}{4\sin^2(2k)}\sum^{n-1}_{j=1}\esp[\vo(j)^2]+\frac{p_1^2}{4\sin^2(2k)}\sum^{n-1}_{j=1}\esp[\vt(j+1)^2]
	\\
	\label{eq:martingale-decomposition}
	&&
	+
	\sum^6_{k=1} M_{n,k}
	+Q_{n,1}+Q_{n,2}
	+
	\sum^n_{j=1} \K_{\omega}(j),
	\end{eqnarray}
	where
	\begin{eqnarray*}
		M_{n,1}
		&=&
		\frac{p_2^2}{\sin^2(2k)}\sum^{n-1}_{j=1}
		\left( \frac14 + \frac12\cos(2\bar\theta_j)-\frac14 \cos(4\bar\theta_j)\right)
		\left( \vo(j)^2-\esp[\vo(j)^2]\right),
		\\
		M_{n,2}
		&=&
		\frac{p_1^2}{\sin^2(2k)}\sum^{n-1}_{j=1}
		\left( \frac14 + \frac12\cos(2(\bar\theta_j-k))-\frac14 \cos(4(\bar\theta_j-k))\right)
		\\
		&&
		\phantom{blablablablablablablabla}
		\times
		\left( \vt(j+1)^2-\esp[\vt(j+1)^2]\right),
		\\
		M_{n,3}
		&=&
		-\frac{p_2}{\sin(2k)} \sum^{n-1}_{j=1}\sin(2\bar\theta_j)\vo(j),
		\\
		M_{n,4}
		&=&
		\frac{p_1}{\sin(2k)} \sum^{n-1}_{j=1}\sin(2(\bar\theta_j-k))\vt(j+1),
		\\
		M_{n,5}
		&=&
		\frac{1}{\sin k}\sum^{n-1}_{j=1}\cos(\bar \theta_n)\sin(\bar\theta_j-k)\vo(j)\vt(j+1),
		\\
		M_{n,6}
		&=&
		-\frac{2p_1p_2}{\sin^2(2k)} \cos k \sum^{n-1}_{j=1}\cos(\bar\theta_j)\cos(\bar\theta_j-k) \vo(j)\vt(j+1),
	\end{eqnarray*}
	are $\mathcal{F}_n$-martingales according to Lemma \ref{thm:martingales} from Appendix \ref{sec:martingales}, and
	\begin{eqnarray}
		Q_{n,1}
		&=&
		\frac{p_2^2}{\sin^2(2k)}\sum^{n-1}_{j=1}
		\left(  \frac12\cos(2\bar\theta_j)-\frac14 \cos(4\bar\theta_j)\right)
		\esp[\vo(j)^2],
		\\
		\label{eq:sum-of-phases}
		Q_{n,2}
		&=&
		\frac{p_1^2}{\sin^2(2k)}\sum^{n-1}_{j=1}
		\left(\frac12\cos(2(\bar\theta_j-k))-\frac14 \cos(4(\bar\theta_j-k))\right)
		\esp[\vt(j+1)^2].
	\end{eqnarray}
	We apply Lemma \ref{thm:martingales} from Appendix \ref{sec:martingales} to control the martingale terms $M_{n,i}$ with $\gamma=2\alpha$ for $i=1,2,5$ and $6$, and $\gamma=\alpha$ for $i=3$ and $4$, showing that
	\begin{eqnarray*}
		|M_{n,i}| = o\left( \sum^n_{j=1} j^{-2\alpha}\right),\qquad i=1,\dots,6.
	\end{eqnarray*}
	The control of $Q_{n,1}$ and $Q_{n,2}$ is rather lengthy but quite elementary and requires to take $k$ different from $-\frac{5\pi}{8},-\frac{3\pi}{4}$ and $-\frac{7\pi}{8}$. We defer it to Lemma \ref{thm:control-phases} in Appendix \ref{sec:control-phases}.
	Finally, to estimate the error term $\K_{\omega}(j)$, we use the bound $|\log(1+\varepsilon)-\varepsilon+\frac12 \varepsilon| \leq C_1 \varepsilon^3$ for some $C_1$ which, together with \textbf{(A4)}, yields
	\begin{eqnarray*}
		|\K_{\omega}(j)| \leq C_1 |\Gamma_{\omega}(j)|^3 \leq C_2 n^{-(2\alpha+3\varepsilon)},
	\end{eqnarray*}
	for some $C_2=C_2(E,\omega)\in(0,\infty)$, $\p$-almost surely.
	This shows that
	\begin{eqnarray*}
		\sum^n_{j=1} |\K_{\omega}(j)|
		=
		o\left( \sum^n_{j=1} j^{-2\alpha}\right).
	\end{eqnarray*}
\end{proof}

\subsection{Averaged asymptotics}\label{sec:asymptotics-TM-l1}

We now present the basic estimates that will be used in our proof of dynamical localization.
Let ${\bf T}_{\omega,[u,n]}(E)=T_{\omega,n-1}(E)\cdots T_{\omega,u}(E)$ be the truncated transfer matrix.
\begin{corollary}\label{thm:integrated-lyapunov}
	Let $0<\alpha\leq \tfrac12$.
	Assume \textbf{(A1)}-\textbf{(A3a)} and \textbf{(A5)}.
	 Let $I\subset \mathring\Sigma$ be a compact interval. Then one has
	\begin{eqnarray}\label{log T}
		\lim_{\substack{u,n\to\infty \\ u<n}}
		\frac{\esp\left[ \log \| {\bf T}_{\omega,[u,n]}(E) \varphi_0\|\right]}{\sum^{n-1}_{j=l}j^{	-2\alpha}}
		=
		\frac{\lambda^2(p_1^2+p_2^2)}{8\sin^2 k},
	\end{eqnarray}
	uniformly over $\|\varphi_0\|=1$ and $E\in I$ corresponding to values of $k$ different from $-\frac{5\pi}{8},-\frac{3\pi}{4}$ and $-\frac{7\pi}{8}$.
\end{corollary}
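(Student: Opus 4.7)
The plan is to adapt the martingale decomposition from the proof of Theorem \ref{thm:lyapunov-exponents}, now exploiting the key simplification that the martingale terms vanish identically upon taking expectation. The main new ingredient will be making every remaining estimate uniform in $E\in I$ (avoiding the three exceptional values of $k$), in $\|\varphi_0\|=1$, and in the starting time $u$.

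First, I would pass to Pr\"ufer coordinates by applying Lemma \ref{thm:equivalence-systems} to the recursion started at time $u$ with $\Phi_u=\varphi_0$: this gives
\begin{equation*}
c(I)\, R_n^2 \;\leq\; \|\mathbf{T}_{\omega,[u,n]}(E)\varphi_0\|^2 \;\leq\; C(I)\, R_n^2,
\end{equation*}
with $R_u\asymp 1$, the constants uniform in $E\in I$ (since $\sin^2(2k)$ is bounded below on $I$) and in $\|\varphi_0\|=1$. It therefore suffices to analyse $\esp[\log R_n^2]-\log R_u^2$. Next, I would reproduce the logarithmic expansion carried out in the proof of Theorem \ref{thm:lyapunov-exponents}, but summing from $j=u$ to $n-1$. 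With $\mathcal{G}_n=\sigma(\vo(j),\vt(j+1):u\leq j\leq n)$, the six summands $M^{[u,n]}_k$ are $\mathcal{G}_n$-martingales null at time $u$ (by the measurability of Pr\"ufer variables recorded in Remark \ref{rk:measurability-s1}), so they drop out entirely upon taking expectation. What remains is the deterministic main term $\tfrac{p_1^2+p_2^2}{4\sin^2(2k)}\sum_{j=u}^{n-1}\lambda^2 a_j^2$, the oscillatory sums $\esp[Q^{[u,n]}_1+Q^{[u,n]}_2]$, and a higher-order remainder $\sum_{j=u}^{n-1}\esp[\K_\omega(j)]$. Dividing by $\sum_{j=u}^{n-1}j^{-2\alpha}$ and using $a_j^2\sim j^{-2\alpha}$, the main term produces the announced constant, after accounting for the factor $1/2$ between $\log R_n^2$ and $\log\|\mathbf{T}_{\omega,[u,n]}(E)\varphi_0\|$.

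The remainder is controlled via $|\log(1+x)-x+\tfrac12 x^2|\leq C|x|^3$ together with hypothesis \textbf{(A5)}: the leading contribution to $\esp[|\K_\omega(j)|]$ comes from terms of the form $\esp[|V_{\omega,i}(j)|^3]\leq C j^{-(2\alpha+\varepsilon)}$, yielding a total contribution that is $o\bigl(\sum_{j=u}^{n-1} j^{-2\alpha}\bigr)$, uniformly in the relevant parameters. Lower-order cross terms in $\Gamma_\omega(j)^3$ are absorbed using H\"older and the same moment bound.

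The main obstacle is the uniform control of the oscillatory sums $\esp[Q^{[u,n]}_i]$. Lemma \ref{thm:control-phases} provides a pathwise estimate that is pointwise in $E$ and relies on an Abel-summation argument exploiting Diophantine properties of $k$ (which is precisely what forces the exclusion of $k\in\{-\tfrac{5\pi}{8},-\tfrac{3\pi}{4},-\tfrac{7\pi}{8}\}$). Promoting it to a bound which is uniform on compact $E$-subsets of $\mathring\Sigma$ avoiding these three values, uniform in the starting time $u$, and independent of the initial phase $\bar\theta_u$ inherited from $\varphi_0$, amounts to verifying that the Diophantine estimates of that lemma are stable under small perturbations of $k$ and that the telescoping is not disrupted by an arbitrary starting phase; this is where most of the technical work would lie. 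Assembling these four pieces then gives the claimed limit.
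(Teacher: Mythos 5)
Your overall scheme is the one the paper follows: pass to Pr\"ufer radii via Lemma \ref{thm:comparison}, take expectations in the decomposition \eqref{eq:martingale-decomposition} (started from $u$) so that the six martingales drop out, recover the main term from $\esp[V_{\omega,i}(j)^2]=\lambda^2 a_j^2$, and control the remainder and the oscillatory sums separately, using \textbf{(A5)} for the cubic remainder. However, your diagnosis of the difficulty in the oscillatory sums $\esp[Q^{[u,n]}_i]$ is off, and the step you would actually need is missing from your plan.

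You say the bulk of the work lies in ``verifying that the Diophantine estimates of Lemma \ref{thm:control-phases} are stable under small perturbations of $k$ and that the telescoping is not disrupted by an arbitrary starting phase.'' Neither of these is where the obstruction sits: the statement of Lemma \ref{thm:control-phases} already asserts the convergence is uniform over $\theta_0\in[0,2\pi)$ and over $E$ in compact subsets of $\mathring\Sigma$ avoiding $k\in\{-\tfrac{5\pi}{8},-\tfrac{3\pi}{4},-\tfrac{7\pi}{8}\}$ (the input from \cite[Lemma 8.5]{KLS} only needs $-8k\notin\pi\Z$, and the sequence $(q_l)$ can be taken independent of the initial phase). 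What is genuinely nontrivial is that Lemma \ref{thm:control-phases} is a \emph{pathwise} ($\p$-a.s.) statement: the bound \eqref{eq:small-phases} holds only for $n\geq n^*(\omega)$ with a random threshold $n^*(\omega)$ coming from \textbf{(A4)}, and the implicit constant $c_0(\omega)$ is random. To upgrade to the $\esp[\,\cdot\,]$-statement you need, one must control the contribution of the random initial segment $1\le j<\tau(\omega)$, i.e.\ show
\begin{equation*}
\esp\left[\sum_{j=1}^{\tau(\omega)} j^{-2\alpha}\right]<\infty,
\end{equation*}
which is the content of Lemma \ref{thm:control-phases-integrated} and is precisely where \textbf{(A5)} is used (via Borel--Cantelli and a Markov bound giving superexponential decay of $\p[\tau(\omega)=k]$). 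Without this, your proof as written does not close: the a.s.\ convergence in Lemma \ref{thm:control-phases} gives no information about $\esp[Q^{[u,n]}_i]$. I would replace your paragraph on the oscillatory sums by this tail-expectation estimate, and note that once it is in hand the uniformity over $u$ and over $\|\varphi_0\|=1$ follows because the Pr\"ufer phase $\theta_0$ entering Lemma \ref{thm:control-phases} is arbitrary and the comparison constants in Lemma \ref{thm:comparison} depend on $I$ only.
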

\begin{proof}
	Observing that all the martingale terms in \eqref{eq:martingale-decomposition} have $0$ expected value, we get
	\begin{eqnarray*}
		\esp\left(\log \frac{R_n^2}{R_u^2}\right)
		&=&
		\frac{p_2^2}{4\sin^2(2k)}\sum^{n-1}_{j=u}\esp[\vo(j)^2]	
		+
		\frac{p_1^2}{4\sin^2(2k)}\sum^{n-1}_{j=u}\esp[\vt(j+1)^2]
		\\
		&&
		+
		\esp\left[Q_{u,n,1}+Q_{u,n,2}\right]u
		+
		\sum^n_{j=u} \esp\left[ \K_{\omega}(j)\right],
	\end{eqnarray*}
	where $Q_{u,n,1}$ and $Q_{u,n,2}$ are defined as $Q_{n,1}$ and $Q_{n,2}$ with the sum starting from $u$ instead of $1$. The error terms $\K_{\omega}(j)$ can be controlled as in the proof of Proposition \ref{thm:lyapunov-exponents}: there exists $C_1>0$ such that
	\begin{eqnarray*}
		\esp\left[ |\K_{\omega}(j)|\right]
		\leq
		C_1  \esp\left[ |\Gamma_{\omega}(j)|^3\right]
		\leq
		C_2 n^{-(2\alpha+3\epsilon)},
	\end{eqnarray*}
	for some $C_2=C_2(I)\in(0,\infty)$
	in virtue of \textbf{(A5)}.
	
	The control of the terms $Q_{u,n,1}$ and $Q_{u,n,2}$ is deferred to Lemma \ref{thm:control-phases-integrated} in Appendix \ref{sec:control-phases} and is uniform as well. The corollary then follows from Lemma \ref{thm:comparison}.
\end{proof}
We collect two non-asymptotic bounds in the lemma below.
\begin{lemma}\label{thm:bounds-on-Tmn}
	Let $0<\alpha\leq \tfrac12$.
	Assume \textbf{(A1)}-\textbf{(A3a)} and \textbf{(A5)}.
	Let $I\subset\mathring\Sigma$ be a compact interval. Then for all $\beta'$ such that $0<\beta'<\displaystyle\inf_{E\in I}\beta(\lambda,E)$, there exists $n_0=n_0(I)$ so that one has
\begin{eqnarray}
		\label{eq:lower-bound-Tmn}
		\esp\left[ \log \|T_{\omega,ln_0 }(E)\cdots T_{\omega,(l-1)n_0+1}(E)\varphi_0\|\right]
		&\geq&
		\beta'
		  \sum^{ln_0}_{j=(l-1)n_0+1} \frac{1}{j^{2\alpha}},
	\end{eqnarray}
	for all $l\geq 1$, $\|\varphi_0\|=1$ and $E\in I$.
	\newline
	Furthermore, there exists a constant $C=C(I)$ such that
	\begin{eqnarray}
		\label{eq:upper-bound-Tmn}
		\esp\left[\left( \log \|T_{\omega,ln_0 }(E)\cdots T_{\omega,(l-1)n_0+1}(E)\varphi_0\| \right)^4\right]
		&\leq&
		C
		\sum^{ln_0}_{j=(l-1)n_0+1} \frac{1}{j^{2\alpha}},
	\end{eqnarray}
	for all $l\geq 1$, $\|\varphi_0\|=1$ and $E\in I$.
\end{lemma}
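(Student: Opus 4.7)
Both bounds are derived from the Pr\"ufer expansion of Section \ref{sec:Prufer} applied to the block $[u,n]=[(l-1)n_0+1,\,ln_0+1]$ of length $n_0$. Setting $\Phi_u=\varphi_0$, Lemma \ref{thm:equivalence-systems} provides the identification
\[
\log\|T_{\omega,ln_0}\cdots T_{\omega,u}\,\varphi_0\|^{2}
=\log(R_n^{2}/R_u^{2})+\delta_{u,n}(\omega,E),
\]
with $|\delta_{u,n}|$ bounded uniformly in $\omega$, $E\in I$ and $\|\varphi_0\|=1$, so the proof reduces to estimates on $\esp[\log(R_n^2/R_u^2)]$ and $\esp[(\log(R_n^2/R_u^2))^4]$.

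For the lower bound, I would take expectations in the decomposition from the proof of Theorem \ref{thm:lyapunov-exponents}, killing the six martingales $M_{n,k}-M_{u,k}$. Using $n^\alpha a_n\to 1$ to match the deterministic main term with $2\beta(\lambda,E)\sum_{j=u}^{n-1}j^{-2\alpha}(1+o(1))$, the uniform bound $|\esp[Q_{u,n,i}]|=o\bigl(\sum_{j=u}^{n-1}j^{-2\alpha}\bigr)$ from Lemma \ref{thm:control-phases-integrated}, and the estimate $|\esp[\K_\omega(j)]|\leq Cj^{-2\alpha-\varepsilon}$ coming from \textbf{(A5)}, one would arrive at
\[
\esp\bigl[\log\|T_{\omega,ln_0}\cdots T_{\omega,u}\,\varphi_0\|\bigr]
\geq
\beta''\sum_{j=(l-1)n_0+1}^{ln_0}j^{-2\alpha}-C_0(I),
\]
for any $\beta''\in(\beta',\inf_{E\in I}\beta(\lambda,E))$ with a uniform $C_0(I)$. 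Choosing $n_0=n_0(I,\beta')$ large enough absorbs $C_0(I)$ into the gap $(\beta''-\beta')\sum_{j}j^{-2\alpha}$ for the initial blocks where this sum is bounded below, while the large-$l$ regime follows from the asymptotic ratio supplied by Corollary \ref{thm:integrated-lyapunov}.

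For the fourth-moment upper bound, the plan is to apply the same decomposition $\log(R_n^{2}/R_u^{2})=\text{main}+\sum_{k=1}^{6}(M_{n,k}-M_{u,k})+Q_{u,n,1}+Q_{u,n,2}+\sum_{j=u}^{n-1}\K_\omega(j)$ and the elementary inequality $(a_1+\cdots+a_N)^{4}\leq C\sum_i a_i^4$ to reduce to a bound on the fourth moment of each constituent. Each martingale is controlled by Rosenthal's inequality $\esp[|M|^4]\leq C\bigl[(\sum_j\esp[d_j^{2}])^{2}+\sum_j\esp[d_j^{4}]\bigr]$, whose summands are estimated using \textbf{(A3a)} and \textbf{(A5)}; the main term and the $Q$- and $\K$-remainders are handled by the same tools as for the lower bound. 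Because $n_0=n_0(I)$ is fixed, each power of the block sum $\sum_{j=(l-1)n_0+1}^{ln_0}j^{-2\alpha}$ is uniformly controlled, and the higher-order terms are absorbed into a constant $C(I)$ depending on $n_0(I)$. The principal obstacle is uniformity in $\|\varphi_0\|=1$, $E\in I$ and $l\geq 1$: this is secured by the compactness of $I$ away from the band edges and the exceptional values of $k$ excluded in Theorem \ref{thm:lyapunov-exponents}, the uniform estimates of Lemma \ref{thm:control-phases-integrated}, and the choice of $n_0$ large enough to absorb the change-of-basis correction in the first few blocks.
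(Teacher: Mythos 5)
Your plan for the lower bound \eqref{eq:lower-bound-Tmn} rests on the same ingredients as the paper (the expectation of the martingale decomposition, i.e.\ Corollary \ref{thm:integrated-lyapunov}, with Lemma \ref{thm:control-phases-integrated} for the phase sums and \textbf{(A5)} for the cubic errors), but two steps do not hold as written. The absorption of a uniform constant $C_0(I)$ into the gap $(\beta''-\beta')\sum_{j=(l-1)n_0+1}^{ln_0}j^{-2\alpha}$ cannot be split into ``initial blocks where the sum is bounded below'' plus ``large $l$'': the block sums behave like $n_0^{1-2\alpha}l^{-2\alpha}$ and tend to $0$ in $l$, and in the non-asymptotic regime the terms $Q_{u,n,i}$ are only bounded by a constant \emph{multiple} of the main term (the oscillatory cancellation of Lemma \ref{thm:control-phases-integrated} is an asymptotic statement), so the intermediate inequality ``main term minus a uniform $C_0(I)$'' is not a consequence of the estimates you cite. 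The paper does not need it: it takes $n_0$ directly from the uniform convergence in Corollary \ref{thm:integrated-lyapunov} (the block starting at $u=1$ being exactly the situation of Lemma \ref{thm:control-phases-integrated}). Second, the lemma claims the bound for \emph{all} $E\in I$, whereas Corollary \ref{thm:integrated-lyapunov} and Lemma \ref{thm:control-phases-integrated} exclude the energies with $k\in\{-\tfrac{5\pi}{8},-\tfrac{3\pi}{4},-\tfrac{7\pi}{8}\}$, which $I$ may well contain; you implicitly assume $I$ avoids them. The paper closes this by continuity of $E\mapsto\esp\bigl[\log\|T_{\omega,ln_0}(E)\cdots T_{\omega,(l-1)n_0+1}(E)\varphi_0\|\bigr]$, since only finitely many energies are excluded; some such step is needed in your argument.

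The more serious gap is in the fourth-moment bound \eqref{eq:upper-bound-Tmn}. Rosenthal's inequality applied to the martingales of Theorem \ref{thm:lyapunov-exponents} requires $\esp[d_j^4]$: for $M_{n,3},M_{n,4}$ this means $\esp[V_{\omega,i}(j)^4]$, and for $M_{n,1},M_{n,2}$, whose increments are proportional to $V^2-\esp[V^2]$, it means $\esp[V_{\omega,i}(j)^8]$. Neither is available here: the lemma assumes only \textbf{(A1)}--\textbf{(A3a)} and \textbf{(A5)} (third absolute moments), and \textbf{(A3b)} is explicitly not in force. Moreover, the second-order expansion of $\log(1+\Gamma_\omega(j))$ that produces those martingales is justified through the almost-sure smallness coming from \textbf{(A4)} for large $j$, which you cannot invoke for a moment bound over the early blocks with possibly unbounded potential. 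The paper's ``crude bound by inspection of \eqref{eq:prufer-transform-s1-radii}'' avoids both problems: since $M_{\omega,j}$ is unimodular, $|\log(1+\Gamma_\omega(j))|\leq 2\log\|M_{\omega,j}\|\leq 2\log\bigl(1+C(I)(|\vo(j)|+|\vt(j+1)|+|\vo(j)\vt(j+1)|)\bigr)$, and the elementary inequality $(\log(1+x))^4\leq Cx^2$ for $x\geq0$ reduces every fourth moment to second moments of the potential, giving a bound of order $j^{-2\alpha}$ per site by \textbf{(A3a)}; as the block length $n_0=n_0(I)$ is fixed, H\"older over the $n_0$ summands then yields $C(I)\sum_{j=(l-1)n_0+1}^{ln_0}j^{-2\alpha}$. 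You should replace the Rosenthal step by an argument of this type, which is what the paper means by a crude bound.
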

\begin{proof}
	From Corollary \ref{thm:integrated-lyapunov}, we can find $n_0$ large enough such that
	\begin{eqnarray*}
		\esp\left[ \log \|T_{\omega,ln_0 }(E)\cdots T_{\omega,(l-1)n_0+1}(E)\varphi_0\|\right]
		&\geq&
		\beta' \sum^{ln_0}_{j=(l-1)n_0+1} \frac{1}{j^{2\alpha}},
	\end{eqnarray*}
	for all $l\geq 1$, $\|\varphi_0\|=1$ and all $E\in I$ corresponding to values of $k$ different from $-\frac{5\pi}{8},-\frac{3\pi}{4}$ and $-\frac{7\pi}{8}$. The bound for all energies in $I$ then follows by continuity of the left-hand-side above with respect to $E$. This proves \eqref{eq:lower-bound-Tmn}. The estimate \eqref{eq:upper-bound-Tmn} is a crude $\mathbb{L}^2$ bound and follows by an inspection of the decomposition \eqref{eq:prufer-transform-s1-radii}.
\end{proof}


\section{Super-critical regime: a.c. spectrum}\label{sec:ac}

This is based on a criterion of Last and Simon \cite{LS} that relates spectral properties to transfer matrices behavior. Let $\mathcal{T}_n(E)$ denote the product of transfer matrices associated to a bounded Schr\"odinger operator $H$ on $\ell^2(\n^*)$ and consider an energy $E$.
\begin{theorem}\cite[Teorem 1.3]{LS}
	Suppose that
	\begin{eqnarray}\label{eq:LS-criterion}
		\liminf_n \int^b_a \| \mathcal{T}_n(E)\|^4dE <\infty.
	\end{eqnarray}
	Then, $(a,b)\subset \sigma(H)$ and the spectral measure is purely absolutely continuous on $(a,b)$.
\end{theorem}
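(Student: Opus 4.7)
The plan is to deduce the criterion from quantitative comparisons between the Weyl--Titchmarsh $m$-function of $H$ and the transfer matrix norms, followed by a weak-compactness argument in $L^2(a,b)$. Let $m(z)=\int (E-z)^{-1}d\mu(E)$ be the Borel transform of the spectral measure $\mu$ of $H$ at a cyclic vector (say $\delta_1$); recall that $\pi^{-1}\mathrm{Im}\,m(E+i\epsilon)dE = P_\epsilon\ast \mu$ converges weakly-$*$ to $\mu$ as $\epsilon\downarrow 0$, and that $\mu\lvert_{(a,b)}$ is purely absolutely continuous with $L^2$ density precisely when $\mathrm{Im}\,m(\cdot+i\epsilon)$ stays bounded in $L^2(a,b)$ as $\epsilon\downarrow 0$.

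The main input would be the Jitomirskaya--Last inequality: for a suitable choice $L=L(\epsilon)\sim 1/\epsilon$, there exist constants $c_1,c_2$ (depending only on the chosen cyclic vector) such that
\begin{equation*}
	\frac{c_1}{\|\mathcal{T}_{L(\epsilon)}(E)\|^{2}}
	\;\leq\;
	\mathrm{Im}\,m(E+i\epsilon)
	\;\leq\;
	c_2\,\|\mathcal{T}_{L(\epsilon)}(E)\|^{2},
\end{equation*}
uniformly in $E$ in a compact set. I would prove this as the first technical step, via the variational characterization of the Weyl disc: one truncates to $[1,L]$, writes $\mathrm{Im}\,m(E+i\epsilon)$ as the reciprocal of a weighted $\ell^2$-norm of the solutions, and uses $\det \mathcal{T}_L=1$ together with the choice of $L$ so that $\epsilon \|\mathcal{T}_L\|^2$ is of order one to identify the size of the Weyl disc radius.

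Granted the JL inequality, I would square its upper half and integrate:
\begin{equation*}
	\int_a^b \bigl(\mathrm{Im}\,m(E+i\epsilon)\bigr)^2 dE
	\;\leq\; c_2^2 \int_a^b \|\mathcal{T}_{L(\epsilon)}(E)\|^{4} dE,
\end{equation*}
and take $\liminf_{\epsilon\downarrow 0}$. By the hypothesis \eqref{eq:LS-criterion}, along a subsequence $\epsilon_n\downarrow 0$ the family $\mathrm{Im}\,m(\cdot+i\epsilon_n)$ is bounded in $L^2(a,b)$. Passing to a further subsequence, it converges weakly in $L^2(a,b)$ to some $g\in L^2(a,b)$; but the weak-$*$ limit of $\pi^{-1}\mathrm{Im}\,m(\cdot+i\epsilon)dE$ is $\mu$, so testing against $\mathcal{C}_c^\infty((a,b))$ yields $d\mu\lvert_{(a,b)}=\pi^{-1}g\,dE$, hence purely absolutely continuous.

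For the inclusion $(a,b)\subset \sigma(H)$, I would use the lower half of the JL inequality together with unimodularity $\det \mathcal{T}_L=1$ (which gives $\|\mathcal{T}_L\|\geq 1$). For any open subinterval $J\subset(a,b)$, Cauchy--Schwarz yields
\begin{equation*}
	|J|^2 \;\leq\; \Big(\int_J \|\mathcal{T}_{L(\epsilon)}(E)\|^{-2} dE\Big) \Big(\int_J \|\mathcal{T}_{L(\epsilon)}(E)\|^{2} dE\Big),
\end{equation*}
and combined with the JL lower bound this forces $\int_J \mathrm{Im}\,m(E+i\epsilon)dE$ to be bounded below by a positive constant along the chosen subsequence, whence $\mu(J)>0$. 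The main obstacle is the first step: establishing the sharp two-sided JL bound requires the delicate interplay between the Weyl disc geometry, the cocycle identity $\det \mathcal{T}_L=1$, and the matching choice $L(\epsilon)\sim 1/\epsilon$; the rest of the argument is soft functional analysis once that estimate is in hand.
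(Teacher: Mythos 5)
A preliminary remark on the comparison: the paper does not prove this statement at all; it is quoted from Last--Simon \cite[Theorem 1.3]{LS} and used as a black box in Section \ref{sec:ac}, so your proposal can only be measured against the argument in [LS] and the surrounding literature. The second half of your plan is sound and is essentially how the known proof concludes: boundedness in $L^2(a,b)$ of approximating densities along a subsequence, weak compactness, identification of the weak-$*$ limit with $\mu$ on test functions supported in $(a,b)$, and Cauchy--Schwarz together with unimodularity to get $\mu(J)>0$ for every subinterval $J$, hence $(a,b)\subset\sigma(H)$. (One small point you leave implicit: the hypothesis is a $\liminf$ over integers $n$, so you must tie the complex scales to that subsequence, i.e.\ take $\epsilon_j\sim 1/n_j$.)

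The genuine gap is the step you yourself flag as the ``main obstacle''. The two-sided bound $c_1\|\mathcal{T}_{L(\epsilon)}(E)\|^{-2}\le \mathrm{Im}\,m(E+i\epsilon)\le c_2\|\mathcal{T}_{L(\epsilon)}(E)\|^{2}$ with the energy-independent scale $L(\epsilon)\sim 1/\epsilon$ is not the Jitomirskaya--Last inequality and cannot be quoted: that inequality controls $|m(E+i\epsilon)|$ (not its imaginary part) by the ratio of truncated solution norms $\|u_2\|_{L}/\|u_1\|_{L}$, at an energy-dependent length scale defined by $\|u_1\|_{L}\|u_2\|_{L}=1/(2\epsilon)$. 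Converting it into your display at the fixed scale $\lfloor 1/\epsilon\rfloor$ requires extra control on how $\|\mathcal{T}_n(E)\|$ varies with $n$; for instance, at an eigenvalue whose eigenfunction has sparse peaks one can have $\|\mathcal{T}_{1/\epsilon}(E)\|^{2}=o(1/\epsilon)$ while $\mathrm{Im}\,m(E+i\epsilon)\gtrsim \mu(\{E\})/\epsilon$, so the claimed upper bound is not an innocent reformulation --- proving (a usable averaged form of) it is exactly where all the work lies. The proof in [LS] avoids complex energies altogether: by a Carmona-type formula the half-line spectral measure is the weak-$*$ limit of the absolutely continuous measures $\pi^{-1}\|\mathcal{T}_n(E)\widehat{\vartheta}_0\|^{-2}dE$, whose densities satisfy the elementary two-sided bound $\|\mathcal{T}_n(E)\|^{-2}\le\|\mathcal{T}_n(E)\widehat{\vartheta}_0\|^{-2}\le\|\mathcal{T}_n(E)\|^{2}$ by unimodularity; the $L^4$ hypothesis then gives exactly the $L^2$ boundedness you want along the subsequence, and your soft functional-analytic conclusion applies verbatim. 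So the architecture of your argument is right, but the load-bearing estimate should either be proved or replaced by this real-energy identity; as written, the proposal has a hole at its crux.
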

 The criterion is valid for any power larger than 2. There is nothing special about the power 4 except that it makes the computations easier. 
 
 In the following, we write $\mathbf{T}_{\omega,n}(E)$, $\Psi_n(E)$ and $R_n(E)$ when we want to emphasise the dependence on the energy $E$. 
\begin{proof}[Proof of Theorem \ref{thm:spectrum}, Part 1 (super-critical case)]
Let $\theta_0$ be any initial angle and let $[a,b]\subset \mathring\Sigma$. 
According to Lemma \ref{thm:comparison}, it is enough to show
\begin{eqnarray}\label{eq:bound-int-rho}
	\liminf_n \esp\left[  \int^b_a R_n^4(E)dE\right]<\infty,
\end{eqnarray}
since Fatou's lemma yields
\begin{eqnarray*}
	\esp\left[ \liminf_n \int^b_a R_n^4(E)dE \right] \leq \liminf_n \esp\left[  \int^b_a R_n^4(E) dE\right]<\infty,
\end{eqnarray*}
which implies that \eqref{eq:LS-criterion} holds almost surely. 
Squaring \eqref{eq:prufer-transform-s1-radii}, we obtain
\begin{eqnarray*}
	R_{n+1}^4 (E)
	&=& 
	\Big{\{} 
		1 
		- \frac{2p_2}{\sin(2k)}\sin(2\bar{\theta}_n)\vo(n)
		\\
		&&
		\phantom{blablablabla}
		+\frac{2p_1}{\sin(2k)}\sin(2(\bar{\theta}_n-k))\vt(n+1)
		+ A_{\omega,n}(E) 
	\Big{\}} 
	R_n^4(E),
\end{eqnarray*}
where $A_{\omega,n}(E)$ collects all the terms of degree 1, 2 and 3 in the disorder variables. An inspection at those terms shows that there exists $c=c(a,b)\in(0,\infty)$ such that 
\begin{eqnarray*}
	\esp\left[|A_{\omega,n}(E)|\, | \mathcal{F}_{n-1}\right]
	\leq
	c
	\left(
		\esp[|V_{\omega,i}|^2]
		+
		\esp[|V_{\omega,i}|^3]
		+
		\esp[|V_{\omega,i}|^4]
	\right).
\end{eqnarray*}
Using \textbf{(A3a)} and \textbf{(A3b)}, and the simple inequality
\begin{eqnarray*}
	\esp[|V_{\omega,i}|^3]
	\leq
	\esp[|V_{\omega,i}|^2]^{\frac12}\ 
	\esp[|V_{\omega,i}|^4]^{\frac12},
\end{eqnarray*}
we conclude that 
\begin{eqnarray*}
	\esp[|A_{\omega,n}(E)|\, | \mathcal{F}_{n-1}] \leq c' n^{-2\alpha},
\end{eqnarray*}
for some $c'=c'(a,b)\in(0,\infty)$.

Recalling  that $\bar \theta_n$ is $\mathcal{F}_{n-1}$-measurable and $\vo(n)$ is independent of $\mathcal{F}_{n-1}$, centered and integrable, we get
\begin{eqnarray*}
	\esp\left[ \sin(2\bar{\theta}_n)\vo(n) \Big{|} \mathcal{F}_{n-1} \right]
	=
	\sin(2\bar{\theta}_n)
	\esp\left[ \vo(n) \right]
	=
	0.
\end{eqnarray*}
The same argument gives
\begin{eqnarray*}
	\esp\left[ \sin(2(\bar{\theta}_n-k))\vt(n+1) \Big{|} \mathcal{F}_{n-1} \right]
	=
	0.
\end{eqnarray*}
Hence, as $R_n$ is $\mathcal{F}_{n-1}$-measurable and $R_n^4$ is integrable,  we conclude that
\begin{eqnarray*}
	\esp\left[ R_{n+1}^4 (E) \Big{|} \mathcal{F}_{n-1} \right]
	=
	\esp\left[ 1 + A_{\omega,n}(E) \Big{|} \mathcal{F}_{n-1} \right]\, R_{n}^4 (E) 
	\leq
	\left(
		1 + c'n^{-2\alpha}
	\right)
	\, R_{n}^4 (E), 
\end{eqnarray*}
where we used the uniform bound on $A_{\omega,n}(E)$.  Integrating with respect to $\p$ and iterating, we obtain
\begin{eqnarray*}
	\esp\left[ R^4_{n+1}(E) \right] 
	\leq 
	\left(
		1 + \frac{c'}{n^{2\alpha} }
	\right)
	\esp\left[ R^4_{n}(E) \right] 
	\leq 
	\prod^n_{j=1}
	\left(
		1 + \frac{c'}{j^{2\alpha} }
	\right), 
\end{eqnarray*}
for all $E\in [a,b]$ and all $n\geq 1$. Since $\displaystyle \sum_j j^{-2\alpha} < \infty$ for $\alpha>\tfrac12$, the product above is bounded uniformly in $n$ and $E\in[a,b]$.  This finishes the proof.
\end{proof}


\section{Critical regime: spectral transition and transport}\label{sec:critical}



\subsection{Spectral transition}

The absence of absolutely continuous spectrum is a consequence of the following criterion of Last and Simon \cite{LS}. With the notations of the beginning of Section \ref{sec:ac}:
\begin{theorem}\cite[Theorem 1.2]{LS}
	Suppose $\displaystyle\lim_{n\to\infty} \| \mathcal{T}_n(E)\| = \infty$ for a.e. $E\in[a,b]$. Then, $\mu_{ac}([a,b])=0$, where $\mu_{ac}$ is the absolutely continuous spectral measure associated to $H$.
\end{theorem}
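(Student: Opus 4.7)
The statement is the Last-Simon criterion for absence of absolutely continuous spectrum, cited from \cite{LS}. My plan is to outline their argument, which proceeds via subordinacy theory of Gilbert-Pearson, as extended by Jitomirskaya-Last; this theory characterizes the essential support of $\mu_{ac}$ in terms of the growth of transfer matrices.

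I would first introduce the Weyl-Titchmarsh $m$-function $m(z)=\langle \delta_1,(H-z)^{-1}\delta_1\rangle$ for $z\in\mathbb{C}^+$. This is a Herglotz function whose boundary values give the Radon-Nikodym derivative of $\mu_{ac}$:
$$\frac{d\mu_{ac}}{dE}(E)=\frac{1}{\pi}\lim_{\varepsilon\downarrow 0}\mathrm{Im}\,m(E+i\varepsilon),$$
for Lebesgue-a.e. $E\in\mathbb{R}$. The goal is therefore to show that this limit vanishes for a.e. $E\in[a,b]$; integrating then yields $\mu_{ac}([a,b])=0$.

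The heart of the argument is the Jitomirskaya-Last comparison inequality, which for each $E$ and $\varepsilon>0$ (with length $L\sim 1/\varepsilon$ appropriately chosen) controls the boundary values of $m$ by the ratio of truncated norms $\|u\|_L/\|v\|_L$ of two solutions of $Hu=Eu$ with complementary boundary conditions. Since $\|u\|_L^2+\|v\|_L^2$ is comparable to $\|\mathcal{T}_L(E)\|^2$ (an estimate analogous to Lemma \ref{thm:comparison} of the present paper), one derives that the essential support of $\mu_{ac}$ is contained, up to Lebesgue-null sets, in $\{E:\liminf_{n\to\infty}\|\mathcal{T}_n(E)\|<\infty\}$. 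Under the standing hypothesis $\lim_{n\to\infty}\|\mathcal{T}_n(E)\|=\infty$ for a.e. $E\in[a,b]$, the intersection of this essential support with $[a,b]$ has Lebesgue measure zero, and since $\mu_{ac}$ is absolutely continuous with respect to Lebesgue measure, we conclude $\mu_{ac}([a,b])=0$.

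The hard part of this plan, if carried out from first principles, is the Jitomirskaya-Last inequality itself: its proof rests on Green's identities for finite-volume truncations of $H$ together with careful control of the solutions with orthogonal boundary conditions entering the definition of $m$. In the present work, however, the theorem is invoked as a black box, and the substance lies in verifying its hypothesis $\|\mathcal{T}_n(E)\|\to\infty$ for a.e. $E$ on the relevant interval, which follows from the martingale-based Lyapunov asymptotics of Theorem \ref{thm:lyapunov-exponents}: in the critical regime $\alpha=1/2$ one obtains $\log\|\mathbf{T}_{\omega,n}(E)\|/\log n\to\beta(E,\lambda)>0$ for almost every $\omega$ and every $\lambda>0$, so the transfer matrix norms grow polynomially in $n$ and the hypothesis of the Last-Simon criterion is met.
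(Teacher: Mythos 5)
Your proposal is correct, and you correctly observe the key structural point: the paper does not prove this statement at all, it is cited verbatim as \cite[Theorem 1.2]{LS} and used as a black box, with all of the work going into verifying its hypothesis. Your sketch of the subordinacy-theoretic route (Herglotz boundary values of the $m$-function, the Jitomirskaya--Last comparison between $\operatorname{Im}m(E+i\varepsilon)$ and ratios of truncated solution norms, the characterization of the essential support of $\mu_{ac}$ as a subset of $\{E:\liminf_n\|\mathcal{T}_n(E)\|<\infty\}$ up to Lebesgue-null sets) is a fair summary of how Last--Simon's result is established, but it is extraneous to the paper itself.

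One small point worth flagging in your last paragraph: Theorem \ref{thm:lyapunov-exponents} gives, \emph{for each fixed $E$}, that $\log\|\mathbf{T}_{\omega,n}(E)\|/\log n\to\beta(E,\lambda)>0$ on a full-measure set $\Omega_E$ depending on $E$. To obtain the hypothesis of the Last--Simon criterion, namely a single full-measure set of $\omega$ on which $\|\mathbf{T}_{\omega,n}(E)\|\to\infty$ for Lebesgue-a.e.\ $E$, one must swap the order of quantifiers by Fubini, which the paper does explicitly right after stating the theorem. Your phrase ``for almost every $\omega$ and every $\lambda>0$'' glosses over this; you should make the Fubini step explicit.
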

It follows from Theorem \ref{thm:lyapunov-exponents} that for almost every $|E| \in (m, \sqrt{m^2}+4)$, $\displaystyle\lim_{n\to\infty} \| \prodtransn(E)\| = \infty$ for $\p$-almost every $\omega$. By Fubini's theorem, we conclude that, $\p$-almost surely,  $\displaystyle\lim_{n\to\infty} \| \prodtransn(E)\| = \infty$ for almost every $|E| \in (m, \sqrt{m^2}+4)$ and we can apply the above theorem.
Now, to determine the nature of the spectrum, it will be enough to determine whether the generalized eigenfunctions are $\ell^2$ or not. For an angle $\vartheta$, we denote 
$\displaystyle \widehat{\vartheta}
=
\begin{pmatrix}
	\cos \vartheta \\ \sin \vartheta
\end{pmatrix}
$.
\begin{proposition}\label{thm:decay-eigenfunctions}
	Let $\alpha=\frac12$, assume \textbf{(A1)}-\textbf{(A3a)} and \textbf{(A4)}, and let $k\neq -\frac{\pi}{2}, -\frac{3\pi}{4}, -\pi$. Then, for $\p$-almost every $\omega$, there exists an initial angle $\vartheta_0 = \vartheta_0(\omega)$ such that
	\begin{eqnarray*}
		\lim_{n\to\infty} \frac{\log \| \mathbf{T}_{\omega,n}(E) \widehat\vartheta_0\|}{\log n} = -\beta(\lambda,E),\quad \p-a.s.,
	\end{eqnarray*}
	for all $\lambda > 0$.
\end{proposition}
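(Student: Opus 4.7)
\emph{Proof proposal.} The plan is to construct $\vartheta_0$ as the ``stable direction'' of the transfer-matrix cocycle $(\prodtransn)_n$, obtained as the limit of finite-time minimizers of the Pr\"ufer radius $R_n(\vartheta)$. The essential inputs are (i) unimodularity of $\transn$, giving rise to a symplectic (area-preserving) structure for the Pr\"ufer cocycle, and (ii) the polynomial Lyapunov behavior $\|\prodtransn\|=n^{\beta+o(1)}$ supplied by Theorem \ref{thm:lyapunov-exponents}. A direct computation from \eqref{transfer} yields $\det \transn=1$, and since $\det \ppru_n=-\sin(2k)$ is a nonzero $n$-independent constant, the Pr\"ufer matrices satisfy $\det M_{\omega,n}=1$ as well. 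Area preservation of $(\mathbf{M}_{\omega,n})_n$ translates into the Wronskian identity
\begin{equation*}
R_n(\vartheta_1)\,R_n(\vartheta_2)\,\sin(\theta_n(\vartheta_2)-\theta_n(\vartheta_1))=\sin(\vartheta_2-\vartheta_1),
\end{equation*}
valid for all pairs of initial angles.

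For each $n\geq 1$, let $\vartheta_n^{\ast}$ minimize $\vartheta\mapsto R_n(\vartheta)$ over $\mathbb{R}/\pi\mathbb{Z}$. Unimodularity forces $R_n(\vartheta_n^{\ast})=1/\|\mathbf{M}_{\omega,n}\|$, and the telescoping $\mathbf{M}_{\omega,n}=\ppru_n^{-1}\,\prodtransnm\,\ppru_1$ combined with Theorem \ref{thm:lyapunov-exponents} and Lemma \ref{thm:comparison} gives $R_n(\vartheta_n^{\ast})=n^{-\beta+o(1)}$ almost surely. To prove that $\vartheta_n^{\ast}$ converges, I would work along the geometric subsequence $n_k=2^k$ and apply the Wronskian at time $n_{k+1}$ to the pair $(\vartheta_{n_k}^{\ast},\vartheta_{n_{k+1}}^{\ast})$: decomposing $\mathbf{M}_{\omega,n_{k+1}}$ as the product of $\mathbf{M}_{\omega,n_k}$ with the sub-cocycle on $[n_k+1,n_{k+1}]$, whose norm is bounded by $(n_{k+1}/n_k)^{\beta+o(1)}=O(1)$ via Theorem \ref{thm:lyapunov-exponents} applied to the shifted cocycle, yields
\begin{equation*}
|\vartheta_{n_k}^{\ast}-\vartheta_{n_{k+1}}^{\ast}|\leq C\,n_k^{-2\beta+o(1)} = C\cdot 2^{-2\beta k + o(k)},
\end{equation*}
which is summable in $k$ for any $\beta>0$. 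A parallel estimate controls the block oscillations $\sup_{n_k\leq n\leq n_{k+1}}|\vartheta_n^{\ast}-\vartheta_{n_k}^{\ast}|$ by the same quantity, so $\vartheta_n^{\ast}\to\vartheta_0$ for the full sequence. The Lipschitz bound $|R_n(\vartheta_0)-R_n(\vartheta_n^{\ast})|\leq \|\mathbf{M}_{\omega,n}\|\cdot|\vartheta_0-\vartheta_n^{\ast}|\leq C\,n^{\beta+o(1)}\cdot n^{-2\beta+o(1)}=n^{-\beta+o(1)}$ transfers the decay to $\vartheta_0$, and the matching lower bound $R_n(\vartheta_0)\geq 1/\|\mathbf{M}_{\omega,n}\|\geq n^{-\beta-o(1)}$ follows from unimodularity. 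Lemma \ref{thm:equivalence-systems} gives the claimed limit.

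The main obstacle is ensuring convergence of $\vartheta_n^{\ast}$ when the Lyapunov exponent $\beta$ is small ($\beta\leq 1/2$): the naive consecutive estimate $|\vartheta_n^{\ast}-\vartheta_{n+1}^{\ast}|\leq Cn^{-2\beta}$ is not summable, and a direct SVD perturbation argument also falls short since $\|T_{\omega,n}-I\|=O(n^{-\alpha})$ gives only $n^{-1/2}$ per step. The geometric-subsequence workaround is the critical ingredient: it exploits the multiplicative (rather than additive) structure of the Wronskian over widely separated times, where the $2^{-2\beta k}$ rate is summable regardless of how small $\beta$ is. This hinges on Theorem \ref{thm:lyapunov-exponents} being applicable to sub-cocycles starting at arbitrary times, which follows by inspecting its martingale proof: the estimates on \eqref{eq:martingale-decomposition} are translation-invariant up to replacing the normalizing sum $\sum_j j^{-2\alpha}$ with $\sum_{j=n_k}^{n_{k+1}}j^{-2\alpha}\sim \log(n_{k+1}/n_k)$. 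The exclusions $k\in\{-\pi/2,-\pi\}$ are band edges where $\sin(2k)=0$ and the Pr\"ufer transform degenerates, and $k=-3\pi/4$ is inherited from Theorem \ref{thm:lyapunov-exponents}.
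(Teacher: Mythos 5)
Your proposal reconstructs, by hand, the mechanism behind Lemma \ref{thm:eigendirection} (the paper's restatement of \cite[Lemma 8.7]{KLS}): $\vartheta_0$ is built as the limit of the contracting directions $\vartheta_n^*$, and your Wronskian identity combined with the geometric-subsequence trick is a concrete version of hypothesis (ii) of that lemma. The intuition is sound, but there is a genuine gap in the ingredient you describe as ``Theorem \ref{thm:lyapunov-exponents} applied to the shifted cocycle.'' At $\alpha=\frac12$ with $n_{k+1}/n_k=2$, the normalizing sum $\sum_{j=n_k}^{n_{k+1}}j^{-2\alpha}\approx\log 2$ is a bounded constant, and this is precisely the regime where the a.s.\ convergence in Theorem \ref{thm:lyapunov-exponents} is vacuous: Lemma \ref{thm:martingales} controls the martingale by a power of $s_n$, which says nothing when $s_n$ stays bounded. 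You therefore cannot deduce the uniform-in-$k$ bound $\|\mathbf{M}_{\omega,[n_k,n_{k+1}]}\|=n_k^{o(1)}$ (nor the maximal version over $n\in[n_k,n_{k+1}]$ that your ``parallel estimate'' needs) from that theorem; the submultiplicative bound $\|\mathbf{M}_{\omega,[n_k,n_{k+1}]}\|\le\|\mathbf{M}_{\omega,n_{k+1}}\|\,\|\mathbf{M}_{\omega,n_k}^{-1}\|$ only gives $n_k^{2\beta+o(1)}$, which turns your Wronskian estimate into the useless $|\sin(\vartheta_{n_{k+1}}^*-\vartheta_{n_k}^*)|\le n_k^{o(1)}$. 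Closing this requires a new block-by-block Azuma/Borel--Cantelli argument (variance $O(1)$ per block, deviation $O(\sqrt{\log k})=o(\log n_k)$ a.s.), which is absent from your write-up and from the paper.

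The paper avoids any sub-cocycle control. It tracks the ratio $r_n=R_n^{(1)}/R_n^{(2)}$ of Pr\"ufer radii for two orthogonal initial conditions, deduces from the Wronskian $R_n^{(1)}R_n^{(2)}\sin(2k)\sin(\theta_n^{(1)}-\theta_n^{(2)})=1$ together with the a.s.\ growth of both radii that $|\sin(\theta_n^{(1)}-\theta_n^{(2)})|=n^{-2\beta+o(1)}$, and expands $\log r_n$ into a martingale whose increments inherit this small factor. The resulting variance is summable for a reason unrelated to block length, so Lemma \ref{thm:martingales}(ii) directly supplies both the convergence $r_n\to r_\infty$ and the rate $|r_n-r_\infty|\lesssim n^{-2\beta+\epsilon}$, and Lemma \ref{thm:eigendirection}(2) converts this into the stated decay (also yielding $\vartheta_\infty\neq 0,\frac{\pi}{2}$, a nondegeneracy your construction never addresses). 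If you wish to pursue your version, you must prove the block maximal estimate above; otherwise the $r_n$ route is cleaner, since the smallness comes from the decaying phase difference rather than from the block length.
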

The proof of this proposition is given in details in Appendix \ref{app:unimodular}. We are now ready to prove Part $(ii)$ in Theorem \ref{thm:spectrum}.
\begin{proof}[Proof of Theorem \ref{thm:spectrum}, Part $(ii)$]
	We have just established that there is no absolutely continuous spectrum.
	From Proposition \ref{thm:decay-eigenfunctions}, we see that the generalized eigenfunction corresponding to $E$ and $\lambda$ are $\ell^2$ if and only if $\beta>\frac12$ which can be seen to be equivalent to
	\begin{eqnarray}\label{eq:l2-criterion}
		\lambda^2 > \frac12 \frac{(E^2-m^2)(m^2+4-E^2)}{m^2+E^2} =: F(E).
	\end{eqnarray}
	This function $F$ satisfies $F(m)=F(\sqrt{m^2+4})=0$ and reaches its maximum $\lambda^*(m)$ at a unique point $E^*(m)\in (m,\sqrt{m^2+4})$. In particular, if $\lambda>\lambda^*(m)$ then the $\ell^2$-condition \eqref{eq:l2-criterion} is always fulfilled and the corresponding generalized eigenvalue is a bona fide eigenvalue. If $0<\lambda\leq \lambda^*(m)$, there exist two values $m < E^*_-(\lambda,m) < E^*_+(\lambda,m) < \sqrt{m^2+4}$ such that the criterion \eqref{eq:l2-criterion} is met. Note that $E^*_{\pm}(\lambda,m)$ are the two roots of the equation $\lambda^2 = F(E)$.	
	The result then follows from the theory of rank one perturbations \cite{Si95}.
	In all the cases above, the spectrum is pure point. Otherwise, it is a fortiori singular continuous.
\end{proof}

\subsection{Lower bounds on eigenfunctions and transport}


The next lemma provides a lower bound on any non-trivial solution of $ \D\Phi = E\Phi$ for $\alpha=\frac12$ and $\lambda> 0$, uniformly in $E$ ranging over compact intervals of $\sigma_{pp}(\D)$.
\begin{lemma}\label{thm:lower-bound-eigenfunctions}
	Let $\alpha=\frac12$ and fix $\lambda> 0$. Assume \textbf{(A1)}-\textbf{(A3a)}.
	For $E\in \sigma_{pp}(\D)$, define $\Phi_{\omega,E}=(\Phi_{\omega,E,n})_n\in (\C^2)^{\n^*}$ as the solution of 
	$\D\Phi = E\Phi$ with a possibly random initial condition $\widehat{\vartheta}_0$.
	Then, for each compact interval $I\subset \sigma_{pp}(\D)$, there exists a deterministic constant $\kappa=\kappa(I)>0$ such that, for $\p$-almost every $\omega$, there exists $c_{\omega}=c_{\omega}(I)>0$ such that
	\begin{eqnarray}\label{lower bnd eigenfunction}
		\| \Phi_{\omega,E,n}\| \geq c_{\omega} n^{-\kappa}, \quad \forall \, n\in \n^*.
	\end{eqnarray}	 
\end{lemma}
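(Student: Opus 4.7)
The plan is to reduce the problem to a polynomial lower bound on the Pr\"ufer radius $R_n$ and exploit the multiplicative recursion \eqref{eq:prufer-transform-s1-radii}. Since $I\subset\mathring\Sigma$ is compact, Lemma \ref{thm:equivalence-systems} yields $\|\Phi_{\omega,E,n}\|^2\geq c(I)\,R_n^2(\omega,E)$ with $c(I)=\inf_{E\in I}\sin^2(2k)/(2E)>0$, and the unit initial condition $\widehat{\vartheta}_0$ makes $R_1$ comparable to $1$ (the change of basis $\ppru_1$ having eigenvalues uniformly bounded away from $0$ and $\infty$ on $I$). Hence it suffices to prove $\log R_n^2 \geq -2\kappa(I)\log n - O(1)$ almost surely for some deterministic $\kappa(I)>0$.

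Taking logarithms in $R_{n+1}^2 = (1+\Gamma_{\omega}(n))R_n^2$ and Taylor-expanding $\log(1+x)=x-\tfrac12 x^2 + O(x^3)$, I would re-use the decomposition already obtained in the proof of Theorem \ref{thm:lyapunov-exponents}:
\begin{equation*}
\log R_n^2 \,=\, \sum_{j=1}^{n-1}\frac{p_2^2\,\esp[\vo(j)^2]+p_1^2\,\esp[\vt(j+1)^2]}{4\sin^2(2k)} \,+\, Q_{n,1}+Q_{n,2}\,+\,\sum_{i=1}^{6}M_{n,i}\,+\,\sum_{j=1}^{n}\K_{\omega}(j),
\end{equation*}
but now aiming only for a lower bound rather than the exact asymptotic limit. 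The leading deterministic sum is non-negative, hence automatically $\geq 0$. For the phase-dependent deterministic terms, the crude envelope $|Q_{n,i}|\leq C(I)\sum_{j\le n}\esp[\vo(j)^2+\vt(j+1)^2]\leq C'(I)\log n$ follows from $|\cos|,|\sin|\leq 1$ together with $\esp[V_{\omega,i}(j)^2]=\lambda^2 a_j^2 \sim \lambda^2/j$ at $\alpha=\tfrac12$, giving $Q_{n,1}+Q_{n,2}\geq -C''(I)\log n$. Under assumption \textbf{(A4)}, $|\K_{\omega}(j)|\leq C_1|\Gamma_{\omega}(j)|^3\leq C(I,\omega)\,j^{-1-3\varepsilon}$ is summable, so $\sum_{j}\K_{\omega}(j)=O(1)$ almost surely. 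Each martingale $M_{n,i}$ has predictable quadratic variation of order $\sum_{j\le n}\esp[V^2]=O(\log n)$, so the martingale law of the iterated logarithm gives $|M_{n,i}|=O(\sqrt{\log n\,\log\log\log n})=o(\log n)$ almost surely.

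Combining these four bounds leaves $\log R_n^2 \geq -C(I)\log n - o(\log n) - O(1)$ almost surely for each fixed $E\in I$, and the polynomial lower bound follows with, say, $2\kappa(I) = C(I)+1$. The main obstacle will be uniformity over the random set $I\cap\sigma_{pp}(\D)$ of eigenvalues: the deterministic bounds on the main, phase and error terms are uniform in $E\in I$ and in the phase variables $\bar\theta_j$, but the martingale control is naturally pointwise in $E$, and $\sigma_{pp}(\D)$ is itself $\omega$-dependent. I would address this by combining the unimodularity identity $\|\Phi_{\omega,E,n}\|\geq \|\Phi_{\omega,E,1}\|/\|\mathbf{T}_{\omega,n-1}(E)\|$ with Lemma \ref{thm:comparison}: applying the decomposition above to $R_n(\theta_1)$ and $R_n(\theta_2)$ at two fixed \emph{deterministic} initial angles produces a polynomial upper bound on $\|\mathbf{T}_{\omega,n-1}(E)\|$ valid on a single event of full measure, which can then be made uniform in $E\in I$ via a continuity-and-covering argument, thereby yielding the uniform lower bound on $\|\Phi_{\omega,E,n}\|$.
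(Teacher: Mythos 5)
Your first approach has a measurability gap that prevents it from working as stated. The martingale structure behind the decomposition of $\log R_n^2$ in Theorem \ref{thm:lyapunov-exponents} rests on the Pr\"ufer phases $\bar\theta_j$ being $\mathcal{F}_{j-1}$-measurable, which requires the initial angle to be deterministic (or at least $\mathcal{F}_0$-measurable). But for $E\in\sigma_{pp}(\D)$ the eigenfunction's initial condition $\widehat{\vartheta}_0$ is precisely the angle that makes the solution $\ell^2$, so it depends on the whole potential and is anything but adapted. With such an initial condition the $M_{n,i}$ are no longer martingales, Lemma \ref{thm:martingales} and the LIL do not apply, and the best crude bound is, e.g., $|M_{n,3}|\leq C\sum_{j\leq n}|\vo(j)|=O(\sqrt{n})$ a.s.\ under \textbf{(A4)}, far larger than the $O(\log n)$ contributions of the other terms. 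So the first three paragraphs, taken literally, do not yield a polynomial lower bound.

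Your final paragraph contains the correct repair: since the transfer matrices are unimodular, $\|\Phi_{\omega,E,n}\|\geq\|\mathbf{T}_{\omega,n-1}(E)\|^{-1}$ regardless of $\widehat{\vartheta}_0$, and Lemma \ref{thm:comparison} reduces upper-bounding $\|\mathbf{T}_{\omega,n-1}(E)\|$ to controlling $R_n$ at two deterministic angles, where the martingale machinery is available. This is what the paper does, but by a simpler and sharper route: rather than re-running the log decomposition and invoking the LIL, it applies Chebyshev to the second-moment bound $\esp[R_n^2(E,\vartheta_i)]\leq\prod_{j\leq n}(1+b/j)\leq C_2 n^b$ (obtained exactly as in the proof of Part~1 of Theorem \ref{thm:spectrum}, with the fourth power replaced by the second), giving $\p[\|\mathbf{T}_{\omega,n-1}(E)\|\geq n^\kappa]\leq Cn^{b-2\kappa}$, and concludes by Borel--Cantelli once $\kappa$ is taken so that $b-2\kappa<-1$. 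This tail estimate is quantitative and uniform in $E\in I$, while the LIL only yields $E$-dependent almost-sure convergence with no rate, so it would not by itself support the continuity-and-covering argument you sketch. Replace the LIL step by the second-moment Chebyshev estimate and your second approach essentially becomes the paper's proof.
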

\begin{proof}
	We can reconstruct $\Phi_{\omega,E}$ through the recurrence
	$\displaystyle
		\Phi_{\omega,E,n}			
	=
	{\bf T}_{\omega,n-1}(E)
	\widehat{\vartheta}_0
	$.
	This implies in particular that
	\begin{eqnarray*}
		\| \Phi_{\omega,E,n}\|
		\geq 
		\| {\bf T}_{\omega,n-1}(E)\|^{-1}.
	\end{eqnarray*}	
	Hence, using Lemma \ref{thm:comparison} with some $\vartheta_1 \neq \vartheta_2$, 
	\begin{eqnarray*}
		\p\left[ \| \Phi_{\omega,E,n}\|  \leq n^{-\kappa}\right]
		&\leq&
		\p\left[ \| {\bf T}_{\omega,n-1}(E) \| \geq n^{\kappa}\right]\\
		&\leq& 
		n^{-2\kappa} \ \esp\left[ \| {\bf T}_{\omega,n-1} (E)\|^2\right]\\
		&\leq& 
		C_1(\vartheta_1,\vartheta_2) n^{-2\kappa}
		\left(
			 \esp\left[ R_n^2(E,\vartheta_1)\right]+\esp\left[ R_n^2(E,\vartheta_2)\right]
		\right),
	\end{eqnarray*}
	for some $C_1(\vartheta_1,\vartheta_2) >0$.
	Keeping in mind the recursion \eqref{eq:prufer-transform-s1-radii}, the argument of the proof of Part 1 of Theorem \ref{thm:spectrum} given in Section \ref{sec:ac} can be reproduced and yields
	\begin{eqnarray*}
		\esp\left[ R_n^2(E,\vartheta_1)\right]
		&\leq& 
		\prod^n_{j=1} \left( 1 + \frac{b}{j} \right)
		\leq
		C_2 \ n^{b-2\kappa},
	\end{eqnarray*}
	for some constants $b=b(I)$ and $C_2=C_2(b)$. The estimate for $R_n^2(E,\vartheta_2)$ is of course similar. Taking $b-2\kappa<-1$, the result follows by Borel-Cantelli.
\end{proof}
\begin{proof}[Proof of Theorem \ref{thm:transport}]
	Let $c_{\omega}$ and $\kappa$ be as in Lemma \ref{thm:lower-bound-eigenfunctions}. Let $(\varphi_l)_{l}$ be a basis of ${\text Ran}P_I(\D)$ consisting of eigenfunctions of the operator $\D$, with corresponding eigenvalues $(E_l)_l$.
	 Define the truncated position operator $\X_N= {\bf X}\ \chi_{[-N,N]}$. Then, taking $p>2\kappa-1$,
	\begin{eqnarray*}
		\left\| |\X_N |^{p/2} \varphi_l\right\|^2
		&=&
		\sum_{1\leq n \leq N} |n|^p \|\varphi_l(n)\|^2\\
		&\geq&
		c_{\omega} \sum_{1\leq n \leq N} |n|^{p-2\kappa}
		\geq
		c'_{\omega}\ N^{p-2\kappa+1},
	\end{eqnarray*}		 
	for some $c'_{\omega}>0$ and for all $l$. Let $\varphi \in {\text Ran}P_I(\D)$ and write $\varphi = \sum_l a_l \varphi_l$ with $\sum_l |a_l|^2=1$. Then,
	\begin{eqnarray*}
		\left\| |\X_N|^{p/2} \e^{-i t \D}\psi\right\|^2
		=
		\sum_{l,l'} a_l \overline{a}_{l'}\ \e^{-it(E_l - E_{l'})}
		\langle \varphi_{l'}, |\X_N|^{p/2} \varphi_l \rangle.
	\end{eqnarray*}
	After a careful application of the dominated convergence theorem to exchange sums and integrals, we obtain
	\begin{eqnarray*}
		\lim_{T\to\infty} \frac{1}{T} \int^T_0 \left\| |\X_N|^{p/2} \e^{-i t \D}\psi\right\|^2 dt
		=
		\sum_l |a_l|^2 \left\| |\X|^{p/2} \varphi_l\right\|^2
		\geq 
		c'_{\omega}\ N^{p-2\kappa+1}.
	\end{eqnarray*}
	Hence, there exists an diverging (random) sequence $(T_N)_N$ such that
	\begin{eqnarray}\label{lower bnd moment}
		\frac{1}{T_N} \int^{T_N}_0 \left\| |{\bf X}_N|^{p/2} \e^{-i t \D}\varphi\right\|^2 dt
		\geq 
		\frac{c'_{\omega}}{2}\ N^{p-2\kappa+1},
	\end{eqnarray}
	for all $N\geq 1$. We can then find a diverging (random) sequence $(t_N)_N$ such that
	\begin{eqnarray*}
		\left\| |\X_N|^{p/2} \e^{-i t_N \D}\varphi\right\|^2
		\geq 
		\frac{c'_{\omega}}{4}\ N^{p-2\kappa+1},
	\end{eqnarray*}
	for all $N\geq 1$. This finishes the proof.
\end{proof}


\section{Sub-critical regime: pure point spectrum}\label{sec:DL-pp}

Part 3 of Theorem \ref{thm:spectrum} follows from the theory of rank one perturbations once we establish the following Proposition which is a direct consequence of Proposition \ref{thm:lyapunov-exponents} and \cite[Theorem 8.3]{LS} stated in Appendix \ref{app:unimodular} as Theorem \ref{thm:LS-osc}. 
\begin{proposition}\label{thm:asymptotics-eigenfunctions-pp}
	Let $0<\alpha<\frac12$. Assume \textbf{(A1)}-\textbf{(A3a)} and \textbf{(A4)}, and let $k\neq -\frac{\pi}{2}, -\frac{3\pi}{4}, -\pi$. Then, for $\p$-almost every $\omega$, there exists an initial angle $\vartheta_0 = \vartheta_0(\omega)$ such that
	\begin{eqnarray}\label{log T-beta}
		\lim_{n\to\infty} \frac{\log \| \mathbf{T}_{\omega,n}(E) \widehat\vartheta_0\|}{\sum^n_{j=1}j^{-2\alpha}} = -\beta(\lambda,E),\quad \p-a.s.,
	\end{eqnarray}
	for all $\lambda > 0$.
\end{proposition}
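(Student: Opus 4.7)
The plan is to deduce the statement directly from Theorem \ref{thm:lyapunov-exponents} combined with the unimodular matrix result recorded as Theorem \ref{thm:LS-osc} in Appendix \ref{app:unimodular}. Structurally, the argument is the exact analogue of the proof of Proposition \ref{thm:decay-eigenfunctions} in the critical regime: one only needs to replace the normalization $\log n$ used there by
\begin{equation*}
	L_n := \sum_{j=1}^n j^{-2\alpha} \; \sim \; \frac{n^{1-2\alpha}}{1-2\alpha},
\end{equation*}
which, for $0 < \alpha < 1/2$, is strictly increasing, tends to infinity, and satisfies $L_n/L_{n-1}\to 1$ -- the mild regularity required to apply Theorem \ref{thm:LS-osc}.

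First I would verify the unimodularity of the transfer matrices. From \eqref{transfer} a direct computation gives
\begin{equation*}
	\det T_{\omega,n}(E) = (p_{n,1}\, p_{n+1,2} + 1) - p_{n+1,2}\, p_{n,1} = 1,
\end{equation*}
so that $\prodtransn(E) \in SL(2,\R)$ for every $n$ and every $\omega$. In particular its two singular values are $\sigma_n$ and $\sigma_n^{-1}$, where $\sigma_n = \|\prodtransn(E)\|$.

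Next I would invoke Theorem \ref{thm:lyapunov-exponents}: under \textbf{(A1)}--\textbf{(A3a)} and \textbf{(A4)}, for each admissible energy $E$ the identity
\begin{equation*}
	\lim_{n\to\infty} \frac{\log \|\prodtransn(E)\|}{L_n} \;=\; \beta(E,\lambda) \;>\; 0
\end{equation*}
holds $\p$-almost surely. Feeding the unimodularity and this limit into Theorem \ref{thm:LS-osc} produces, for $\p$-almost every $\omega$, a measurable direction $\widehat{\vartheta}_0 = \widehat{\vartheta}_0(\omega)$ which realizes the dual exponent
\begin{equation*}
	\lim_{n\to\infty} \frac{\log \|\prodtransn(E)\, \widehat{\vartheta}_0 \|}{L_n} \;=\; -\beta(E,\lambda),
\end{equation*}
while every other initial direction saturates the upper exponent $+\beta(E,\lambda)$. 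This is precisely \eqref{log T-beta}, and Part 3 of Theorem \ref{thm:spectrum} then follows by the $\ell^2$ criterion coupled with the theory of rank-one perturbations, exactly as in the critical case.

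The only point requiring attention -- and the closest thing to a genuine obstacle -- is bookkeeping on the excluded values of $k$. One needs the exclusion in the statement to cover both the resonant energies inherited from Lemma \ref{thm:control-phases} (which underlie Theorem \ref{thm:lyapunov-exponents}) and the degenerate values where the change of basis \eqref{eq:change-of-basis-first-system} degenerates. Since each is a finite set of points in the band and the entire argument is pointwise in $E$, no new estimates on the martingale terms or on the error $\K_\omega(j)$ are needed beyond those already available from Section \ref{sec:asymptotics-TM}.
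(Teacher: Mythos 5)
Your proposal is correct and takes essentially the same route as the paper: the paper itself disposes of this Proposition in one sentence, declaring it a ``direct consequence'' of Theorem~\ref{thm:lyapunov-exponents} and Theorem~\ref{thm:LS-osc}, which are exactly the two ingredients you combine. You supply the details the paper leaves implicit (the unimodularity check, the verification that $g(n)\asymp n^{1-2\alpha}$ satisfies the summability hypothesis $\sum_n e^{-\epsilon g(n)}<\infty$ --- precisely the condition that fails at $\alpha=\tfrac12$ and forces the paper to use Lemma~\ref{thm:eigendirection} instead in the critical regime), so the write-up is a faithful elaboration of the paper's argument rather than a different one.
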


The next section is dedicated to the dynamical localization result.


\section{Sub-critical regime: dynamical localization}\label{sec:DL}

We start our proof of Theorem \ref{thm:DL}. In Section \ref{sec:FME}, we present our estimates on the fractional moments of the Green's function. We then use these estimates to show the stretched exponential decay of the correlators in Section \ref{sec:correlators}. We prove Propositions \ref{thm:consequences-DL}, \ref{thm:SULE} and Theorem \ref{thm:lower-bound-DL} in Section \ref{sec:consequences-DL}.

The reader will see that some estimates require the asymptotics for the second system of coordinates. Once again, we will give full proofs only in the cases requiring the first system, the other cases being handled in the exact same way.

\subsection{Fractional moments estimates}\label{sec:FME}

The key tool of our proof of dynamical localization will be an estimate on the Green's function of the operator $\D$ in boxes contained in Theorem \ref{thm:FM-Green}. The organization of this section is as follows: we start with some simple results involving the resolvent identities in Section \ref{sec:FM-preliminaries}. Then, we use these to bound the fractional moments of the Green's function by negative fractional moments of the norm of transfer matrices in Section \ref{sec:FM-green-to-TM}. Finally, we show their stretched exponential decay in Section \ref{sec:FM-TM}.

We define two collections of boxes: for $l\geq 1$, let
\begin{eqnarray}\label{boxes}
	\Lambda_l &=& \{ (u,+),\, 1 \leq u\leq l-1,\, (u,-),\, 1 \leq u\leq l\}\notag \\
	\Lambda' _l&=& \{ (u,\pm),\, 1 \leq u\leq l\}.
\end{eqnarray} 
\begin{figure}[h!]
\begin{center}
\includegraphics[scale=0.4]{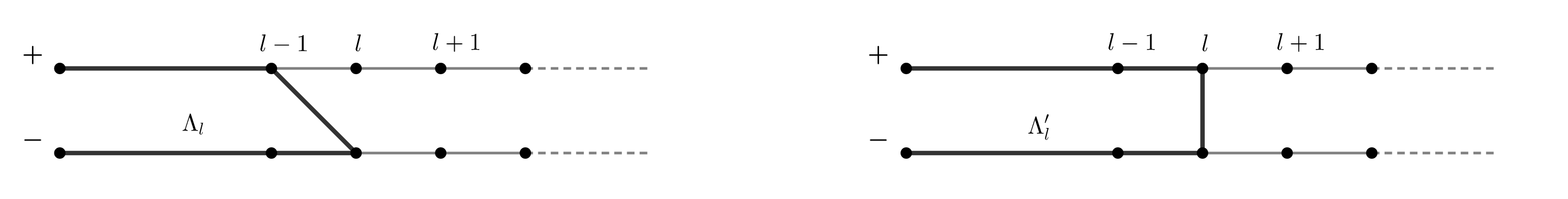}
\vspace*{-0.2cm}
\caption{Boxes $\Lambda_l$ and $\Lambda_l'$.} \label{fig:boxes}
\end{center}
\end{figure}

We let $P_l$ to be the projection on $\Lambda_l$ and  $D_{\omega,l}=P_l \D P_l$ is the restriction of $\D$ to the box $\Lambda_l$ acting on $\ell^2(\Lambda_l)$. We denote its resolvent by $R_{\omega,l}(E) = (D_{\omega,l}-E)^{-1}$ and by $G_{\omega,l}$ the corresponding Green's function 
\begin{eqnarray*}
	G_{\omega,l}(u,\sigma;n,\sigma';E)= \bra \delta^{\sigma}_u, R_{\omega,l}(E) \, \delta^{\sigma'}_n \ket.
\end{eqnarray*}
We define $P'_l,\, D'_{\omega,l}, \, R'_{\omega,l}$ and $G'_{\omega,l}$ in the same way.

\begin{theorem}\label{thm:FM-Green}
	Let $0<\alpha<\frac12$ and $\lambda>0$. Assume \textbf{(A1)}-\textbf{(A3a)} and \textbf{(A5)}. Then for all $u\in\n^*$ and all compact energy interval $I\subset \mathring\Sigma$, there exist constants $c=c(u,I)>0$ and $C=C(u,I)>0$ such that 
	\begin{eqnarray*}
		\esp\left[ \left| G_{\omega,L}(u,\pm;n,-;E)\right|^s\right] \leq C \lambda^{-2s} a_n^{-2s} \e^{-cn^{1-2\alpha}},
	\end{eqnarray*}
	and
	\begin{eqnarray*}
		\esp\left[ \left| G'_{\omega,L}(u,\pm;n,+;E)\right|^s\right] \leq C \lambda^{-2s} a_n^{-2s} \e^{-cn^{1-2\alpha}},
	\end{eqnarray*}
	for all $1\leq u \leq L$, $1 \leq n \leq L$ and $E\in I$.
\end{theorem}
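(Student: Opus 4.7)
The plan is to follow the three-step structure suggested by the organization of Section \ref{sec:FME}: first, a resolvent representation of the finite-box Green's function in terms of solutions of $\D \Phi = E \Phi$ and their Wronskian; second, reduction to negative fractional moments of transfer matrix norms, together with a boundary correction at the target site $n$; third, stretched exponential decay of these negative moments via the martingale analysis of Section \ref{sec:asymptotics-TM-l1}.

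For the first two steps, I would exploit the 1D structure: the Green's function in the box $[1,L]$ admits a Cramer-type representation, where for $u \le n$ one has
$|G_{\omega,L}(u,\pm;n,-;E)| \le C\,|\phi^{L}_{\pm}(u)|\,|\phi^{R}_{-}(n)|/|W_L|$,
with $\phi^L, \phi^R$ the solutions satisfying Dirichlet boundary conditions at sites $0$ and $L+1$ respectively, and $W_L$ the associated Wronskian. These solutions are propagated by transfer matrices, $\phi^L(u) = \mathbf{T}_{\omega,[1,u-1]}(E)\phi^L(1)$ and $\phi^R(n) = \mathbf{T}_{\omega,[n,L]}^{-1}(E)\phi^R(L)$. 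Using unimodularity of the transfer matrices to cancel $W_L$ against part of the propagation, the fractional moment of $|G|$ reduces, up to bounded prefactors depending on $u$, to $\esp[\|\mathbf{T}_{\omega,[u,n-1]}(E)v\|^{-s}]$ for a suitable unit vector $v$, which by Lemma \ref{thm:comparison} is comparable to $\esp[R_n^{-s}]$ for the forward Pr\"ufer radius. The second coordinate system handles $G'_{\omega,L}$ identically, the choice of system being dictated by the component $\pm$ at the target site. The $(\lambda a_n)^{-2s}$ prefactor arises from explicit boundary terms at site $n$ involving the local quantities $\pno, \pnt$ from the transfer matrix \eqref{transfer}, whose $s$-th moments scale like $(\lambda a_n)^{-2s}$ under \textbf{(A3a)}.

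For the third step, writing $\esp[R_n^{-s}] = \esp[\e^{-(s/2)\log R_n^2}]$ and using the martingale decomposition \eqref{eq:martingale-decomposition}, $\log R_n^2$ equals the deterministic drift $\beta \sum_{j=1}^{n-1} j^{-2\alpha}$ plus martingale terms and a remainder controlled by \textbf{(A5)}. Partitioning $[1,n]$ into blocks of size $n_0$ as in Lemma \ref{thm:bounds-on-Tmn}, each block contributes a mean lower bound $\beta' \sum_{j \in \text{block}} j^{-2\alpha}$ and has fourth moment of the same order. A Chebyshev-type exponential inequality applied block by block, chained via the tower property using the measurability recalled in Remark \ref{rk:measurability-s1}, yields $\esp[R_n^{-s}] \le C\,\e^{-c s \sum_{j=1}^n j^{-2\alpha}} \le C\,\e^{-cn^{1-2\alpha}}$.

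The main obstacle is this third step: converting $\mathbb{L}^4$ control on block contributions into a stretched exponential bound on negative fractional moments, without any regularity assumption on the distributions of the $V_{\omega,i}(n)$. The block decomposition of Lemma \ref{thm:bounds-on-Tmn} is essential here: within each block, the fourth-moment bound is just strong enough that a Markov/Chebyshev argument on $\e^{-s \log\|T_{\text{block}}\varphi\|}$ yields an exponential factor whose exponent is proportional to $\sum_{j \in \text{block}} j^{-2\alpha}$, and the product across blocks accumulates to the full stretched exponential rate $n^{1-2\alpha}$. Executing this estimate uniformly in the unit vector $v$, in $E \in I$ and in the endpoints $u, n$, while simultaneously handling the boundary correction factor at site $n$, is where the bulk of the technical work lies.
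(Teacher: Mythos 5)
Your overall three-step architecture mirrors the paper's, and step 3 is essentially the paper's Lemmas \ref{thm:first-bound-Tmn} and \ref{thm:key-lemma}: block by block, use the inequality $\e^y \leq 1 + y + y^2\e^{|y|}$ with the $\mathbb{L}^1$ lower bound and $\mathbb{L}^4$ upper bound from Lemma \ref{thm:bounds-on-Tmn}, chain via the tower property, and pick up the stretched exponential. So that part of the plan is sound and matches the paper.

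The gap is in step 2, specifically your explanation of the $(\lambda a_n)^{-2s}$ prefactor. You attribute it to ``moments of the local quantities $\pno$, $\pnt$'' controlled by \textbf{(A3a)}. This cannot be right: $\pno = p_1(E) + \vo(n)$ and $\pnt = p_2(E) - \vt(n+1)$ are $O(1)$ perturbations of deterministic nonzero constants on $\mathring\Sigma$, so their fractional moments are bounded and do not produce an $a_n^{-2s}$ factor, and \textbf{(A3a)} only controls \emph{second} moments of $V_{\omega,i}$, not anything that diverges as $n\to\infty$. In the paper the factor comes from a genuinely different mechanism: after the resolvent-identity decomposition (Lemmas \ref{thm:resolvents-to-eigenfunctions} and \ref{thm:resolvents-big-to-small-box}) one is left with the two diagonal-type Green's functions $G_{\omega,L}(n,+;n,-;E)$ and $G_{\omega,n}(n,-;n,-;E)$, and their fractional moments are bounded by $C(\lambda a_n)^{-s}$ via the a priori estimate of Lemma \ref{thm:FM-a-priori}, which rests on the density regularity hypothesis \textbf{(A6a)} (the constant degrades as $a_n\to 0$ because the density is only $L^p$-bounded up to a $(\lambda a_n)^{-\gamma}$ loss). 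In your Wronskian formulation, the exactly analogous difficulty is controlling negative fractional moments of $W_L$ — the Wronskian can be small with non-negligible probability, and without an $L^p$ density bound on the $V_{\omega,i}(n)$ (i.e.\ \textbf{(A6a)}), there is no way to get the uniform-in-$L$ bound you are implicitly assuming when you ``cancel $W_L$ against part of the propagation.'' So as written, your step 2 does not close; you would need to invoke a spectral-averaging / Wegner-type estimate at the boundary, which is precisely the role of Lemma \ref{thm:FM-a-priori} in the paper, and which requires more than \textbf{(A3a)}.

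A secondary point: the paper does not use a two-sided Cramer representation with Dirichlet solutions from both endpoints. It only uses forward solutions from site $1$ together with the two resolvent identities to re-express $G_{\omega,L}(u,\pm;n,-)$ as a boundary Green's function factor times $\|\Phi_u\|/\|\Phi_n\| = \|{\bf T}_{\omega,[u,n]}\hat\Phi_u\|^{-1}$. This is simpler than a Wronskian argument because the determinant normalization is absorbed automatically, and only the boundary diagonal entries require regularization. Your route is plausible but adds the extra burden of handling $W_L$ itself, which the paper sidesteps.
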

The reason to introduce two different systems of boxes comes from the fact that the estimates above require to use the first and second system of coordinates respectively. The scheme of proof in both cases is exactly the same.

\subsubsection{Preliminaries}\label{sec:FM-preliminaries}

We express the full operator in terms of the canonical basis so that
\begin{eqnarray}\label{Dirac expansion}
	D_{\omega} 
	&=& 
	\sum_{j\geq 2} 
	\Big{\{}
		\deltap_j\left( m \langle \deltap_j| + \langle \deltam_j| - \langle \deltam_{j+1}| \right)
		+
		 \deltam_j\left( \langle \deltap_j| - \langle \deltap_{j-1}| + m\langle \deltam_j|\right)
	\Big{\}}
	\\
	&&
	\phantom{blablablabla}
	+
	\deltap_1\left( m \langle \deltap_1| + \langle \deltam_1| - \langle \deltam_{2}| \right)
	+
		\deltam_1\left( \langle \deltap_1| + m\langle \deltam_1|\right)
	\\
	&+&	
	\sum_{j\geq 1}
	\Big{\{}
		\vo(j)\deltap_j\langle \deltap_j|
		+
		\vt(j)\deltam_j\langle \deltam_j|
	\Big{\}}.
\end{eqnarray}

\begin{figure}[h!]
\begin{center}
\includegraphics[scale=0.6]{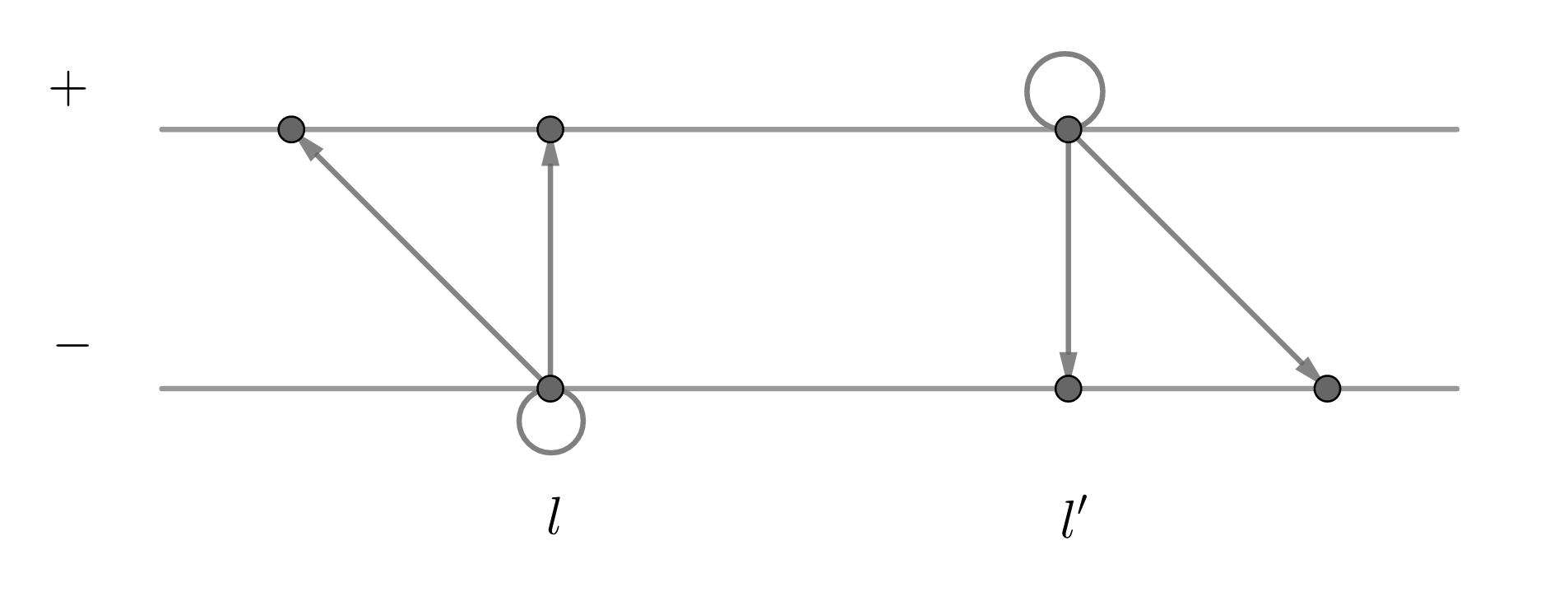}
\vspace*{-0.2cm}
\caption{The action of the operator $\D$ according to the spin position.} \label{figd2}
\end{center}
\end{figure}

Let $\Phi$ be the vector obtained from $\Phi_1$ through the transfer matrix recurrence: $\Phi_{n+1} = \transn \Phi_n$. This way, $(\D-E)\Phi=0$.
Note that we are using the first system of coordinates. The second system will appear naturally.
We begin with some identities involving $\Phi$ and the resolvents.
\begin{lemma}\label{thm:resolvents-to-eigenfunctions}
	For all $n\geq 1$ and $u\in[1,n]$, we have
	\begin{eqnarray}
		\label{eq:resolvents-to-eigenfunctions-1}
		&&
		G_{\omega,n}(u,\pm; n,-;E) = -\frac{\vp^{\pm}_u}{\vp^+_n}=\frac{\vp^{\pm}_u}{\vp^-_n} \, G_{\omega,n}(n,-;n,-;E),
		\\
		\label{eq:resolvents-to-eigenfunctions-2}
		&&
		G'_{\omega,n}(u,\pm; n,+;E) = \frac{\vp^{\pm}_u}{\vp^-_{n+1}}=\frac{\vp^{\pm}_u}{\vp^+_n} \, G'_{\omega,n}(n,+;n,+;E).
	\end{eqnarray}
\end{lemma}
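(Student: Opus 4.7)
The plan is to recognise that, inside the box $\Lambda_n$, the vector $\psi := R_{\omega,n}(E)\delta^-_n$ solves the same homogeneous eigenvalue equations as $\Phi$, so $\psi$ must be a scalar multiple of $\Phi$ on $\Lambda_n$; the single inhomogeneous equation at the right endpoint then fixes this scalar. The dual identity \eqref{eq:resolvents-to-eigenfunctions-2} will be obtained by the same argument on $\Lambda'_n$, with the second coordinate system.

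More precisely, for \eqref{eq:resolvents-to-eigenfunctions-1} I would decompose $\psi = \psi^+\otimes\psi^-$ and spell out $(D_{\omega,n}-E)\psi = \delta^-_n$ component by component. Taking into account that $(n,+)\notin\Lambda_n$, so that $\psi^+_n = 0$, together with the usual convention $\psi^+_0 = 0$, the first $2(n-1)$ scalar equations coincide exactly with the homogeneous eigensystem \eqref{eq:eigensystem}, while the remaining one (the $\sigma=-$ component at $j=n$) is inhomogeneous and reads
\begin{equation*}
-\psi^+_{n-1} - \pnt\,\psi^-_n = 1.
\end{equation*}
Since the homogeneous eigensystem is driven by the transfer matrix $\transn$, uniqueness of the solution with initial condition $(\psi^+_0,\psi^-_1) = (0,\psi^-_1)$ forces $\psi^\pm_j = c\,\vp^\pm_j$ for $j=1,\dots,n-1$, and the first homogeneous equation at $j=n-1$ extends this identity to $\psi^-_n$.

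The constant $c$ is then pinned down by the inhomogeneous equation: using the second eigenvalue equation for $\Phi$ at $j=n$, namely $-\vp^+_{n-1} - \pnt\,\vp^-_n = -\vp^+_n$, one reads off $c = -1/\vp^+_n$, which is the first equality in \eqref{eq:resolvents-to-eigenfunctions-1}; the second follows by specialising $u=n$, $\sigma=-$ and dividing. Identity \eqref{eq:resolvents-to-eigenfunctions-2} is proved identically on the box $\Lambda'_n$ using the second coordinate system: here $(n,+)\in\Lambda'_n$, so the inhomogeneity now appears in the first component of the eigensystem at $j=n$, and $(n+1,-)\notin\Lambda'_n$ kills the corresponding ``outside'' term, yielding $\pno\,\psi'^+_n + \psi'^-_n = 1$; the first eigenvalue equation $\pno\,\vp^+_n + \vp^-_n = \vp^-_{n+1}$ for $\Phi$ at $j=n$ then fixes $c' = 1/\vp^-_{n+1}$. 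No essential obstacle is expected: the argument is pure bookkeeping once one correctly identifies which boundary conditions the two projections impose and locates the unique inhomogeneous equation in each case.
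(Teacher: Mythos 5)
Your proof is correct. For the first equalities in \eqref{eq:resolvents-to-eigenfunctions-1} and \eqref{eq:resolvents-to-eigenfunctions-2}, your argument is essentially the mirror image of the paper's: the paper starts from $\Phi$ and observes directly that $(D_{\omega,n}-EP_n)\Phi = -\vp^+_n\delta^-_n$ (since the only coupling from $\Lambda_n$ into its complement is the $\delta^-_n\langle\delta^+_n|$ term), so $\Phi|_{\Lambda_n} = -\vp^+_n R_{\omega,n}(E)\delta^-_n$; you start from $\psi = R_{\omega,n}(E)\delta^-_n$, note it satisfies the same homogeneous system as $\Phi$ in the interior, invoke uniqueness, and pin down the scalar from the boundary equation $-\psi^+_{n-1}-\pnt\psi^-_n = 1$. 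Both arguments hinge on identifying which boundary term the projection kills, and they yield the same formula. Where you genuinely diverge is the second equality in each line: you simply specialize to $u=n$ to get $G_{\omega,n}(n,-;n,-;E)=-\vp^-_n/\vp^+_n$ (respectively $G'_{\omega,n}(n,+;n,+;E)=\vp^+_n/\vp^-_{n+1}$) and divide, whereas the paper takes a longer route via a resolvent identity relating $D_{\omega,n}$ to $D'_{\omega,n-1}$, passing through $G'_{\omega,n-1}(u,\pm;n-1,+;E)$. Your shortcut is legitimate and strictly more economical for the lemma as stated; the paper's route is more systematic and incidentally proves the extra relation $G_{\omega,n}(u,\pm;n,-;E)=G'_{\omega,n-1}(u,\pm;n-1,+;E)\,G_{\omega,n}(n,-;n,-;E)$, which the lemma itself does not require.
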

\begin{proof}
Using the expansion \eqref{Dirac expansion}, we decompose $\D$ as 
\begin{eqnarray*}
	D_{\omega}-E = D_{\omega,n}-EP_n + \deltam_n\langle \deltap_n|  + P_n^{\perp}C,
\end{eqnarray*}
for some bounded operator $C$. In particular, we have
\begin{eqnarray*}
	0 = (D_{\omega,n}-EP_n)\Phi + \vp^+_n \deltam_n,
\end{eqnarray*}
and
\begin{eqnarray*}
	\Phi = -\vp^+_n R_{\omega,n}(E)\deltam_n,
\end{eqnarray*}
which yields
\begin{eqnarray*}
	\vp^{\pm}_u = -\vp^+_n G_{\omega,n}(u,\pm; n,-;E)
\end{eqnarray*}
The first identity in \eqref{eq:resolvents-to-eigenfunctions-2} holds in the same way from the decomposition
\begin{eqnarray*}
	D_{\omega}-E = D'_{\omega,n}-EP_n -\deltap_n \langle \deltam_{n+1}|  + (P'_n)^{\perp}C',
\end{eqnarray*}
for some bounded operator $C'$.
Now, observe that the restrictions $D_{\omega,n}$ and $D_{\omega,n}'$ are related by
\begin{eqnarray*}
	D_{\omega,n}
	=
	D'_{\omega,n-1}-\deltap_{n-1}\langle \deltam_n| -\deltam_n\langle \deltap_{n-1}|, 
\end{eqnarray*}
so that
\begin{eqnarray*}
	D_{\omega,n} - EP_n
	=
	D'_{\omega,n-1}-EP'_{n-1} -E\deltam_n \langle \deltam_n|-\deltap_{n-1}\langle \deltam_n| -\deltam_n\langle \deltap_{n-1}|. 
\end{eqnarray*}
It follows from the resolvent identity that
\begin{eqnarray*}
	R_{\omega,n} (E)- R'_{\omega,n-1}(E)
	=
	R'_{\omega,n-1}(E)
	\left(
		E\deltam_n \langle \deltam_n|+\deltap_{n-1}\langle \deltam_n| +\deltam_n\langle \deltap_{n-1}| 
	\right)
	R_{\omega,n}(E),
\end{eqnarray*}
from where we obtain
\begin{eqnarray*}
	G_{\omega,n}(u,\pm;n,-;E)
	&=&
	G'_{\omega,n-1}(u,\pm;n-1,+;E)\ G_{\omega,n}(n,-;n,-;E)
	\\
	&=&
	\frac{\vp^{\pm}_u}{\vp^-_n} \, G_{\omega,n}(n,-;n,-;E),
\end{eqnarray*}
using first identity in \eqref{eq:resolvents-to-eigenfunctions-2}. The second identity in \eqref{eq:resolvents-to-eigenfunctions-2} is obtained in a similar spirit.
\end{proof}

\begin{lemma}\label{thm:resolvents-big-to-small-box}
	For each $L\geq 1$, $1\leq n \leq L$ and $u\in [1,n]$, we have
	\begin{eqnarray*}
		&&G_{\omega,L}(u,\pm;n,-;E) 
		=
		\left( 1-G_{\omega,L}(n,+;n,-;E)\right)G_{\omega,n}(u,\pm;n,-;E),
		\\
		&&G'_{\omega,L}(u,\pm;n,+;E) 
		=
		\left( 1 + G'_{\omega,L}(n+1,-;n,+;E) \right) G'_{\omega,n}(u,\pm;n,+;E).
	\end{eqnarray*}
\end{lemma}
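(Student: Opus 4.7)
The plan is to apply the geometric resolvent identity obtained by decoupling the box $\Lambda_L$ at the boundary of $\Lambda_n$. The key observation is that, thanks to the way the boxes are defined, the coupling between $\Lambda_n$ and $\Lambda_L\setminus \Lambda_n$ in the operator $D_{\omega,L}$ reduces to a single rank-two hopping term supported on the pair of sites $(n,-)$ and $(n,+)$. Indeed, inspecting the matrix elements of $D_\omega$ from \eqref{Dirac expansion}, the only nonzero entries connecting $\Lambda_n=\{(1,-),(1,+),\dots,(n-1,+),(n,-)\}$ to its complement in $\Lambda_L$ are $\langle \delta^-_n, D_\omega \delta^+_n\rangle$ and its symmetric partner, so
\begin{equation*}
	D_{\omega,L} = D_{\omega,n}\oplus \widetilde{D} + B, \qquad B=\delta^-_n\langle \delta^+_n| + \delta^+_n\langle \delta^-_n|,
\end{equation*}
where $\widetilde{D}$ is the restriction of $D_\omega$ to $\Lambda_L\setminus \Lambda_n$.

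Writing $R^{(0)}=(D_{\omega,n}\oplus \widetilde{D}-E)^{-1}$, the second resolvent identity gives $R_{\omega,L}=R^{(0)}-R^{(0)} B\, R_{\omega,L}$. Applying this identity to $\delta^-_n$, expanding $B\,R_{\omega,L}\delta^-_n$ in the basis $\{\delta^\pm_n\}$ and taking the inner product against $\delta^\pm_u$ for $u\in[1,n]\subset \Lambda_n$, the term $R^{(0)}\delta^+_n$ contributes zero because $\delta^+_n$ lies in the complementary block, while the term $R^{(0)}\delta^-_n$ coincides with $R_{\omega,n}(E)\delta^-_n$. Collecting the surviving contributions yields the first identity
\begin{equation*}
	G_{\omega,L}(u,\pm;n,-;E) = \bigl(1-G_{\omega,L}(n,+;n,-;E)\bigr)\,G_{\omega,n}(u,\pm;n,-;E).
\end{equation*}

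For the primed version, the same scheme applies to $\Lambda'_n$ inside $\Lambda'_L$. Here the only hopping connecting $\Lambda'_n$ to its complement is $\langle \delta^-_{n+1},D_\omega \delta^+_n\rangle=-1$ together with its symmetric counterpart, so the coupling operator is $B'=-\delta^+_n\langle \delta^-_{n+1}|-\delta^-_{n+1}\langle \delta^+_n|$. Running the resolvent identity for $R'_{\omega,L}$ applied to $\delta^+_n$ in the same way produces a sign change which turns the factor $1-(\cdot)$ into $1+(\cdot)$, giving the second stated identity. Both steps are purely algebraic once the correct decoupling is identified; no obstacle is anticipated, the only thing that requires care is checking which pair of sites gets coupled by the boundary term in each of the two box conventions and tracking the resulting sign.
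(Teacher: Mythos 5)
Your proof is correct and follows essentially the same route as the paper: you decouple $\Lambda_L$ at the boundary of $\Lambda_n$ (resp.\ $\Lambda'_n$), identify the single off-diagonal hopping pair and its sign, and apply the second resolvent identity, which is exactly what the paper does with $\hat{D}_{\omega,n}=P_n D_{\omega,L}P_n+P_n^{\perp}D_{\omega,L}P_n^{\perp}$ and its primed analogue $\check{D}_{\omega,n}$. The block-diagonal operator you write as $D_{\omega,n}\oplus\widetilde D$ is the same as the paper's $\hat{D}_{\omega,n}$, so the two presentations are mathematically identical.
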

\begin{proof}
	Let $\hat{D}_{\omega,n}=P_n D_{\omega,L} P_n + P_n^{\perp}D_{\omega,L}P_n^{\perp}$ and $\hat{R}_{\omega,n}(E)$ and $\hat{G}_{\omega,n}$ be the corresponding resolvent and Green's function. Then,
	\begin{eqnarray*}
		D_{\omega,L} = \hat{D}_{\omega,n} + \deltap_n \bra \deltam_n | + \deltam_n \bra \deltap_n |
	\end{eqnarray*}
	By the resolvent identity, one has
	\begin{eqnarray*}
		R_{\omega,L}(E) - \hat{R}_{\omega,n}(E)
		&=&
		\hat{R}_{\omega,n}(E)
		\left(
			\hat{D}_{\omega,n} - D_{\omega,L}
		\right)
		R_{\omega,L}(E)
		\\
		&=&
		-\hat{R}_{\omega,n}
		\left(
			\deltap_n \bra \deltam_n | + \deltam_n \bra \deltap_n |
		\right)
		R_{\omega,L}(E).
	\end{eqnarray*}
	Hence,
	\begin{eqnarray*}
		G_{\omega,L}(u,\pm;n,-;E)
		&=&
		\hat{G}_{\omega,n}(u,\pm;n,-;E)
		-
		\hat{G}_{\omega,n}(u,\pm;n,-;E)
		G_{\omega,L}(n,+;n,-;E)
		\\
		&=&
		{G}_{\omega,n}(u,\pm;n,-;E)
		-
		{G}_{\omega,n}(u,\pm;n,-;E)
		G_{\omega,L}(n,+;n,-;E).
	\end{eqnarray*}
	For the second identity, consider $\check{D}_{\omega,n} = P'_n D_{\omega,L} P'_n+(P'_n)^{\perp}D_{\omega,L}(P'_n)^{\perp}$ and proceed in the same way.
\end{proof}

\begin{figure}[h!]
\begin{center}
\includegraphics[scale=0.5]{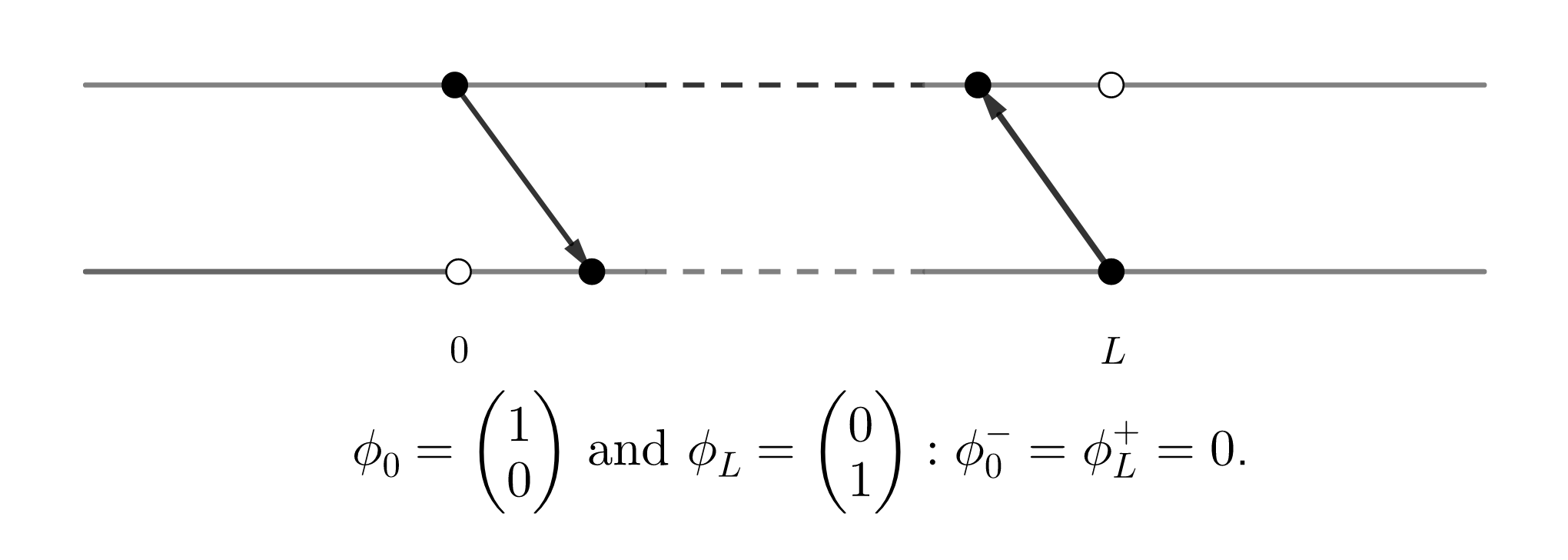}
\vspace*{-0.2cm}
\caption{Boundary conditions on $[0,L]$.} \label{figd3}
\end{center}
\end{figure}


\subsubsection{From Green's functions to transfer matrices}\label{sec:FM-green-to-TM}

We consider $1\leq u \leq n$ and
apply Lemma \ref{thm:resolvents-to-eigenfunctions} and \ref{thm:resolvents-big-to-small-box} 
 dropping temporarily the dependence on $E$ to lighten the notation,
\begin{eqnarray*}
	G_{\omega,L}(u,\pm;n,-)
	&=&
	\left( 1-G_{\omega,L}(n,+;n,-)\right)G_{\omega,n}(u,\pm;n,-)
	\\
	&=&
	-\left( 1-G_{\omega,L}(n,+;n,-)\right) \frac{\vp^{\pm}_u}{\vp^+_n}
	\\
	&=&
	\left( 1-G_{\omega,L}(n,+;n,-)\right)G_{\omega,n}(n,-;n,-) \frac{\vp^{\pm}_u}{\vp^-_n}.
\end{eqnarray*}
Hence,
\begin{eqnarray*}
	|G_{\omega,L}(u,\pm;n,-)|
	&\leq&
	\left( 1+|G_{\omega,L}(n,+;n,-)|\right) \left( 1 + |G_{\omega,n}(n,-;n,-)| \right)
	\frac{|\vp^{\pm}|}{\max\{|\vp^+_n|,|\vp^-_n|\}}
	\\
	&\leq&
	C
	\left( 1+|G_{\omega,L}(n,+;n,-)|\right) \left( 1 + |G_{\omega,n}(n,-;n,-)| \right)
	\frac{\| \Phi_u \|}{\| \Phi_n\|},
\end{eqnarray*}
for some $C>0$.
Now, we note that 
\begin{eqnarray*}
	\frac{\| \Phi_u \|}{\| \Phi_n\|} = \| {\bf T}_{\omega,[u,n]} \hat{\Phi}_u\|^{-1},
\end{eqnarray*}
where $\displaystyle \hat{\Phi}_u= \frac{\Phi_u}{\| \Phi_u \|}$. Let $s\in(0,1)$. By subadditivity and H\"older's inequality, we have
\begin{eqnarray}
	\nonumber
	\esp\left[ |G_{\omega,L}(u,\pm;n,-;E)|^s \right]
	&\leq&
	C
	\left( 1+\esp\left[|G_{\omega,L}(n,+;n,-;E)|^{4s}\right]\right)^{1/4}
	\\
	&&
	\nonumber
	\times
	\left( 1+\esp\left[|G_{\omega,n}(n,-;n,-;E)|^{4s}\right]\right)^{1/4}
	\\
	&&
	\label{eq:resolvent-to-transfer-matrix-0}
	\times
	\esp\left[ \| {\bf T}_{\omega,[u,n]}(E) \hat{\Phi}_u\|^{-2s}\right]^{1/2}.
\end{eqnarray}
The following is an a priori estimate on the moments of the Green's function (see \cite[Corollary 8.4]{AW}). The lemma requires some regularity of the law. Let $I\subset\R$ be a bounded interval. Using \textbf{(A6a)},
\begin{eqnarray*}
	\p[V_{\omega,i}(n) \in I]
	=
	\int_I \rho_{n,i}(y)\, dy
	\leq
	\left(
		\int \rho_{n,i}(y)^p\, dy
	\right)^{\frac{1}{p}}
	|I|^{\frac{p-1}{p}}
	\leq
	C (\lambda a_n)^{-\frac{\gamma}{p}}|I|^{\frac{p-1}{p}},
\end{eqnarray*}
so that, in the terminology of \cite[Definition 4.5]{AW} the law of $V_{\omega,i}(n)$ is uniformly $\tau$-H\"older continuous with $\tau=\frac{p-1}{p}$ and \cite[Corollary 8.4]{AW} can be applied to show that the fractional moments of the Green's function are bounded for $s\in(0,\frac{p-1}{p})$.
We state the result for the Green function $G$ but the exact same bound holds for $G'$.
\begin{lemma}\label{thm:FM-a-priori}
	Assume \textbf{(A1)} and \textbf{(A6a)}.
	For each compact energy interval $I\subset\mathring\Sigma$ and each $s\in(0,\frac{p-1}{p})$, there exists  $C=C(s,I)\in(0,\infty)$ 
	such that
	\begin{eqnarray*}
		\esp\left[ \left| G_{\omega,L}(u,\sigma;n,\sigma';E) \right|^s\right]
		\leq C
		\lambda^{-s} \left( a_u^{-s}+a_n^{-s}\right),
	\end{eqnarray*}
	for all $L\geq 1$, $u,n\in[1,L]$, $\sigma, \sigma'\in\{+,-\}$ and $E\in I$.
\end{lemma}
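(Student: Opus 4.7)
The plan is to adapt the Aizenman--Molchanov spectral averaging argument, as presented in \cite[Cor.~8.4]{AW}, to our setting, where \textbf{(A6a)} provides an $L^p$ bound on the densities in place of the usual $L^\infty$ one. The first step will be to reinterpret \textbf{(A6a)} as a $\tau$-H\"older regularity of the laws: for $s<(p-1)/p$ and $\tau=(p-1)/p$, H\"older's inequality combined with \textbf{(A6a)} yields, for every interval $J\subset\R$ and every $j,i$,
\begin{eqnarray*}
\p[V_{\omega,i}(j)\in J] \;\leq\; \|\rho_{j,i}\|_p\,|J|^{\tau} \;\leq\; C\,(\lambda a_j)^{-\gamma/p}\,|J|^{\tau},
\end{eqnarray*}
so the law of $V_{\omega,i}(j)$ is uniformly $\tau$-H\"older continuous with H\"older constant $K_j\leq C(\lambda a_j)^{-\gamma/p}$. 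Since $s<\tau$, this is exactly the regularity required by the Simon--Wolff type fractional moment estimates.

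Next, I would exploit the rank-one structure of the perturbation at a single site. Writing $D_{\omega,L}=\hat D_{u,\sigma}+V_{\omega,\sigma}(u)\,P_{u,\sigma}$ with $P_{u,\sigma}=|\delta^{\sigma}_u\rangle\langle\delta^{\sigma}_u|$ and $\hat D_{u,\sigma}$ obtained by setting $V_{\omega,\sigma}(u)=0$, Krein's formula gives, with $\hat G=\hat G_{u,\sigma}$,
\begin{eqnarray*}
G_{\omega,L}(u,\sigma;n,\sigma';E) \;=\; \frac{\hat G(u,\sigma;n,\sigma';E)}{1+V_{\omega,\sigma}(u)\,\hat G(u,\sigma;u,\sigma;E)}.
\end{eqnarray*}
Conditioning on $\mathcal G_{u,\sigma}=\sigma(V_{\omega,i}(j):(j,i)\neq(u,\sigma))$, the right-hand side becomes a rational function of the single variable $V_{\omega,\sigma}(u)$ whose law is $\tau$-H\"older continuous. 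A standard layer-cake estimate (as in \cite[Prop.~8.3]{AW}) then produces $\esp[|V_{\omega,\sigma}(u)+c|^{-s}\mid\mathcal G_{u,\sigma}]\leq C_s K_u^{s/\tau}$ for $s<\tau$, and the calibration built into the canonical scaling $V=\lambda a_j\omega$ reduces this to $C(\lambda a_u)^{-s}$. This already delivers the a priori bound for the \emph{diagonal} Green's function: $\esp[|G_{\omega,L}(u,\sigma;u,\sigma;E)|^s]\leq C(\lambda a_u)^{-s}$.

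For an off-diagonal matrix element the Krein formula only averages the denominator, while the numerator $|\hat G(u,\sigma;n,\sigma';E)|^{s}$ is not a priori controlled. The main obstacle, and the source of the additional $a_n^{-s}$ term in the claim, is to handle this numerator. The remedy I will follow is the two-site decoupling used in \cite{AW}: perform a second spectral average over $V_{\omega,\sigma'}(n)$, which by \textbf{(A1)} is independent of $V_{\omega,\sigma}(u)$. The resulting rank-two perturbation formula expresses $G_{\omega,L}(u,\sigma;n,\sigma';E)$ as a ratio whose denominator is bilinear in $(V_{\omega,\sigma}(u),V_{\omega,\sigma'}(n))$ and can be handled by a symmetric H\"older computation, bounding the fractional moment by the contribution of whichever endpoint is more favorable. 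This produces the two additive terms $\lambda^{-s}a_u^{-s}$ and $\lambda^{-s}a_n^{-s}$, and hence the announced bound, uniformly in $L$, $\sigma$, $\sigma'$, and $E\in I$.
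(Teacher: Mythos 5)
Your argument is essentially the paper's: the key step, converting \textbf{(A6a)} via H\"older's inequality into uniform $\tau$-H\"older continuity of the single-site laws with $\tau=(p-1)/p$ and site-dependent constant proportional to $(\lambda a_n)^{-\gamma/p}$, is exactly what the paper does before invoking \cite[Corollary 8.4]{AW} as a black box. Your remaining two paragraphs unpack the inside of that corollary (Krein's formula at the base site, conditioning and a layer-cake bound, then a second spectral average to control the off-diagonal numerator), which is the content of the cited reference rather than something the paper re-derives.
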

We state the final form of estimate \eqref{eq:resolvent-to-transfer-matrix-0} as a lemma.
\begin{lemma}
	Assume \textbf{(A1)} and \textbf{(A6a)}.
	For each compact energy interval $I\subset \mathring\Sigma$ and each $s\in(0,\frac{p-1}{p})$, there exists  $C=C(s,I)\in(0,\infty)$ and $\kappa=\kappa(s,\gamma)\geq 0$ such that
	\begin{eqnarray}\label{eq:resolvent-to-transfer-matrix}
		\esp\left[ |G_{\omega,L}(u,\pm;n,-;E)|^s \right]
		&\leq&
		C
		(\lambda a_n)^{-\kappa}
		\esp\left[ \| {\bf T}_{\omega,[u,n]}(E) \hat{\Phi}_u\|^{-2s}\right]^{1/2},
	\end{eqnarray}
	for all $u,n\in[1,L]$ and $E\in I$.
\end{lemma}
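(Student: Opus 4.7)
\textit{Approach.} The computation preceding the lemma has already reduced the task to estimating the prefactor
\[
\bigl(1+\esp[|G_{\omega,L}(n,+;n,-;E)|^{4s}]\bigr)^{1/4}\bigl(1+\esp[|G_{\omega,n}(n,-;n,-;E)|^{4s}]\bigr)^{1/4}
\]
by a quantity of the form $C(\lambda a_n)^{-\kappa}$. The plan is to control each Green's function moment via the a priori estimate of Lemma~\ref{thm:FM-a-priori}, and then to collect the resulting factors together with the transfer matrix term.

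\textit{Key step.} For $s$ small enough that $4s<(p-1)/p$, Lemma~\ref{thm:FM-a-priori} applied at exponent $4s$ yields, uniformly in $L$, $n$ and $E\in I$,
\[
\esp[|G_{\omega,L}(n,+;n,-;E)|^{4s}]\le C\lambda^{-4s}\bigl(a_n^{-4s}+a_n^{-4s}\bigr)=2C(\lambda a_n)^{-4s},
\]
and analogously for $G_{\omega,n}(n,-;n,-;E)$, since both row and column indices equal $n$ in either case. Substituting these bounds into \eqref{eq:resolvent-to-transfer-matrix-0} gives
\[
\esp\bigl[|G_{\omega,L}(u,\pm;n,-;E)|^s\bigr]\le C\bigl(1+C'(\lambda a_n)^{-4s}\bigr)^{1/2}\esp\bigl[\|{\bf T}_{\omega,[u,n]}(E)\hat{\Phi}_u\|^{-2s}\bigr]^{1/2}.
\]

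\textit{Final bookkeeping.} Since $n^\alpha a_n\to 1$, the sequence $(a_n)$ is bounded above by some $M>0$, hence $\lambda a_n\le M\lambda$. One splits into cases: when $\lambda a_n\le 1$ one has $(\lambda a_n)^{-4s}\ge 1$, so $(1+C'(\lambda a_n)^{-4s})^{1/2}\le\sqrt{1+C'}\,(\lambda a_n)^{-2s}$; when $1<\lambda a_n\le M\lambda$ the prefactor is at most $\sqrt{1+C'}$, which can be absorbed into a constant times $(\lambda a_n)^{-2s}$ since $(\lambda a_n)^{-2s}$ remains bounded from below in this bounded regime. Taking $\kappa=2s$ then produces the desired bound for $s$ in the restricted range. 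To reach the full range $s\in(0,(p-1)/p)$ stated in the lemma, one redoes the H\"older inequality used in the derivation of \eqref{eq:resolvent-to-transfer-matrix-0} with exponents $(q,q,q/(q-2))$ for $q$ slightly above $2$; this keeps the transfer-matrix exponent at $2s$ while lowering the Green's-function exponents to $qs$, which satisfies $qs<(p-1)/p$ for $q$ sufficiently close to $2$.

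\textit{Main obstacle.} There is no conceptual difficulty; the proof is essentially bookkeeping. The only mildly delicate point is the interplay between the H\"older exponents chosen in \eqref{eq:resolvent-to-transfer-matrix-0} and the range of $s$ for which Lemma~\ref{thm:FM-a-priori} produces a useful estimate, which in turn determines the explicit dependence of $\kappa$ on $s$ and $\gamma$.
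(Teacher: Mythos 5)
Your core argument matches what the paper intends: combine the chain of inequalities culminating in \eqref{eq:resolvent-to-transfer-matrix-0} with the a priori bound of Lemma~\ref{thm:FM-a-priori} at exponent $4s$ and then massage $\bigl(1+C'(\lambda a_n)^{-4s}\bigr)^{1/2}$ into $C(\lambda a_n)^{-2s}$ by splitting into the regimes $\lambda a_n\le 1$ and $\lambda a_n>1$. That is exactly the paper's (implicit) proof, which simply points to the two preceding displayed estimates.

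However, your suggested repair to cover the full range $s\in(0,(p-1)/p)$ does not work. With H\"older exponents $(q,q,q/(q-2))$ applied to the three factors $(1+|G_1|)^s$, $(1+|G_2|)^s$, $\|\mathbf{T}_{\omega,[u,n]}\hat\Phi_u\|^{-s}$, the Green's function moments drop to $qs$ but the transfer-matrix moment becomes $s\cdot\frac{q}{q-2}$, which tends to $\infty$ as $q\to 2^+$; it does \emph{not} remain $2s$. In fact, the right-hand side of \eqref{eq:resolvent-to-transfer-matrix} fixes the transfer-matrix factor to appear with the specific power $-2s$ inside an $\mathbb{E}[\cdot]^{1/2}$, which pins the third H\"older exponent at $2$ and hence the first two at $4$; so the Green's function exponent is forced to be $4s$. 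The natural restriction is therefore $4s<\tfrac{p-1}{p}$ (and $4s\le1$ to use subadditivity), not $s<\tfrac{p-1}{p}$. This is a harmless imprecision in the lemma's statement since, in all applications, $s$ is taken small (indeed $s\le s_0(I)<\tfrac12$ in Lemma~\ref{thm:key-lemma}), but the fix you propose is not a valid way to recover the stated range.

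One further small point: when $\lambda a_n>1$ you absorb the constant factor into $C(\lambda a_n)^{-2s}$ using boundedness of $(a_n)$, but this makes $C$ depend on $\lambda$ (through $(M\lambda)^{2s}$). The lemma's $C=C(s,I)$ nominally excludes a $\lambda$-dependence; this is already present in the paper's own statement, so it is not a defect you introduced, but it is worth being aware of.
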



By similar arguments,
\begin{eqnarray}
	\label{eq:resolvent-to-transfer-matrix-2}
	\esp\left[ |G_{\omega,L}(u,+;n,+;E)|^s \right]
	&\leq&
	C
	\lambda^{-\kappa} a_n^{-\kappa}
	\esp\left[ \| {\bf T}'_{\omega,[u,n]}(E) \hat{\Phi}_u\|^{-2s}\right]^{1/2},
	\\
	\label{eq:resolvent-to-transfer-matrix-3}
	\esp\left[ |G_{\omega,L}(u,-;n,+;E)|^s \right]
	&\leq&
	C
	\lambda^{-\kappa} a_n^{-\kappa}
	\esp\left[ \| {\bf T}'_{\omega,[u-1,n]}(E) \hat{\Phi}_{u-1}\|^{-2s}\right]^{1/2}.
\end{eqnarray}


\subsubsection{Estimates on transfer matrices}\label{sec:FM-TM}


The next key lemma provides the decay of the negative moments of transfer matrices needed to complete the proof of Theorem \ref{thm:FM-Green}. Its proof is inspired by \cite[Lemma 5.1]{CKM} where exponential decay was obtained in the ergodic case and used as an input for a multi-scale analysis. Our non-ergodic case requires some finer estimates and leads to stretched exponential decay.

All the estimates in this section are stated for the first system of coordinates. Once again, the exact same bounds hold for the second system.
\begin{lemma}\label{thm:key-lemma}
Let $0<\alpha< \frac12$, assume \textbf{(A1)}-\textbf{(A3a)} and \textbf{(A5)}. For each compact interval $I\subset \mathring\Sigma$, there exists $n_0 = n_0(I) \geq 1$, $s_0=s_0(I)\in(0,\frac12)$ and $c=c(I)>0$ such that for each $u\in\n^*$, there exists $C=C(m,I)\in(0,\infty)$  such that 
	\begin{eqnarray}\label{eq:bound-key-lemma}
		\esp\left[
			\left\|
			{\bf T}_{[u,n]}(E)
			{\varphi}_0
			\right\|^{-s} 
			\right] 
			\leq 
			C\e^{-cn^{1-2\alpha}},
	\end{eqnarray}
	for all $s\in(0,s_0]$, $\|\varphi_0\|=1$, $E\in I$ and $n\geq u+n_0$.
\end{lemma}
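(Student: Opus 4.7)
The proof follows the strategy of \cite[Lemma 5.1]{CKM} mentioned just below the statement, adapted to the non-ergodic decaying setting by a block argument built on the estimates of Lemma \ref{thm:bounds-on-Tmn}. I partition the index set $\{u,u+1,\ldots,n-1\}$ into $L$ consecutive blocks $B_l=[u_l,u_{l+1})$ of length $n_0=n_0(I)$ to be fixed below (absorbing any incomplete final block into an $I$-dependent prefactor using the a.s.\ boundedness of the transfer matrices), and let $M_l=T_{\omega,u_{l+1}-1}(E)\cdots T_{\omega,u_l}(E)$ so that $\mathbf{T}_{[u,n]}(E)=M_L\cdots M_1$. Setting $\hat\varphi_0=\varphi_0$ and $\hat\varphi_l=M_l\cdots M_1\varphi_0/\|M_l\cdots M_1\varphi_0\|$, one has the factorization $\|\mathbf{T}_{[u,n]}(E)\varphi_0\|^{-s}=\prod_{l=1}^{L}e^{-sY_l(\hat\varphi_{l-1})}$ with $Y_l(\varphi)=\log(\|M_l\varphi\|/\|\varphi\|)$. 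Since $M_l$ is independent of $\mathcal{F}_{u_l-1}$ while $\hat\varphi_{l-1}$ is $\mathcal{F}_{u_l-1}$-measurable, the tower property gives
\[
\esp\bigl[\,\|\mathbf{T}_{[u,n]}(E)\varphi_0\|^{-s}\bigr]\;\leq\;\prod_{l=1}^{L}\sup_{\|\varphi\|=1}\esp\bigl[e^{-sY_l(\varphi)}\bigr].
\]

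The heart of the argument is then the uniform block estimate $\sup_{\|\varphi\|=1}\esp[e^{-sY_l(\varphi)}]\leq 1-c_1\,sS_l$ for $s\in(0,s_0]$ and every $l$, where $S_l=\sum_{j\in B_l}j^{-2\alpha}$. By \textbf{(A4)} (implied by \textbf{(A5)}) the transfer matrices are a.s.\ bounded, so $|Y_l(\varphi)|\leq M_0=M_0(I,n_0)$ a.s., and the elementary inequality $e^{-x}\leq 1-x+x^2$ valid for $|x|\leq 1$ applies for $s\leq 1/M_0$, yielding
\[
\esp\bigl[e^{-sY_l(\varphi)}\bigr]\;\leq\;1-s\,\esp Y_l(\varphi)+s^2\,\esp Y_l(\varphi)^2.
\]
Lemma \ref{thm:bounds-on-Tmn} provides the lower bound $\esp Y_l(\varphi)\geq \beta'S_l$ uniformly in $\|\varphi\|=1$ and $E\in I$. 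What is still needed is the matching uniform upper bound $\esp Y_l(\varphi)^2\leq C'S_l$. This cannot be deduced from the $L^4$ estimate by Cauchy--Schwarz (which would give only $\sqrt{CS_l}$, far too large once $S_l\to 0$), but it follows by the same inspection of the decomposition of $\log R^2$ used in Section \ref{sec:asymptotics-TM}: the telescoping sum $\sum_{j\in B_l}\log(1+\Gamma_\omega(j))$ splits into a martingale with quadratic variation $O(\sum_{j\in B_l}a_j^2)=O(S_l)$, a deterministic part of size $O(S_l)$, and a remainder of strictly smaller order. With this bound, taking $s_0$ small enough so that $s_0 C'\leq \beta'/2$ gives $\sup_\varphi\esp[e^{-sY_l(\varphi)}]\leq 1-\tfrac12\beta' sS_l$ for every $l$.

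Iterating this block bound over the $L$ blocks and using $\prod_l(1-c_l)\leq \exp(-\sum_l c_l)$ together with $\sum_{j=u}^{n-1}j^{-2\alpha}\sim (n^{1-2\alpha}-u^{1-2\alpha})/(1-2\alpha)$ produces the desired stretched exponential decay, with a prefactor depending on $u$ and $I$ absorbing the factor $e^{c\,u^{1-2\alpha}}$ and the contribution of the truncated final block (bounded by the a.s.\ boundedness of $M_l$). The main technical obstacle is the uniform $L^2$ bound $\esp Y_l(\varphi)^2\leq C'S_l$: establishing it amounts to exploiting the quasi-equidistribution of the Pr\"ufer phases $\bar\theta_j$ under the rotation by $2k$ over $n_0$ consecutive steps, uniformly in the starting phase $\bar\theta_{u_l}=\arg\varphi$, so that both the variance of the martingale contribution and the size of the deterministic phase sums scale as $S_l$ independently of the block position $l$. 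This forces $n_0$ to be chosen large enough that the Weyl discrepancy of the rotation on compacts of $I$ is negligible compared with $\beta'$.
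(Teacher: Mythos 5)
Your overall scheme is the same as the paper's: decompose $[u,n]$ into blocks of length $n_0$, use that the renormalized vector $\hat\varphi_{l-1}$ is $\mathcal{F}_{u_l-1}$-measurable to condition and reduce to a uniform one-block estimate (this is exactly the paper's Lemma~\ref{thm:first-bound-Tmn}), iterate, and absorb boundary blocks via Lemma~\ref{thm:bound-T}. Where you depart is that you re-derive the block estimate instead of quoting Lemma~\ref{thm:first-bound-Tmn}, using the pointwise inequality $e^{-x}\le 1-x+x^2$ on $|x|\le1$ together with a second-moment bound $\esp[Y_l(\varphi)^2]\lesssim S_l$. The paper instead applies $e^y\le 1+y+y^2e^{|y|}$, valid for all $y$, and then Cauchy--Schwarz with the fourth-moment estimate \eqref{eq:upper-bound-Tmn} and the exponential moment from Lemma~\ref{thm:bound-T}. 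Your observation that a fourth-moment bound of order $S_l$ gives, via Cauchy--Schwarz, only an error of order $S_l^{1/2}$ rather than $S_l$ --- and hence is not uniformly dominated by $-s\beta'S_l$ once $S_l\to 0$ --- is a sharp and relevant remark, and your proposal to establish $\esp[Y_l(\varphi)^2]\lesssim S_l$ directly from the block martingale decomposition of $\log R^2$ (conditional quadratic variation $\asymp\sum_{j\in B_l}a_j^2$, deterministic part $O(S_l)$, remainder of smaller order, uniformly in the starting phase thanks to the block-level phase control) is the right mechanism.

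There is, however, a genuine gap that makes the argument incomplete as written. You assert that $|Y_l(\varphi)|\le M_0=M_0(I,n_0)$ almost surely with $M_0$ \emph{deterministic}, invoking \textbf{(A4)}, and then take $s\le 1/M_0$ so that $e^{-x}\le 1-x+x^2$ applies pathwise. But \textbf{(A4)} reads $|V_{\omega,i}(n)|\le C(\omega)n^{-2\alpha/3-\varepsilon}$ with a \emph{random} constant $C(\omega)$, and \textbf{(A5)} (the hypothesis actually in force for this lemma) is only a third-moment bound. Neither provides a deterministic a.s.\ envelope on $\|T_{\omega,j}(E)\|$, so your $s_0$ would depend on $\omega$ and the conclusion of the lemma would not hold with a fixed $s_0=s_0(I)$. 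This is precisely the obstruction that the paper's inequality $e^y\le1+y+y^2e^{|y|}$ is designed to circumvent. To keep your cleaner second-moment route you must treat the tail event $\{|Y_l|>M\}$ separately --- for instance by combining a Chebyshev bound from your $L^2$ estimate with a crude control on $\esp\bigl[e^{-sY_l}\,\chi_{\{|Y_l|>M\}}\bigr]$ using $\esp[\|T_{\omega,j}\|^{\pm s}]$ --- or adopt the paper's inequality and feed your second-moment estimate into a Hölder split of $\esp[Y_l^2e^{s|Y_l|}]$ in place of the fourth-moment input.
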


We start with some preliminaries. Lemma \ref{thm:bound-T} is a simple bound on the moments of the norm of the transfer matrices. Lemma \ref{thm:first-bound-Tmn} is the initial step of the recursion in the proof of Lemma \ref{thm:key-lemma}.

\begin{lemma}\label{thm:bound-T}
	Assume \textbf{(A3a)}. For all  compact interval $I\subset\mathring\Sigma$, there exists a constant $M=M(I)\in(0,\infty)$ such that
	\begin{eqnarray}\label{eq:bound-T}
		\esp[\| \transn(E) \|^s] \leq M,
	\end{eqnarray}
	for all $s\in[0,1]$, $E\in I$ and $n\geq 1$.
\end{lemma}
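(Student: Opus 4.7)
The plan is to exploit directly the additive decomposition \eqref{eq:decomposition-transfer-matrices} of the transfer matrix together with the sub-additivity of $x \mapsto x^s$ on $[0,\infty)$ for $s \in [0,1]$. Since the matrices $T, A_1, A_2, A_3$ depend continuously on $E$ through $p_1(E), p_2(E)$ only, their operator norms are bounded uniformly over the compact set $I$. So the triangle inequality applied to \eqref{eq:decomposition-transfer-matrices} yields a constant $C_1 = C_1(I) \in (0,\infty)$ such that
\begin{equation*}
	\|\transn(E)\|
	\leq
	C_1 \bigl( 1 + |\vo(n)| + |\vt(n+1)| + |\vo(n)|\,|\vt(n+1)| \bigr),
\end{equation*}
uniformly in $n \geq 1$ and $E \in I$.

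Next, I would use the elementary inequality $(a_1 + a_2 + a_3 + a_4)^s \leq a_1^s + a_2^s + a_3^s + a_4^s$, valid for non-negative reals and $s \in [0,1]$, to get
\begin{equation*}
	\|\transn(E)\|^s
	\leq
	C_1^s \bigl( 1 + |\vo(n)|^s + |\vt(n+1)|^s + |\vo(n)|^s |\vt(n+1)|^s \bigr).
\end{equation*}
Taking expectations and using independence of $\vo(n)$ and $\vt(n+1)$ from \textbf{(A1)}, the product term factorises so that
\begin{equation*}
	\esp[\|\transn(E)\|^s]
	\leq
	C_1^s \bigl( 1 + \esp[|\vo(n)|^s] + \esp[|\vt(n+1)|^s] + \esp[|\vo(n)|^s]\,\esp[|\vt(n+1)|^s] \bigr).
\end{equation*}

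The remaining step is a standard application of Jensen's inequality: for $s \in [0,1]$,
\begin{equation*}
	\esp[|V_{\omega,i}(n)|^s]
	\leq
	\esp[V_{\omega,i}(n)^2]^{s/2}
	=
	(\lambda a_n)^s,
\end{equation*}
by \textbf{(A3a)}. Since $a_n$ converges to $0$ (recall $n^\alpha a_n \to 1$), the sequence $(a_n)$ is bounded, say $a_n \leq A$ for all $n \geq 1$. Hence $\esp[|V_{\omega,i}(n)|^s] \leq (\lambda A)^s \leq \max\{1, \lambda A\}$ uniformly in $s \in [0,1]$ and $n \geq 1$, and setting
\begin{equation*}
	M := \max\{1, C_1\} \cdot \bigl( 1 + 2 \max\{1, \lambda A\} + \max\{1, \lambda A\}^2 \bigr)
\end{equation*}
gives the desired uniform bound. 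There is no significant obstacle here: the decomposition does all the work, and the only slight care needed is to keep constants independent of $s \in [0,1]$ and $E \in I$, which follows immediately from the continuity of $T, A_1, A_2, A_3$ in $E$ and the boundedness of $s \mapsto x^s$ on any fixed compact set of arguments.
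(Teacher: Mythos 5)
Your proof is correct and follows essentially the same route as the paper: both bound $\|\transn(E)\|$ via the decomposition \eqref{eq:decomposition-transfer-matrices}, apply sub-additivity of $x\mapsto x^s$, use independence to factor the cross term, and then control $\esp[|V_{\omega,i}(n)|^s]$ via \textbf{(A3a)}. The paper reaches the last step through $|x|^s\le 1+|x|$ and Cauchy--Schwarz rather than your Jensen estimate, but this is an immaterial variation.
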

\begin{proof}
	From \eqref{eq:decomposition-transfer-matrices}, we can see that there exists a constant $C=C(I)\in(0,\infty)$ such that
	\begin{eqnarray*}
		\| \transn(E) \|
		&\leq&
		C
		\Big{(}
			1+|\vo(n)|+|\vt(n+1)|
			+
			|\vo(n)|\, |\vt(n+1)|
		\Big{)}.
	\end{eqnarray*}
	Hence,
	\begin{eqnarray*}
		\esp[\| \transn(E) \|^s]
		&\leq&
		C^s
		\Big{(}
			1+\esp[|\vo(n)|^s]+\esp[|\vt(n+1)|^s]
			\\
			&&
			\phantom{blablablabla}
			+\esp[|\vo(n)|^p]\, \esp[|\vt(n+1)|^s]
		\Big{)},
	\end{eqnarray*}
	if $s\in[0,1]$. Now, we have
	\begin{eqnarray*}
		\esp[|\vo(n)|^s] 
		\leq 
		1 + \esp[|\vo(n)|]
		\leq
		1 + \esp[|\vo(n)|^2]^{1/2}
		=
		1 + \lambda a_n,
	\end{eqnarray*}
	which is uniformly bounded in $n$. The same holds for $\vt(n+1)$.
\end{proof}

\begin{lemma}\label{thm:first-bound-Tmn}
	Let $0<\alpha< \frac12$, assume \textbf{(A1)}-\textbf{(A3a)} and \textbf{(A5)}. Then, for all compact interval $I\subset \mathring\Sigma$, there exist
	$n_0=n_0(I)\geq 1$, $s_0=s_0(I)\in(0,\frac12)$ and $c=c(I)>0$ such that
	\begin{eqnarray*}
		\esp\left[ \|T_{\omega,ln_0 }(E)\cdots T_{\omega,(l-1)n_0+1}(E)\varphi_0\|^{-s} \right]
		\leq
		1 - \frac{c}{l^{2\alpha}},
	\end{eqnarray*}
	for all $s\in(0,s_0]$, $l\geq 1$, $\|\varphi_0\|=1$ and $E\in I$.
\end{lemma}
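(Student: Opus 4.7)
The target is the bound $\esp[e^{-sX_l}] \leq 1 - c l^{-2\alpha}$, where I set $X_l := \log \|T_{\omega,ln_0}(E)\cdots T_{\omega,(l-1)n_0+1}(E)\varphi_0\|$ and $S_l := \sum_{j=(l-1)n_0+1}^{ln_0} j^{-2\alpha} \asymp n_0^{1-2\alpha} l^{-2\alpha}$. Since $1 - cl^{-2\alpha}$ is of the form $1 - c'S_l$, the plan is to Taylor-expand the exponential to second order, extract a negative linear term from the lower bound on $\esp[X_l]$ in Lemma \ref{thm:bounds-on-Tmn}, and absorb the quadratic remainder by choosing $s \leq s_0$ small and $n_0$ large.

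I would split $\esp[e^{-sX_l}]$ along the event $A = \{|sX_l| \leq 1\}$. On $A$, the inequality $e^{-y} \leq 1 - y + y^2$ valid for $|y|\leq 1$ gives
$$\esp\left[e^{-sX_l}\mathbf{1}_A\right] \leq \p[A] - s\,\esp[X_l\mathbf{1}_A] + s^2\,\esp[X_l^2\mathbf{1}_A].$$
The linear term is the source of the negative contribution: \eqref{eq:lower-bound-Tmn} gives $\esp[X_l] \geq \beta' S_l$, while $\esp[X_l] - \esp[X_l\mathbf{1}_A] = \esp[X_l \mathbf{1}_{A^c}]$ can be controlled by Cauchy--Schwarz with Chebyshev's inequality using \eqref{eq:upper-bound-Tmn}. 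For the quadratic term I would use $\esp[X_l^2] \leq CS_l$, which follows by the same inspection of the recursion \eqref{eq:prufer-transform-s1-radii} underlying \eqref{eq:upper-bound-Tmn}, now applied at the $L^2$ level: after taking logs, $\log R_n^2/R_u^2$ is a sum of independent martingale increments with $L^2$-norms of order $j^{-\alpha}$, plus a deterministic drift of comparable size, so its variance is of order $S_l$.

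On $A^c = \{|sX_l| > 1\}$, I will use unimodularity of the transfer matrices: since $\det T_{\omega,j} = 1$, the product $\tau_l = T_{\omega,ln_0}\cdots T_{\omega,(l-1)n_0+1}$ is a $2\times 2$ unimodular matrix, whence $\|\tau_l\varphi_0\|^{-1} \leq \|\tau_l^{-1}\| = \|\tau_l\|$, so $e^{-sX_l} \leq \|\tau_l\|^s$. Combined with submultiplicativity, independence and the bound $\esp[\|T_{\omega,j}\|^{2s}] \leq M$ from Lemma \ref{thm:bound-T} (valid for $2s \leq 1$), one gets $\esp[\|\tau_l\|^{2s}] \leq M^{n_0}$. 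Cauchy--Schwarz then yields
$$\esp\left[e^{-sX_l}\mathbf{1}_{A^c}\right] \leq M^{n_0/2}\, \p[A^c]^{1/2} \leq M^{n_0/2}\, s^2\,\esp[X_l^4]^{1/2},$$
which, using the sharpened form $\esp[X_l^4] \leq C S_l^2$ (obtained from a similar but more careful $L^2$ estimate on the martingale increments), is $\leq C(n_0)\, s^2 S_l$.

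Assembling, $\esp[e^{-sX_l}] \leq 1 - s\beta' S_l + C(n_0)\, s^2 S_l + \text{lower order}$, and choosing $n_0 = n_0(I)$ and then $s_0 = s_0(I) \leq \beta'/(2C(n_0))$ gives $\esp[e^{-sX_l}] \leq 1 - (s\beta'/2)\,S_l$ for all $s \leq s_0$. Setting $c = (s_0 \beta'/2)\, n_0^{1-2\alpha}$ concludes. The main obstacle I expect is refining \eqref{eq:upper-bound-Tmn} to $\esp[X_l^4] \leq C S_l^2$ (equivalently, $\esp[X_l^2] \leq C S_l$): the stated bound with right-hand side $S_l$ is too crude, since it produces a residual $s^2 \sqrt{S_l}$ term from the tail event that cannot be dominated by the linear term $s\beta' S_l$ uniformly in $l$. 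Overcoming this requires writing out $\log R_n^2$ as a sum of centered martingale increments plus a drift, as in the proof of Theorem \ref{thm:lyapunov-exponents}, and exploiting orthogonality of the martingale increments rather than the triangle inequality.
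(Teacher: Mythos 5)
Your overall architecture is sound and runs parallel to the paper's: Taylor-expand $e^{-sX_l}$, extract a negative $O(s S_l)$ contribution from \eqref{eq:lower-bound-Tmn}, and show the quadratic remainder is $O(s^2 S_l)$ so that a small $s_0(n_0)$ wins. The paper does not split along the event $A = \{|sX_l|\le1\}$ as you do; instead it applies $e^y \le 1 + y + y^2 e^{|y|}$ directly and then Cauchy--Schwarz to obtain $\esp[X_l^4]^{1/2}\esp[e^{2s|X_l|}]^{1/2}$, with the exponential factor controlled by $\prod_j \esp[\|T_{\omega,j}\|^{2s}] \le M^{n_0}$ exactly as in your $A^c$ bound. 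So your route and theirs converge to the same bottleneck.

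You correctly put your finger on that bottleneck: \eqref{eq:upper-bound-Tmn} gives $\esp[X_l^4]\le C S_l$, hence $\esp[X_l^4]^{1/2} = O(S_l^{1/2}) = O(l^{-\alpha})$, which is \emph{not} dominated by the linear gain $s\beta' S_l = O(l^{-2\alpha})$ uniformly in $l$. This is the same issue you would hit in the paper's own display where the remainder is written as $c_2 s^2 e^{c_3 n_0} n_0^{\frac12-\alpha} l^{-\alpha}$ while the negative term is $c_1 s n_0^{1-2\alpha} l^{-2\alpha}$: no fixed $s_0$ gives $1-c_4 l^{-2\alpha}$ uniformly in $l$ from these two displays. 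So you are right that something more is needed.

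Where your proposal goes wrong is the claimed remedy $\esp[X_l^4]\le C S_l^2$. Orthogonality of martingale increments gives $\text{Var}(X_l) = \sum_{j} \esp[Z_j^2] \asymp S_l$ (each Pr\"ufer log-increment $Z_j$ has conditional variance $\asymp j^{-2\alpha}$), and by Jensen and Rosenthal one then gets $\esp[X_l^4] \gtrsim \sum_j \esp[Z_j^4] + S_l^2$; the first term alone is already of order $S_l$, not $S_l^2$, under assumptions \textbf{(A3a)}/\textbf{(A5)} (and the kurtosis of $V_{\omega,i}(j)$ is allowed to blow up as $j\to\infty$). For $S_l<1$, i.e.\ large $l$, $S_l \gg S_l^2$, so the sharper bound you want is generically false. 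What is actually needed is a direct estimate of the form $\esp[X_l^2 e^{s|X_l|}] \le C(n_0,I,s)\,S_l$, obtained \emph{without} a global Cauchy--Schwarz that decouples $X_l^2$ from $e^{s|X_l|}$ and thereby costs the square root. One way to arrange this is to work conditionally step by step through the Pr\"ufer recursion inside the block, bounding $e^{s|X_l|}$ by $\prod_j \|T_{\omega,j}\|^s$ and noting that for each fixed $j$ the factor $\prod_{k\neq j}\esp[\|T_{\omega,k}\|^s]$ stays bounded by $M^{n_0-1}$ while the surviving factor $\esp[Z_j^2\|T_{\omega,j}\|^s]$ is still $O(j^{-2\alpha})$ for $s$ small, so that summing over $j\in[(l-1)n_0+1,ln_0]$ yields $O(S_l)$. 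Until that refinement is supplied, both your argument and the one sketched in the paper's proof have a gap at exactly the step you flagged.
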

\begin{proof}
	We remove $E$ from the notation to lighten the presentation.
	From Lemma \ref{thm:bounds-on-Tmn}, we obtain $n_0=n_0(I)\geq 1$, $c_1=c_1(I)>0$ and $c_2=c_2(I)>0$ such that
	\begin{eqnarray*}
		\esp\left[ \log \|T_{\omega,ln_0 }\cdots T_{\omega,(l-1)n_0+1} \varphi_0\|\right]
		\geq c_1
		 \frac{n_0^{1-2\alpha}}{l^{2\alpha}},
	\end{eqnarray*}
	and
	\begin{eqnarray*}
		\esp\left[
			\left(
				\log \|T_{\omega,ln_0 }\cdots T_{\omega,(l-1)n_0+1}\varphi_0\|
			\right)^4
		\right]
		\leq c_2
		\frac{n_0^{1-2\alpha}}{l^{2\alpha}},
	\end{eqnarray*}
	for all $l\geq 1$, $\|\varphi_0\|=1$ and $E\in I$.
	Now, we apply the inequality $\e^y\leq 1 + y + y^2 e^{|y|}$ to $y=-s \log \|T_{\omega,ln_0} \cdots T_{\omega,(l-1)n_0+1}\varphi_0\|$ with $s\in(0,1)$ to be fixed later, so that
	\begin{eqnarray*}
		&&\esp[ \|T_{\omega,ln_0 }\cdots T_{\omega,(l-1)n_0+1}\varphi_0\|^{-s} ]
		\leq 
		1- s \esp[ \log \|T_{\omega,ln_0 }\cdots T_{\omega,(l-1)n_0+1}\varphi_0\|]
		\\
		&&\quad + \quad
		s^2 
		\esp\left[
				\left( 
					\log \|T_{\omega,ln_0 }\cdots T_{\omega,(l-1)n_0+1}\varphi_0\|
				\right)^4
			\right]^{1/2}
		\esp\left[
				\e^{2s \left|\log \|T_{\omega,ln_0 }\cdots T_{\omega,(l-1)n_0+1}\varphi_0\| \right|}
			\right]^{1/2}.
	\end{eqnarray*}
	Now,
	\begin{eqnarray*}
		\log \|T_{\omega,ln_0 }\cdots T_{\omega,(l-1)n_0+1}\varphi_0\| \leq \sum^{ln_0}_{j=(l-1)n_0+1}\log \| T_{\omega,j}\|.
	\end{eqnarray*}
	On the other hand,
	\begin{eqnarray*}
		1 &=& \| T_{\omega,(j-1)n_0+1}^{-1}\cdots T_{\omega,ln_0}^{-1} T_{\omega,ln_0}\cdots T_{\omega,(l-1)n_0+1}\varphi_0\| \\
			&\leq&
			\| T_{\omega,(l-1)n_0+1} \| \cdots \| T_{\omega,ln_0}\| \| T_{\omega,ln_0}\cdots T_{\omega,(l-1)n_0+1}\varphi_0\|,
	\end{eqnarray*}
	since $\| T_{\omega,j}^{-1}\| = \|T_{\omega,j}\|$, so that we have
	\begin{eqnarray*}
		\log \|T_{\omega,ln_0 }\cdots T_{\omega,(l-1)n_0+1}\varphi_0\| 
		\geq 
		-\sum^{ln_0}_{j=(l-1)n_0+1}\log \| T_{\omega,j}\|.
	\end{eqnarray*}
	Piecing these bounds together, we obtain
	\begin{eqnarray*}
		\left| \log \|T_{\omega,ln_0 }\cdots T_{\omega,(l-1)n_0+1}\varphi_0\| \right| 
		\leq 
		\sum^{ln_0}_{j=(l-1)n_0+1}\log \| T_{\omega,j}\|.
	\end{eqnarray*}
	Remembering \eqref{eq:bound-T}, we get
	\begin{eqnarray*}
		\esp\left[
			\exp\left\{
				2s \left| \log \|T_{\omega,ln_0 }\cdots T_{\omega,(l-1)n_0+1}\varphi_0\| \right|
			\right\}
		\right]
		\leq
		\prod^{ln_0}_{j=(l-1)n_0+1} \esp[\| T_{\omega,l}\|^{2s}]
		\leq
		\e^{c_3 n_0},
	\end{eqnarray*}
	for some $c_3>0$ for all $s\in(0,\frac12)$.
	Hence, 
	\begin{eqnarray*}
		\esp\left[ \|T_{\omega,ln_0 }\cdots T_{\omega,(l-1)n_0+1}\varphi_0\|^{-s} \right]
		\leq
		1 - c_1 s \frac{n_0^{1-2\alpha}}{l^{2\alpha}} 
		+
		c_2 s^2 \e^{c_3 n_0} \frac{n_0^{\frac12-\alpha}}{l^{\alpha}},
	\end{eqnarray*}
	for all $l\ge 1$, $\|\varphi_0\|=1$ and $E\in I$. We can now find  $s_0\in(0,\frac12)$ such that
	\begin{eqnarray*}
		\esp\left[ \|T_{\omega,ln_0 }\cdots T_{\omega,(l-1)n_0+1}\varphi_0\|^{-s} \right]
		\leq
		1 - \frac{c_4}{l^{2\alpha}},
	\end{eqnarray*}
	for some $c_4>0$, for all $s\in(0,s_0)$, $l\geq 1$, $\norm{\varphi_0}=1$ and $E\in I$. 
\end{proof}

We can now proceed with the proof of Lemma \ref{thm:key-lemma}.

\begin{proof}[Proof of Lemma \ref{thm:key-lemma}]
	Once again, we remove $E$ from the notation to lighten the presentation.
	Let $u\geq 1$, $n\geq u + n_0$ and $s\in(0,s_0]$ where $n_0=n_0(I)$ and $s_0=s_0(I)$ are taken from Lemma \ref{thm:first-bound-Tmn}.
	Write $u= l_1n_0 - r_1$ and $n=l_2n_0 + r_2$ with $0\leq r_1,\, r_2 < n_0$.
	Then,
	\begin{eqnarray*}
		\| T_{\omega,l_2n_0} \cdots T_{\omega,u} \varphi_0 \|
		&=&
		\| T_{\omega,l_2n_0+1}^{-1} T_{\omega,n}^{-1}T_{\omega,n}\cdots T_{\omega,u} \varphi_0 \|\\
		&\leq&
		\prod^n_{j=l_2n_0+1}\| T_{\omega,j} \| 
		\cdot
		\| T_{\omega,n} \cdots T_{\omega,u} \varphi_0\|.
	\end{eqnarray*}
	Hence,
	\begin{eqnarray*}
		\esp[\| T_{\omega,n} \cdots T_{\omega,u}\varphi_0\|^{-s}]
		&\leq&
		\prod^n_{j=l_2n_0+1}\esp[ \| T_{\omega,j} \|^s] 
		\cdot
		\esp[\| T_{\omega,l_2n_0} \cdots T_{\omega,u} \varphi_0 \|^{-s}]
		\\
		&\leq&
		C_1
		\esp[\| T_{\omega,l_2n_0} \cdots T_{\omega,u} \varphi_0 \|^{-s}],
	\end{eqnarray*}
	for some $C_1=C_1(I)\in (0,\infty)$ by \eqref{eq:bound-T}, (recall that $n_0$ is fixed). 
	The rest of the proof is  based on a careful conditioning that we now detail.	
	Let
	\begin{eqnarray*}
		\varphi_{l} = \frac{T_{ln_0} \cdots T_{u} \varphi_0}{\| T_{ln_0} \cdots T_{u} \varphi_0 \|},
	\end{eqnarray*}
	ans observe that $\varphi_{l-1}$ is measurable with respect to $\mathcal{F}_{l-1}$.
	Hence, Lemma \ref{thm:first-bound-Tmn} can be applied to obtain
	\begin{eqnarray*}
		\esp
		\left[
		\| T_{\omega,ln_0} \cdots T_{\omega,(l-1)n_0 +1} \varphi_{l-1} \|^{-s}
		\Big{|}
		\mathcal{F}_{l-1}
		\right]
		\leq
		1-\frac{c_4}{l^{2\alpha}},
	\end{eqnarray*}
	where $c_4=c_4(I)>0$ is the constant from Lemma \ref{thm:first-bound-Tmn}. Hence,
	\begin{eqnarray*}
		&&
		\esp\left[
			\| T_{\omega,n} \cdots T_{\omega,u} \varphi_0\|^{-s}
		\right]
		\leq
		C_1
		\esp\left[
			\| T_{\omega,l_2n_0} \cdots T_{\omega,u} \varphi_0\|^{-s}
		\right]
		\\
		&&
		=
		C_1
		\esp\left[
			\| T_{\omega,(l_2-1)n_0} \cdots T_{\omega,u} \varphi_0 \|^{-s}
			\| T_{\omega,l_2n_0} \cdots T_{\omega,(l_2-1)n_0+1} \varphi_{l_2-1}\|^{-s} 
		\right]
		\\
		&&
		=
		C_1
		\esp\left[ \esp\left[
			\| T_{\omega,(l_2-1)n_0} \cdots T_{\omega,u} \varphi_0 \|^{-s}
			\| T_{\omega,l_2n_0} \cdots T_{(l_2-1)n_0+1} \varphi_{l_2-1}\|^{-s} 
		\Big{|}
		\mathcal{F}_{l_2-1}
		\right]\right]
		\\
		&&
		=
		C_1
		\esp\left[ [
			\| T_{\omega,(l_2-1)n_0} \cdots T_{\omega,u} \varphi_0 \|^{-s}
			 \esp\left[
			\| T_{\omega,l_2n_0} \cdots T_{\omega,(l_2-1)n_0+1} \varphi_{l_2-1}\|^{-s} 
		\Big{|}
		\mathcal{F}_{l_2-1}
		\right]\right]
		\\
		&&
		\leq
		C_1 \left( 1-\frac{c_4}{l_2^{2\alpha}}\right)
		\esp\left[
		\| T_{\omega,(l_2-1)n_0} \cdots T_{\omega,u} \varphi_0 \|^{-s}
		\right].
	\end{eqnarray*}
	Iterating,
	\begin{eqnarray*}
		\esp\left[
			\| T_{\omega,n} \cdots T_{\omega,u} \varphi_0\|^{-s}
		\right]
		&\leq&
		C_1
		\prod^{l_2}_{j=l_1} \left( 1-\frac{c_4}{j^{2\alpha}}\right)
		\esp\left[
		\| T_{\omega,m+r_1} \cdots T_{\omega,u} \varphi_0\|^{-s}
		\right].
	\end{eqnarray*}
	Just as we did in the previous lemma,
	\begin{eqnarray*}
		1 = \|T_{\omega,u}^{-1} \cdots T_{\omega,u+r_1}^{-1} T_{\omega,u+r_1} \cdots T_{\omega,u}\varphi_0\|
		\leq 
		\prod^{u+r_1}_{j=u} \| T_{\omega,j} \| \cdot  \|T_{\omega,u+r_1} \cdots T_{\omega,u}\varphi_0\|,
	\end{eqnarray*}
	so that, by \eqref{eq:bound-T},
	\begin{eqnarray*}
		\esp\left[
			\|T_{\omega,u+r_1} \cdots T_{\omega,u}\varphi_0\|^{-s}
		\right]
		\leq C_2,
	\end{eqnarray*}
	for some $C_2=C_2(I)>0$.
	Hence,
	\begin{eqnarray*}
		\esp\left[
			\| T_{\omega,n} \cdots T_{\omega,u}\varphi_0\|^{-s}
		\right]
		\leq
		C
		\prod^{l_2}_{j=l_1} \left( 1-\frac{c_4}{j^{2\alpha}}\right)
		\leq
		C
		\e^{cu^{1-2\alpha}}\e^{-cn^{1-2\alpha}},
	\end{eqnarray*}
	for some suitable $C=C(I)\in(0,\infty)$ and $c=c(I)>0$.

\end{proof}

\begin{proof}[Proof of Theorem \ref{thm:FM-Green}]
	Let $u\geq 1$ and $n\geq u + n_0$ where $n_0=n_0(I)\geq 1$ is taken from Lemma \ref{thm:first-bound-Tmn}. Lemma \ref{thm:key-lemma} can be combined with the bound \eqref{eq:resolvent-to-transfer-matrix} to finish the proof of the first estimate of Theorem \ref{thm:FM-Green}. The second estimate can be proved using \eqref{eq:resolvent-to-transfer-matrix-2} and \eqref{eq:resolvent-to-transfer-matrix-3} instead of \eqref{eq:resolvent-to-transfer-matrix} and the analogue of Lemma \ref{thm:key-lemma} for the transfer matrices in the second system of coordinates.
	\newline
	If $1\leq n \leq u+n_0$, we just use the a priori estimate appearing in Lemma \ref{thm:FM-a-priori}.
\end{proof}


\subsection{Correlators and dynamical localization}\label{sec:correlators}

We follow the approach of \cite[Chapter 7]{AW}.
Recall the definition of the correlator \eqref{correlator},
\begin{equation*} 
	Q_\omega(u,\sigma;n,\sigma';I)=\sup_{\substack{f\in C_0(I)\\ \norm{f}_\infty\le1}}
	 \left| \langle \delta_u^{\sigma} , P_I(D_{\omega})f(\D) \delta_n^{\sigma'} \rangle \right|,
\end{equation*}
We will need to work with finite volume correlators
in boxes $\Lambda_L$,
\begin{equation*} 
	Q_{\omega,L}(u,\sigma;n,\sigma';I)=\sup_{\substack{f\in C_0(I)\\ \norm{f}_\infty\le1}}
	 \left| \langle \delta_u^{\sigma} , P_I(D_{\omega,L})f(D_{\omega,L}) \delta_n^{\sigma'} \rangle \right|.
\end{equation*}
Here, we allowed ourselves a slight abuse of notation where the operator $D_{\omega,L}$ coincides with the definition given at the beginning of Section \ref{sec:FME} if $\sigma'=-$ but will be understood to be $D'_{\omega,L}$ if $\sigma'=+$. The argument is of course identical in both cases. 
Since $D_{\omega,L}\to D_{\omega}$ as $L\to\infty$ in the strong resolvent sense, $\p$-a.s,  we have
\begin{eqnarray*}
	Q_{\omega}(u,\sigma;n,\sigma';I) \leq \liminf_L Q_{\omega,L}(u,\sigma;n,\sigma';I).
\end{eqnarray*}
By Fatou's lemma, we get
\begin{eqnarray}
	\label{eq:sum-Q-1}
	\esp\left[ Q_{\omega}(u,\sigma;n,\sigma';I)^2\right]
	&\leq&\liminf_L \esp\left[  Q_{\omega,L}(u,\sigma;n,\sigma';I)^2\right],
\end{eqnarray}
so that it is enough to bound the correlators in boxes, uniformly in the size of the box.
At this point, it is convenient to work with the interpolated eigenfunction correlator or $s$-correlator defined as
\begin{eqnarray}\label{s correlator}
	Q_{\omega,L}(u,\sigma;n,\sigma';I,s)
	 = 
	 \sum_{E\in I \cap \sigma(D_{\omega,L})} 
	 \left| \langle \delta_u,\, P_{{E}}(D_{\omega,L}) \delta_u \rangle\right|^{1-s}
	 \left| \langle \delta_u,\, P_{{E}}(D_{\omega,L}) \delta_n \rangle\right|^{s},
\end{eqnarray}
for $s\in[0,1]$,
where $ P_{{E}}(D_{\omega,L})$ is the projection of $D_{\omega,L}$ on the eigenspace corresponding to $E$. By Cauchy-Schwarz inequality for the kernel of $P_{{E}}(D_{\omega,L})$, we obtain
\begin{eqnarray*}
	Q_{\omega,L}(u,\sigma;n,\sigma';I)^2 \leq Q_{\omega,L}(u,\sigma;n,\sigma';I,s)\, Q_{\omega,L}(n,\sigma';u,\sigma;I,s), \quad \text{for all}  \quad s\in[0,1].
\end{eqnarray*}
The bound \eqref{eq:sum-Q-1} becomes
\begin{eqnarray}\label{eq:correlator-to-s-correlator}
	\esp\left[ Q_{\omega}(u,\sigma;n,\sigma';I)^2\right]
	&\leq&
	\liminf_L \esp\left[  Q_{\omega,L}(u,\sigma;n,\sigma';I,s)\, Q_{\omega,L}(n,\sigma';u,\sigma;I,s)\right]\\
	&\leq&
	\liminf_L  \esp\left[  Q_{\omega,L}(u,\sigma;n,\sigma';I,s)\right],
\end{eqnarray}
since $Q_{\omega,L}(n,\sigma';u,\sigma;I)\leq 1$. In Lemma \ref{thm:correlator-FM} below, we will show that the expected value of the $s$-correlators \eqref{s correlator} can be estimated in terms of the fractional moments of the Green's function. Together with with Theorem \ref{thm:FM-Green}, this will finish the proof of Theorem \ref{thm:DL}. We defer the details to the end of the section.


The last step consists then in controling the $s$-correlator \eqref{s correlator} by the resolvent.
Let $D^v_{L,u,\sigma} = D_{\omega,L} +(v-V_{\omega}(u,\sigma)) {\bf 1}_{u,\sigma}$ be the operator resulting from setting the disorder at site $(u,\sigma)$ to the value $v$, where we simply denoted
\begin{eqnarray*}
	V_{\omega}(u,\sigma)
	=
	\left\{
		\begin{array}{ll}
			\vo(u), & \text{ if} \, \sigma=+,
			\\
			\vt(u), & \text{ if} \, \sigma=-.
		\end{array}
	\right.
\end{eqnarray*}
We borrow the following identity from \cite[Lemma 7.10]{AW}:
\begin{eqnarray}\label{eq:identity-correlator}
	Q_{\omega,L}(u,\sigma;n,\sigma';I,s)
	 =
	 \int _I
	 \left| \langle \delta_u^{\sigma},\, (D^v_{L,u,\sigma}-E)^{-1} \delta_n^{\sigma'} \rangle\right|^{s}
	 \left| V_{\omega}(u,\sigma)-v\right|^{s}
	 \mu_{\delta_{u^\sigma}}(dE),
\end{eqnarray}
where $\mu_{\delta_u,\sigma}$ is the spectral measure of $D_{\omega,L}$ on $\delta_u^{\sigma}$. 
\newline
We also recall the spectral averaging principle
\begin{eqnarray*}
	\int_{\R}
	\int _I
	 \left| \langle \delta_u^{\sigma},\, (D^v_{L,u,\sigma}-E)^{-1} \delta_n^{\sigma'} \rangle\right|^{s}
	 \mu_{\delta_u^\sigma}(dE)
	 \, dv
	=
	 \int _I
	 \left| \langle \delta_u^{\sigma},\, (D^v_{L,u,\sigma}-E)^{-1} \delta_n^{\sigma'} \rangle\right|^{s}
	 dE,
\end{eqnarray*}
where the dependence in $v$ in the integrand on the left-hand-side is hidden in $\mu_{\delta_u,\sigma}$.
\bigskip

The following is a straightforward adaptation of \cite[Theorem 7.11]{AW} to the inhomogeneous potential case.
\begin{lemma}\label{thm:correlator-FM}
	Assume \textbf{(A1)}, \textbf{(A5)} and \textbf{(A6)}, and let $q=\tfrac{p}{p-1}$.
	For all $s\in[0,\frac{p-1}{p})$, there exists constants $C=C(s)\in(0,\infty)$ and $\kappa=\kappa(s)\geq 0$ such that
	\begin{eqnarray}\label{eq:correlator-FM}
		&&
		\esp\left[Q_{\omega,L}(u,\sigma;n,\sigma';I,s)\right]
		\leq 
		C (\lambda a_u)^{-\kappa}
		\left(
			 \int _I
		 		\esp\left[
	 				\left| G_{\omega,L}(u,\sigma;n,\sigma';E)\right|^{qs}
		 		\right]
		 	dE
		 \right)^{\frac{1}{q}},
	\end{eqnarray}
	for all $L\geq 1$, $u,n\in[1,L]$, $\sigma,\sigma' \in \{+,-\}$ and all interval $I\subset \R$.
\end{lemma}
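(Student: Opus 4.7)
The proof closely follows \cite[Theorem 7.11]{AW}, adapted to track the explicit dependence on the inhomogeneous density $\rho_{u,\sigma}$. The plan relies on three classical tools: the identity \eqref{eq:identity-correlator}, the spectral averaging principle for rank-one perturbations, and H\"older's inequality with conjugate exponents $p$ and $q = p/(p-1)$.

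First I would exploit the fact that \eqref{eq:identity-correlator} holds for every $v\in\R$. Multiplying by $\rho_{u,\sigma}(v)$ and integrating over $v$ (using that the left-hand side is independent of $v$ and $\int \rho_{u,\sigma}=1$), followed by H\"older's inequality in the $v$-variable with the factorization $\rho_{u,\sigma}(v) = \rho_{u,\sigma}(v)^{1/q}\rho_{u,\sigma}(v)^{1/p}$, yields
\begin{eqnarray*}
Q_{\omega,L} \leq \int_I \left(\int_\R \rho_{u,\sigma}(v) |G^v|^{qs} dv\right)^{1/q}\left(\int_\R \rho_{u,\sigma}(v) |V_\omega(u,\sigma)-v|^{ps} dv\right)^{1/p} \mu_{\delta_u^\sigma}(dE).
\end{eqnarray*}
The key observation is that $\int_\R \rho_{u,\sigma}(v)|G^v|^{qs}dv = \esp[|G|^{qs}|\mathcal{G}_{u,\sigma}]$, where $\mathcal{G}_{u,\sigma}$ is the sigma-algebra generated by all disorder variables distinct from $V_\omega(u,\sigma)$: averaging $G^v$ against $\rho_{u,\sigma}(v)dv$ is the same as taking expectation over $V_\omega(u,\sigma)$, thereby replacing $G^v$ by the full Green's function $G$ of $D_{\omega,L}$.

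Next I would take total expectation, apply H\"older once more over the probability space, and invoke spectral averaging for the rank-one perturbation at $(u,\sigma)$: combined with $\mu_{\delta_u^\sigma}(I)\leq 1$, this allows converting the integral against the random spectral measure $\mu_{\delta_u^\sigma}(dE)$ into an integral against Lebesgue measure $dE$, at a price controlled by $\|\rho_{u,\sigma}\|_p^p\leq C(\lambda a_u)^{-\gamma}$ via \textbf{(A6a)}. This produces the Green's function part of the bound. The second H\"older factor is estimated using $|V-v|^{ps}\leq 2^{ps}(|V|^{ps}+|v|^{ps})$, together with \textbf{(A6b)} for $\int|v|^{ps}\rho_{u,\sigma}(v)^p dv$ and \textbf{(A3a)} (or \textbf{(A5)}) for $\esp[|V_\omega(u,\sigma)|^{ps}]$; since the eventual dynamical localization argument only needs one small value of $s$, one may restrict to $s$ sufficiently small so that all relevant exponents lie within the admissible ranges of \textbf{(A6)}.

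Combining these estimates yields the desired bound. The main obstacle lies in the spectral averaging step: one must carefully convert the random measure $\mu_{\delta_u^\sigma}(dE)$ to Lebesgue measure while keeping track of the density contribution. This is a standard application of the Simon-Wolff theory of rank-one perturbations and is responsible for the inhomogeneous factor $(\lambda a_u)^{-\kappa}$ appearing in the statement.
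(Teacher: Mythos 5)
Your proposal reverses the order of operations in a way that creates a real difficulty at the spectral averaging step. The paper's proof first computes the expectation $\esp_{u,\sigma}$ over $V_\omega(u,\sigma)=\tau$, which produces the integral $\int_\R\int_I |G^v(E)|^s|\tau-v|^s\,\mu_{\delta_u^\sigma}(dE;\tau)\,\rho_{u,\sigma}(\tau)\,d\tau$; it then applies H\"older with respect to the (Lebesgue $\times$ spectral) measure $d\tau\,\mu(dE;\tau)$, putting $\rho_{u,\sigma}(\tau)^p|\tau-v|^{ps}$ on one side and leaving $|G^v|^{qs}$ integrated against the bare $d\tau\,\mu(dE;\tau)$, so the spectral averaging formula $\int_\R\mu(dE;\tau)\,d\tau=dE$ applies cleanly; the free parameter $v$ is only spent at the very end, against the weighted probability measure $c_{u,\sigma}(1+|v|)^{-qs}\rho_{u,\sigma}(v)\,dv$, which is precisely what reproduces the full expectation $\esp[|G|^{qs}]$. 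In your proposal, you spend $v$ immediately by averaging against $\rho_{u,\sigma}(v)\,dv$; after H\"older in $v$, the factor $A(E)=\esp_{u,\sigma}[|G(E)|^{qs}]$ no longer depends on $V_\omega(u,\sigma)$, and you are left with $\int_I A(E)^{1/q}B(\omega)^{1/p}\,\mu_{\delta_u^\sigma}(dE)$, where $\mu$ still depends on $V_\omega(u,\sigma)$. To convert $\mu(dE)$ into $dE$ you must average over $V_\omega(u,\sigma)=\tau$ against the probability density $\rho_{u,\sigma}(\tau)\,d\tau$, not Lebesgue measure; spectral averaging then only gives $\esp_{u,\sigma}[\mu(S)]\leq\|\rho_{u,\sigma}\|_p\,|S|^{1/q}$ (a H\"older-continuity bound on the averaged measure), not a pointwise domination by a multiple of Lebesgue measure. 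Since the paper explicitly drops the assumption that $\rho$ is bounded, you cannot pass to $\esp_{u,\sigma}[\mu(dE)]\leq C\,dE$. Controlling $\int_I A(E)\,\esp_{u,\sigma}[\mu(dE)]$ from a modulus-of-continuity bound would require a Lorentz-type argument with a higher power of $G$, which would not produce the stated right-hand side. Your phrase ``at a price controlled by $\|\rho_{u,\sigma}\|_p^p$'' glosses over exactly this obstruction; the paper's ordering is not cosmetic but is what separates the density from the spectral measure so that spectral averaging can be invoked in its standard Lebesgue form.

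A secondary, smaller mismatch: after your first H\"older in $v$ (with $\rho=\rho^{1/p}\rho^{1/q}$) the second factor is $\bigl(\int_\R\rho_{u,\sigma}(v)\,|V_\omega(u,\sigma)-v|^{ps}\,dv\bigr)^{1/p}$, with $\rho$ to the first power; yet you propose to estimate $\int|v|^{ps}\rho_{u,\sigma}(v)^p\,dv$ via \textbf{(A6b)}, which concerns $\rho^p$. The quantity that actually appears, $\int|v|^{ps}\rho_{u,\sigma}(v)\,dv$, is a moment of the law and is handled by \textbf{(A3a)}/\textbf{(A5)}, not \textbf{(A6b)}. This is easily repaired, but it is a sign that the H\"older exponents have drifted out of alignment with the hypotheses, which were written for the paper's particular factorization where $\rho^p$ genuinely appears.
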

\begin{proof}
	Let $\esp_{u,\sigma}$ denote the expected value with respect to the random variable $V_{\omega}(u,\sigma)$ and let $\rho_{u,\sigma}$ be the corresponding density.  Let $q=\frac{p}{p-1}$. Using \eqref{eq:identity-correlator} and H\"older's inequality,
	\begin{align*}
		\esp_{u,\sigma}&\left[Q_{\omega,L}(u,\sigma;n,\sigma';I,s)\right]
=
		\esp_u\left[
		\int _I
	 \left| \langle \delta_u^{\sigma},\, (D^v_{L,u,\sigma}-E)^{-1} \delta_n^{\sigma'} \rangle\right|^{s}
	 \left| V_\omega(u,\sigma)-v\right|^{s}
	 \mu_{\delta_{u,\sigma}}(dE)
	 \right]
	 \\
		&=
		\int_{\R}
		\int _I
	 \left| \langle \delta_u^{\sigma},\, (D^v_{L,u,\sigma}-E)^{-1} \delta_n^{\sigma'} \rangle\right|^{s}
	 |\tau-v|^s
	 \mu_{\delta_{u,\sigma}}(dE)
	 \rho_{u,\sigma}(\tau)\, d\tau\\
	 &\le
	 \left(
	 	\int_{\R} |\tau-v|^{ps} \rho_{u,\sigma}(\tau)^p d\tau
	 \right)^{\frac{1}{p}}
	 \left(
	 	\int_{\R}
		\int _I
	 	\left| \langle \delta_u^{\sigma},\, (D^v_{L,u,\sigma}-E)^{-1} \delta_n^{\sigma'} \rangle\right|^{qs}
	 	\mu_{\delta_{u,\sigma}}(dE)
	 	\, d\tau
	 \right)^{\frac{1}{q}}
	 \\
	 &\le 
	 \left(
	 	\int_{\R} |\tau-v|^{ps} \rho_{u,\sigma}(\tau)^p d\tau
	 \right)^{\frac{1}{p}}
	 \left(
		\int _I
	 	\left| \langle \delta_u^{\sigma},\, (D^v_{L,u,\sigma}-E)^{-1} \delta_n^{\sigma'} \rangle\right|^{qs}
	 	dE
	 \right)^{\frac{1}{q}},
	\end{align*}
	by the spectral averaging principle. Now, observe that
	\begin{eqnarray*}
		\int_{\R} |\tau-v|^{ps} \rho_{u,\sigma}(\tau)^p d\tau
		&\leq&
		(1+|v|)^{ps}
		\int_{\R} (1+|\tau|)^{ps} \rho_{u,\sigma}(\tau)^p d\tau
		\\
		&\leq&
		C (\lambda a_u)^{-\gamma} (1+|v|)^{ps},
	\end{eqnarray*}
	for some $C\in(0,\infty)$ and $\gamma \geq 0$ thanks to \textbf{(A6a)}.
	So far,
	\begin{eqnarray*}
		&&
		\esp_{u,\sigma}\left[Q_L(u,\sigma;n,\sigma';I,s)\right]^q
	 	\leq
	 	C (\lambda a_u)^{-\frac{\gamma q}{p}} (1+|v|)^{qs}
			\int _I
	 		\left| \langle \delta_u^{\sigma},\, (D^v_{L,u,\sigma}-E)^{-1} \delta_n^{\sigma'} \rangle\right|^{qs}
	 		dE,
	\end{eqnarray*}
	for all $v\in\R$.
	This can be integrated against $c_{u,\sigma}(1+|v|)^{-qs} \rho_{u,\sigma}(v)\, dv$ where
	\begin{eqnarray*}
		c_{u,\sigma}^{-1} 
		= 
		\int_{\R}(1+|v|)^{-qs} \rho_{u,\sigma}(v)\, dv
		\geq
		c (\lambda a_u)^{\gamma},
	\end{eqnarray*}
	for some $c>0$ thanks to \textbf{(A6b)},
	to obtain
	\begin{eqnarray*}
		&&
		\esp_{u,\sigma}\left[Q_{\omega,L}(u,\sigma;n,\sigma';I,s)\right]^q
	 	\leq
	 	C (\lambda a_u)^{-\kappa}
			 \int _I
		 	\esp_{u,\sigma}\left[
	 		\left| \langle \delta_u^{\sigma},\, (D^v_{L,u,\sigma}-E)^{-1} \delta_n^{\sigma'} \rangle\right|^{qs}
		 \right]
		 dE,
	\end{eqnarray*}
	for some $C\in(0,\infty)$ and $\kappa = \kappa(s,\gamma)>0$. Finally, denoting by $\widehat{\esp}_{u,\sigma}$ the average with respect to the remaining disorder variables, we have
	\begin{eqnarray*}
		&&
		\esp\left[Q_{\omega,L}(u,\sigma;n,\sigma';I,s)\right]^q
		=
		\widehat{\esp}_{u,\sigma}\left[
			\esp_{u,\sigma}\left[
				Q_{\omega,L}(u,\sigma;n,\sigma';I,s)
			\right]
		\right]^q
		\\
		&&
		\phantom{blablablabla}
		\leq
		\widehat{\esp}_{u,\sigma}\left[
			\esp_{u,\sigma}\left[
				Q_{\omega,L}(u,\sigma;n,\sigma';I,s)
			\right]^q
		\right]
		\\
		&&
		\phantom{blablablabla}
		\leq 
		C (\lambda a_u)^{-\kappa}
			 \int _I
		 	\esp\left[
	 		\left| \langle \delta_u^{\sigma},\, (D^v_{L,u,\sigma}-E)^{-1} \delta_n^{\sigma'} \rangle\right|^{qs}
		 \right]
		 dE.
	\end{eqnarray*}
	This finishes the proof.
\end{proof}
We complete the proof of Theorem \ref{thm:DL}.
\begin{proof}[Proof of Theorem \ref{thm:DL}]
	Fix $u\in\n^*$.
	Combining the bound \eqref{eq:correlator-to-s-correlator} together with Lemma \ref{thm:correlator-FM}, we obtain
	\begin{eqnarray*}
		\esp[Q_\omega(u,\pm;n,-;I)^2]
		\leq
		Ca_u^{-1} 
		\liminf
		\int_I
		\esp\left[
			\left| G_{\omega,L}(u,\pm;n,-;E)\right|^s
		\right]\, dE.
	\end{eqnarray*}
	We can use Theorem \ref{thm:FM-Green} with $s$ small enough to bound this last quantity uniformly in the size of the box and over the interval $I$. This gives the bound
	\begin{eqnarray*}
		\esp[Q_\omega(u,\pm;n,-;I)^2]
		\leq
		C \lambda^{-2s} a_n^{-2s} \e^{-c n^{1-2\alpha}},
	\end{eqnarray*}
	for all $n$, for some constants $C=C(u,s)\in(0,\infty)$ and $c=c(s,I)>0$. All the other entries $(u,\sigma)$ and $(n,\sigma')$ can be handled similarly using
	completely analogous estimates for the corresponding resolvents.
	This proves \eqref{eq:FM-bound}. Finally,
	\begin{eqnarray*}
		\sum_{n,\sigma'} \esp[Q_\omega(u,\sigma;n,\sigma';I)^2] 
		\leq 
		C\lambda^{-2s} \sum_n a_n^{-2s} \e^{-c n^{1-2\alpha}}<\infty,
	\end{eqnarray*}
	with $C$ as above, which shows dynamical localization.
\end{proof}


 \subsection{Proof of Proposition \ref{thm:consequences-DL} and \ref{thm:SULE} and Theorem \ref{thm:lower-bound-DL}}\label{sec:consequences-DL}


We prove Proposition \ref{thm:consequences-DL}:
\begin{proof}[Proof of Proposition \ref{thm:consequences-DL}]
	Suppose that \eqref{eq:DL} holds in an energy interval $I$. We will prove that the spectrum of $\D$ is almost surely pure point in $I$. This is a consequence of the RAGE Theorem \cite{CFKS}. Let $\chi_R$ the characteric function of the box $[0,R]$. Since $\chi_R$ converges strongly to the identity as $R\to\infty$, it is enough to show that, $\p$-almost surely,
	\begin{eqnarray}\label{eq:RAGE}
		\lim_{R\to\infty} \sup_t \left\| \left( 1 - \chi_R \right) \e^{-it\D}P_I(\D) \delta_u^{\sigma} \right\|^2 =0,
	\end{eqnarray}
	for all canonical vectors $\delta_u^{\sigma}$ since this implies that the range of $P_I(D_{\omega,\lambda})$ is almost surely included in the point spectrum of $\D$. Now,
	\begin{eqnarray*}
		\left\| \left( 1 - \chi_R \right) \e^{-it\D}P_I(\D) \delta_u^{\sigma} \right\|^2 
		&=&
		\sum_{\substack{|n|>R \\ \sigma'=\pm}} \left| \langle \delta_n^{\sigma'}, \e^{-it\D}P_I(\D) \delta_u^{\sigma} \rangle\right|^2\\
		&=&
		\sum_{\substack{|n|>R \\ \sigma'=\pm}} \left| \langle P_I(\D) \ \e^{-it\D} \delta_n^{\sigma'}, \delta_u^{\sigma} \rangle\right|^2\\
		&\leq&
		\sum_{|u|>R} Q_\omega(u,\sigma;n, \sigma';I)^2,
	\end{eqnarray*}
	for all $t\in\R$. By Fatou's lemma,
	\begin{eqnarray*}
		\esp\left[  \lim_{R\to\infty} \sup_t \left\| \left( 1 - \chi_R \right) \e^{-it\D}P_I(\D) \delta_u^{\sigma} \right\|^2 \right]
		&\leq&
		\esp\left[  \liminf_{R\to\infty} 
		\sum_{\substack{|n|>R \\ \sigma'=\pm}} Q_\omega(u,\sigma;n, \sigma';I)^2
		\right]\\
		&\le&
		\liminf_{R\to\infty} \esp\left[   
		\sum_{\substack{|n|>R \\ \sigma'=\pm}} Q_\omega(u,\sigma;n, \sigma';I)^2
		\right]
		=0,
	\end{eqnarray*}
	which shows that \eqref{eq:RAGE} holds $\p$-almost surely for each $u$ and $\sigma$. Hence, for each $u$ and $\sigma$, there exists $\Omega_{u,\sigma} \subset \Omega$ with $\p(\Omega_{u,\sigma})=1$ such that \eqref{eq:RAGE} holds for all $\omega\in\Omega_{u,\sigma}$. Finally, the set $\displaystyle\widetilde \Omega := \bigcap_{u,\sigma} \Omega_{u,\sigma}$ is so that $\p(\widetilde \Omega)=1$ and \eqref{eq:RAGE} holds for all $u$ and $\sigma$ simultaneously for all $\omega \in \widetilde \Omega$.

	We turn to the moments bounds. Let $p>0$,
	\begin{eqnarray*}
		&&\esp\left[\sup_t
					\left\|
						|\X|^{\frac{p}{2}}\e^{-it\D} P_I(\D)\delta_u^{\sigma}
					\right\|^2
				\right]\\
		&& \phantom{blablablab} =
		\esp\left[
					\langle
						|{\bf X}|^{p}\e^{-it\D}P_I(\D)\delta_u^{\sigma},
						e^{-itD_{\omega,\lambda}}\delta_u
					\rangle
				\right]\\
		&& \phantom{blablablab} =
		\esp\left[\sup_t
					\sum_{n,\sigma'}
					\langle
						|{\bf X}|^{p}\e^{-it\D}P_I(\D)\delta_u^{\sigma}, \delta_n^{\sigma'}
					\rangle
					\langle
						\delta_n^{\sigma'},
						\e^{-it\D}P_I(\D)\delta_u^{\sigma}
					\rangle
				\right]\\
		&& \phantom{blablablab} \leq
		\esp\left[\sup_t
					\sum_{n,\sigma'}
					|n|^p
					\left|
					\langle
						\delta_n^{\sigma'},
						\e^{-it\D}P_I(\D)\delta_u^{\sigma}
					\rangle
					\right|^2
				\right]\\
		&& \phantom{blablablab} \leq
		\esp\left[
					\sum_{n,\sigma'}
					|n|^p
					Q_\omega(u,\sigma;n, \sigma';I)^2
				\right],
	\end{eqnarray*}
	which is bounded for each $u$ and $\sigma$ in virtue of \eqref{eq:FM-bound}.
\end{proof}
Finally, we provide sketches of proof for Proposition \ref{thm:SULE} and Theorem \ref{thm:lower-bound-DL} as the arguments are either standard or have been developed elsewhere in this work.
\begin{proof}[Proof of Proposition \ref{thm:SULE}]
	The upper bound is standard and follows from the estimate in Theorem \ref{thm:DL} (see \cite[Theorem 9.22]{CFKS}). The lower bound can be obtained as in Lemma \ref{thm:lower-bound-eigenfunctions} above.
\end{proof}
\begin{proof}[Proof of Proposition \ref{thm:lower-bound-DL}]
	The first statement is completely analogous to the bound on the moments in Proposition \ref{thm:consequences-DL}. The second statement can be obtained following the strategy of proof of Theorem \ref{thm:transport} using the lower bound in Proposition \ref{thm:SULE} instead of Lemma \ref{thm:lower-bound-eigenfunctions}.
\end{proof}


\appendix


\section{Some technical estimates}



\subsection{Unimodular matrices}\label{app:unimodular}

The following lemmas correspond to \cite[Lemma 2.2 and 8.7]{KLS}. The first one allows us to establish the upper bound in Lemma \ref{thm:comparison}. The proof of Proposition \ref{thm:decay-eigenfunctions} is given after the second one. At the end of the section, we state \cite[Theorem 8.3]{LS} which is used to prove pure point spectrum in the sub-critical regime.

\begin{lemma}
	Let $A$ be an unimodular matrix and let $\hat \theta = (\cos \theta,\, \sin \theta)$. Then, for all pair of angles $|\theta_1-\theta_2|\leq \frac{\pi}{2}$,
	\begin{eqnarray*}
		\| A \| \leq \sin \left( \tfrac{|\theta_1-\theta_2|}{2}\right)^{-1} \max \{ \| A \hat \theta_1\|,\, \| A\hat \theta_2\| \}.
	\end{eqnarray*}
\end{lemma}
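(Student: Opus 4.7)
The plan is to reduce the inequality to a two-dimensional trigonometric estimate via the singular value decomposition. First I would use that $A$ is unimodular to write its singular values as $\sigma_1 \geq \sigma_2 > 0$ with $\sigma_1 \sigma_2 = 1$; in particular $\|A\| = \sigma_1$ and $\sigma_2 = 1/\sigma_1 \leq 1$. Writing $A = U\Sigma V^{T}$ with $U,V$ orthogonal and $\Sigma = \mathrm{diag}(\sigma_1,\sigma_2)$, and denoting by $v_1, v_2$ the columns of $V$ (the right singular vectors), each unit vector $\hat\theta_i$ admits a representation $\hat\theta_i = \cos\phi_i\, v_1 + \sin\phi_i\, v_2$ for some angle $\phi_i$. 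Since $\{v_1,v_2\}$ is an orthonormal basis, the inner product is preserved, so $\cos(\phi_1-\phi_2) = \hat\theta_1 \cdot \hat\theta_2 = \cos(\theta_1-\theta_2)$; hence one can arrange $|\phi_1-\phi_2| = |\theta_1-\theta_2| \leq \pi/2$.

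Next I would drop the small singular value to get a one-sided bound. By the SVD,
\begin{equation*}
\|A\hat\theta_i\|^2 = \sigma_1^2 \cos^2\phi_i + \sigma_2^2 \sin^2\phi_i \geq \sigma_1^2 \cos^2\phi_i,
\end{equation*}
so
\begin{equation*}
\max\{\|A\hat\theta_1\|, \|A\hat\theta_2\|\} \geq \|A\| \max\{|\cos\phi_1|,|\cos\phi_2|\}.
\end{equation*}
It therefore suffices to prove the elementary inequality
\begin{equation*}
\max\{|\cos\phi_1|,|\cos\phi_2|\} \geq \sin\bigl(\tfrac{|\phi_1-\phi_2|}{2}\bigr).
\end{equation*}

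For this last step, I would use the product-to-sum identity: with $\mu = (\phi_1+\phi_2)/2$ and $\delta = (\phi_1-\phi_2)/2$,
\begin{equation*}
\cos^2\phi_1 + \cos^2\phi_2 = 1 + \cos(2\mu)\cos(2\delta) \geq 1 - \cos(2\delta) = 2\sin^2\delta,
\end{equation*}
so that $2\max\{\cos^2\phi_1,\cos^2\phi_2\} \geq 2\sin^2\delta$, which is exactly what is needed. Substituting back and recalling $\sigma_1 = \|A\|$ concludes the argument.

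The computation is short and contains no real obstacle; the only point that requires a moment of care is the identification $|\phi_1-\phi_2| = |\theta_1-\theta_2|$ and the choice of the branch so as to preserve the hypothesis $|\phi_1-\phi_2|\leq \pi/2$, which is guaranteed by the orthogonality of $V$. The condition $|\theta_1-\theta_2|\leq \pi/2$ plays no essential role beyond ensuring that $\sin(|\theta_1-\theta_2|/2)$ lies in $[0,\sin(\pi/4)]$, making the estimate nontrivial.
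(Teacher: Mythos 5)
Your SVD argument is correct and self-contained; the paper itself simply cites \cite[Lemma 2.2]{KLS} without reproducing a proof, so there is no in-paper argument to compare against. The reduction to the scalar inequality $\max\{|\cos\phi_1|,|\cos\phi_2|\}\geq\sin(|\phi_1-\phi_2|/2)$ via the SVD of $A$, combined with the averaging bound $\max\{\cos^2\phi_1,\cos^2\phi_2\}\geq\tfrac12(\cos^2\phi_1+\cos^2\phi_2)$, is a clean and elementary route to the estimate.

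One remark at the end is wrong, though. You write that the condition $|\theta_1-\theta_2|\leq\pi/2$ ``plays no essential role beyond ensuring that $\sin(|\theta_1-\theta_2|/2)$ lies in $[0,\sin(\pi/4)]$, making the estimate nontrivial.'' In fact, your own key trigonometric step
\begin{equation*}
1+\cos(2\mu)\cos(2\delta)\geq 1-\cos(2\delta)
\end{equation*}
is equivalent to $\cos(2\delta)\bigl(1+\cos(2\mu)\bigr)\geq 0$, and since $1+\cos(2\mu)\geq 0$ always, it forces $\cos(2\delta)\geq 0$, i.e.\ $|\phi_1-\phi_2|\leq\pi/2$; your reduction $|\phi_1-\phi_2|=|\theta_1-\theta_2|$ makes this exactly the hypothesis. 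Without it the conclusion fails outright: for $A=\mathrm{diag}(\sigma,1/\sigma)$ with $\sigma$ large, $\theta_1=3\pi/8$ and $\theta_2=-3\pi/8$ (so $|\theta_1-\theta_2|=3\pi/4$), one has $\|A\hat\theta_i\|\approx\sigma\cos(3\pi/8)$ and $\sin(3\pi/8)^{-1}\approx 1.08$, giving a right-hand side of order $0.41\,\sigma$, far below $\|A\|=\sigma$. So the hypothesis does genuine work in the final trigonometric estimate; only your concluding commentary about it needs to be corrected, not the proof itself.
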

\begin{proof}
	See \cite[Lemma 2.2]{KLS}.
\end{proof}
The following lemma is used to find eigenfunctions with the proper decay and is the key to Proposition \ref{thm:decay-eigenfunctions}. 

\begin{lemma}\label{thm:eigendirection}
	For a unimodular matrix with $\| A \| >1$, define $\vartheta=\vartheta(A)$ as the unique angle $\vartheta \in (-\frac{\pi}{2},\frac{\pi}{2}]$ such that $\| A \hat\vartheta \| = \| A \|^{-1}$. We also define $r(A)=\norm{A\begin{pmatrix}1\\0\end{pmatrix}}.\norm{A\begin{pmatrix}0\\1\end{pmatrix}}^{-1} $.
	\newline
	Let $(A_n)_n$ be a sequence of unimodular matrices with $\| A_n \| >1$ and write $\vartheta_n=\vartheta(A_n)$ and $r_n = r(A_n)$.
	Assume that
	\begin{enumerate}[label=\alph*.-]
		\item[(i)] $\displaystyle \lim_{n\to \infty} \| A_n\| = \infty$,
		
		\vspace{1ex}
		
		\item[(ii)] $\displaystyle \lim_{n\to\infty} \frac{\| A_{n+1}A_n^{-1}\|}{\| A_n \| \, \| A_{n+1}\|}=0$.
	\end{enumerate}
	Then,
	\begin{enumerate}
		\item $(\vartheta_n)_n$ has a limit $\vartheta_{\infty}\in (-\pi/2,\pi/2)$ if and only if $(r_n)_n$ has a limit $r_{\infty}\in[0,\infty)$. If $\vartheta_n\to\pm\pi/2$, then $r_n\to\infty$ but, if $r_n\to\infty$, we can only conclude that $|\vartheta_n|\to\pi/2$.
		
		\vspace{1ex}
		
		\item Suppose $(\vartheta_n)_n$ has a limit $\vartheta_{\infty}\neq 0,\, \frac{\pi}{2}$. Then,
		\begin{eqnarray}\label{eq:asymptotic-conditions}
			\lim_{n\to\infty} \frac{\log \| A_n \hat\vartheta_{\infty}\|}{\log \| A_n \|} = -1
			\quad
			\text{if and only if}
			\quad
			\limsup_n \frac{\log |r_n - r_{\infty}|}{\log \| A_n \|} \leq -2.
		\end{eqnarray}
	\end{enumerate}
\end{lemma}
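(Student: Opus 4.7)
My plan is to read off both parts of the lemma from the singular value decomposition of each $A_n$. For a unimodular matrix $A$ with $\|A\|>1$, write $A=U\Sigma V^T$ with $\Sigma=\mathrm{diag}(\|A\|,\|A\|^{-1})$ and $V$ a rotation by the angle $\vartheta(A)\in(-\pi/2,\pi/2]$ of the contracting axis of $A$. A direct computation then yields the master identity
\[
\|A\hat\theta\|^2=\|A\|^2\sin^2(\theta-\vartheta(A))+\|A\|^{-2}\cos^2(\theta-\vartheta(A)),
\]
valid for every angle $\theta$. Specializing $\theta=0,\pi/2$ produces an explicit formula for $r(A)^2$ in terms of $\|A\|$ and $\vartheta(A)$ which, after a short algebraic manipulation, takes the form
\[
r(A)^2-\tan^2\vartheta(A)=\frac{\cos(2\vartheta(A))}{\cos^2\vartheta(A)\bigl(\|A\|^4\cos^2\vartheta(A)+\sin^2\vartheta(A)\bigr)}.
\]
These two identities are the backbone of the proof.

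For Part (1), the first step is to deduce from condition (ii) that $\sin(\vartheta_{n+1}-\vartheta_n)\to 0$. Writing $\hat\vartheta_n=A_n^{-1}(A_n\hat\vartheta_n)$ and using $\|A_n\hat\vartheta_n\|=\|A_n\|^{-1}$ gives $\|A_{n+1}\hat\vartheta_n\|\le\|A_{n+1}A_n^{-1}\|/\|A_n\|$; the master identity applied to $A_{n+1}$ at $\hat\vartheta_n$ then yields
\[
|\sin(\vartheta_{n+1}-\vartheta_n)|\le\frac{\|A_{n+1}\hat\vartheta_n\|}{\|A_{n+1}\|}\le\frac{\|A_{n+1}A_n^{-1}\|}{\|A_n\|\,\|A_{n+1}\|}\xrightarrow[n\to\infty]{}0.
\]
Everything else in Part (1) then follows by combining this small-increment property with the explicit formula for $r_n^2$. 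If $\vartheta_n\to\vartheta_\infty\in(-\pi/2,\pi/2)$, the formula immediately gives $r_n\to|\tan\vartheta_\infty|\in[0,\infty)$. Conversely, if $r_n\to r_\infty\in[0,\infty)$, the same formula (together with $\|A_n\|\to\infty$) prevents $\cos^2\vartheta_n\to 0$, forces $\tan^2\vartheta_n\to r_\infty^2$, and the small-increment property rules out oscillation between the two roots $\pm\arctan r_\infty$. The implication $\vartheta_n\to\pm\pi/2\Rightarrow r_n\to\infty$ is a direct reading of the formula, whereas the converse fails because $\sin(\vartheta_{n+1}-\vartheta_n)\to 0$ only controls increments modulo $\pi$, so the sequence $(\vartheta_n)$ is free to oscillate between neighbourhoods of $-\pi/2$ and $\pi/2$.

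For Part (2), set $\phi_n=\vartheta_\infty-\vartheta_n\to 0$. The master identity at $\hat\vartheta_\infty$ reads
\[
\|A_n\hat\vartheta_\infty\|^2=\|A_n\|^2\sin^2\phi_n+\|A_n\|^{-2}\cos^2\phi_n,
\]
and since the right-hand side is bounded below by $\|A_n\|^{-2}\cos^2\phi_n\asymp\|A_n\|^{-2}$, one reads off that $\log\|A_n\hat\vartheta_\infty\|/\log\|A_n\|\to -1$ if and only if the first term above is asymptotically at most $\|A_n\|^{-2+o(1)}$, that is, $\limsup\log|\phi_n|/\log\|A_n\|\le -2$. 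To convert this into an assertion about $r_n$, I Taylor expand $\tan^2\vartheta_n$ around $\vartheta_\infty$ and combine with the second identity of the first paragraph to get
\[
r_n-r_\infty=c\,\phi_n+O(\phi_n^2)+O(\|A_n\|^{-4}),
\]
where the constant $c\ne 0$ precisely because $\vartheta_\infty\ne 0,\pi/2$. The two-sided bound
\[
\tfrac{|c|}{2}|\phi_n|-C\|A_n\|^{-4}\le|r_n-r_\infty|\le 2|c|\,|\phi_n|+C\|A_n\|^{-4},
\]
valid for $n$ large, then yields both directions of the advertised equivalence since the $O(\|A_n\|^{-4})$ error is always absorbed by the $\|A_n\|^{-2+o(1)}$ scale.

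The main obstacle will be the careful bookkeeping of the $o(1)$ terms in Part (2), in particular tracking that $c$ does not degenerate. The exclusion $\vartheta_\infty\ne 0,\pi/2$ in the hypothesis is what keeps both $c$ bounded away from $0$ and the denominators in the second identity of the first paragraph bounded below; any attempt to relax it would force a genuinely higher-order expansion.
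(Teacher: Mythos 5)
The paper does not actually prove this lemma: it simply cites \cite[Lemma 8.7]{KLS}. Your argument, therefore, can only be assessed on its own merits, and it checks out as a correct self-contained proof.

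Your two backbone identities are correct. Writing $A=U\Sigma V^{T}$ with $\Sigma=\mathrm{diag}(\|A\|,\|A\|^{-1})$ and $\hat\vartheta$ the contracting right singular direction gives
\[
\|A\hat\theta\|^{2}=\|A\|^{2}\sin^{2}(\theta-\vartheta)+\|A\|^{-2}\cos^{2}(\theta-\vartheta),
\]
and specializing to $\theta=0,\pi/2$ yields
\[
r(A)^{2}=\frac{\|A\|^{4}\sin^{2}\vartheta+\cos^{2}\vartheta}{\|A\|^{4}\cos^{2}\vartheta+\sin^{2}\vartheta},
\qquad
r(A)^{2}-\tan^{2}\vartheta=\frac{\cos 2\vartheta}{\cos^{2}\vartheta\,(\|A\|^{4}\cos^{2}\vartheta+\sin^{2}\vartheta)} ,
\]
as you claim. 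Your increment estimate $|\sin(\vartheta_{n+1}-\vartheta_{n})|\le\|A_{n+1}A_{n}^{-1}\|/(\|A_{n}\|\|A_{n+1}\|)$, obtained by feeding $\hat\vartheta_{n}$ into the master identity for $A_{n+1}$ and using $\|A_n\hat\vartheta_n\|=\|A_n\|^{-1}$, is exactly the right way to exploit hypothesis (ii). For the converse direction of Part (1) you should spell out a bit more why $r_n\to r_\infty<\infty$ forces $\cos^2\vartheta_n$ to stay bounded away from $0$: one verifies from the first displayed formula for $r^2$ that on any subsequence where $\cos^2\vartheta_{n_k}\to 0$ the numerator is $\sim\|A_{n_k}\|^4$ while the denominator is either bounded or $\sim\|A_{n_k}\|^4\cos^2\vartheta_{n_k}$, and in both cases $r_{n_k}\to\infty$. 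Once that is in hand, your oscillation argument (the putative jump between $\pm\arctan r_\infty$ would contribute $|\sin(\vartheta_{n+1}-\vartheta_n)|\approx\sin(2\arctan r_\infty)>0$, contradicting the increment estimate when $r_\infty>0$) is sound. Part (2) is also fine: the equivalence of the first condition with $\limsup\log|\phi_n|/\log\|A_n\|\le -2$ follows directly from the master identity at $\hat\vartheta_\infty$, and the identity $r_n-r_\infty=-\sec^{2}\vartheta_\infty\,\phi_n+O(\phi_n^2)+O(\|A_n\|^{-4})$ — with the error $O(\|A_n\|^{-4})$ always dominated at the $\|A_n\|^{-2+o(1)}$ scale, and the linear coefficient nondegenerate because $\vartheta_\infty\neq 0,\pi/2$ — transfers it to $r_n$.

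In short: your SVD approach is correct and cleanly organized. The only thing to tighten when writing it up is the subsequence argument showing $\cos^2\vartheta_n$ is bounded away from zero, and a remark that the increment bound is invoked only in the direction $r_n\to r_\infty\Rightarrow\vartheta_n$ converges (the other direction of Part (1) does not need hypothesis (ii) at all).
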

\begin{proof}
	See \cite[Lemma 8.7]{KLS}.
\end{proof}
We apply this with $A_n = {\bf T}_{\omega,n}$ in order to prove Proposition \ref{thm:decay-eigenfunctions}. For positive sequences $(b_n)_n$ and $(c_n)_n$, we write $b_n \simeq c_n$ if $\displaystyle\lim_{n\to\infty} \frac{b_n}{c_n}=1$ and denote $b_n \simless c_n$ if there exists a constant $K>0$ such that $b_n \leq K c_n$ for $n$ large enough, and $b_n \asymp c_n$ if $b_n \simless c_n \simless b_n$.
\begin{proof}[Proof of Proposition \ref{thm:decay-eigenfunctions}]
Define
\begin{eqnarray*}
	\Phi^{(1)}_n
	=
	\begin{pmatrix}
 		\vp^{(1)}_{+,n}
 		\\
 		\vp^{(1)}_{-,n}
 	\end{pmatrix}
	=
	{\bf T}_{\omega,n-1}
	\begin{pmatrix}
		1 \\ 0
	\end{pmatrix},
	\quad 
		\quad
	\Phi^{(2)}_n
	=
	\begin{pmatrix}
 		\vp^{(2)}_{+,n}
 		\\
 		\vp^{(2)}_{-,n}
 	\end{pmatrix}
	=
	{\bf T}_{\omega,n-1}
	\begin{pmatrix}
		0 \\ 1
	\end{pmatrix},
\end{eqnarray*}
and let $R^{(i)}_n,\, \theta^{(i)}_n,\, n\geq 1$, $i=1,\, 2$ be the corresponding Pr\"ufer radii and phases.
 We let $r_n=\frac{R^{(1)}_n}{R^{(2)}_n}$ and $\vartheta_n$ be as in Lemma \ref{thm:eigendirection}. Recall the relation $\Phi_n=\ppru_n \Psi_n$. In particular,
 \begin{eqnarray*}
 	\Phi_n^{(i)}
 	=
 	\begin{pmatrix}
 		\vp^{(i)}_{+,n}
 		\\
 		\vp^{(i)}_{-,n}
 	\end{pmatrix}
 	=
 	(-1)^{n-1}
 	R_n
 	\begin{pmatrix}
 		-\sqrt{p_2}\cos(\bar{\theta}_n^{(i)})
 		\\
 		\sqrt{-p_1}\cos(\bar{\theta}_n^{(i)}+k)
 	\end{pmatrix}.
 \end{eqnarray*}
Thus it follows from some elementary trigonometry that
\begin{eqnarray*}
	\vp^{(1)}_{+,n}\vp^{(2)}_{-,n}-\vp^{(1)}_{-,n}\vp^{(2)}_{+,n}
	=
	R^{(1)}_n R^{(2)}_n  \sin(2k) \sin(\theta^{(1)}_n-\theta^{(2)}_n),
\end{eqnarray*}
where $\bar{\theta}^{(i)}_n = \theta^{(i)}_n-(2n-1)k$.
On the other hand,
\begin{eqnarray*}
	\vp^{(1)}_{+,n}\vp^{(2)}_{-,n}-\vp^{(1)}_{-,n}\vp^{(2)}_{+,n}
	&=&
	\det\left(
	{\bf T}_{\omega,n}
	\begin{pmatrix}
		1 & 0 \\ 0 & 1
	\end{pmatrix}
	\right)	
	=1.
\end{eqnarray*}
This, together with the convergence
\begin{eqnarray*}
	\lim_{n\to\infty} \frac{R^{(i)}_n}{\log n} = \beta,\quad i=1,2,
\end{eqnarray*}
gives
\begin{eqnarray}\label{eq:limit-sines}
	\lim_{n\to\infty}\frac{\log |\sin(\theta^{(2)}_n-\theta^{(1)}_n)|}{\log n} 
	=\lim_{n\to\infty}\frac{\log |\sin(\bar\theta^{(2)}_n-\bar\theta^{(1)}_n)|}{\log n} 
	= -2\beta.
\end{eqnarray}
Remember the decomposition \eqref{eq:prufer-transform-s1-radii} that we summarize as
\begin{eqnarray*}
	(R^{(i)}_{n+1})^2
	&=&
	\Big{(}
	1
	-\frac{p_2}{\sin(2k)}\sin(2\bar\theta_n^{(i)})\vo(n)
	\\
	&&
	\phantom{blablablablabla}
	+\frac{p_1}{\sin(2k)}\sin(2(\bar\theta^{(i)}_n-k))\vt(n+1)
	+E_j^{(i)}
	\Big{)}
	(R^{(i)}_{n})^2.
\end{eqnarray*}
We have to estimate the difference of the expansions for
$\log R^{(1)}_n$ and $\log R^{(2)}_n$.
By \eqref{eq:limit-sines}, one has $|\sin(\bar\theta^{(2)}_n-\bar\theta^{(1)}_n)| \lesssim n^{-\beta+\epsilon}$, for any $\epsilon>0$.
Hence, there exist random sequences $(m_n)_n\subset\n^*$ and $(\Delta_n)_n\subset \R$ such that $\bar\theta^{(1)}_n-\bar\theta^{(2)}_n=m_n\pi + \Delta_n$ and $|\Delta_n|\lesssim n^{-\beta+\epsilon}$. Therefore,
\begin{equation*}
	\sin(2\bar\theta^{(2)}_n)
	=
	\sin(2\bar\theta^{(1)}_n+2\Delta_n)
	\simeq
	\sin(2\bar\theta^{(1)}_n)+2\cos(2\bar\theta^{(1)}_n)\Delta_n.
\end{equation*}
This shows that 
\begin{eqnarray*}
	\left|\vo(j) \left(\sin(2\bar{\theta}^{(1)}_j) - \sin(2\bar{\theta}^{(2)}_j)\right)\right| 
	&\lesssim& j^{-\frac{1}{2}-2\beta + \epsilon}.
\end{eqnarray*}
By means of similar arguments, one can show that
\begin{eqnarray*}
	\left|\vt(j+1) \left(\sin(2(\bar{\theta}^{(1)}_j-k)) - \sin(2(\bar{\theta}^{(2)}_j-k))\right)\right| 
	&\lesssim& j^{-\frac{1}{2}-2\beta + \epsilon}.
\end{eqnarray*}
and
\begin{eqnarray*}
	|E_j^{(1)}-E_j^{(2)}|
	&\lesssim&
	 j^{-1-2\beta + \epsilon}.
\end{eqnarray*}
Hence, 
\begin{eqnarray}
	\log r_n 
	&=& 
	-\frac{p_2}{\sin(2k)}\sum^n_{j=1} \frac{\vo(j)}{\sin k} \left(\sin(2\bar{\theta}^{(1)}_j) - \sin(2\bar{\theta}^{(2)}_j) \right)
	\\
	\label{eq:decomposition-radius}
	&&
	+
	\frac{p_1}{\sin(2k)}\sum^n_{j=1} \frac{\vt(j+1)}{\sin k} \left(\sin(2(\bar{\theta}^{(1)}_j-k)) - \sin(2(\bar{\theta}^{(2)}_j-k)) \right)
	+
	\sum^n_{j=1} A_j
\end{eqnarray}
 where the first two sums are convergent martingales by Lemma \ref{thm:martingales} with $\gamma=\frac12 + 2\beta -\epsilon$ and the last one is absolutely convergent as $A_j=O(j^{-1-2\beta+\epsilon})$. This shows that $r_n\to r_{\infty}\in(0,\infty)$ almost surely which implies that $\vartheta_n$ has a limit $\vartheta_{\infty}\neq 0,\, \frac{\pi}{2}$ by the first part of Lemma \ref{thm:eigendirection}. 

The equivalence \eqref{eq:asymptotic-conditions} in our context corresponds to
\begin{eqnarray*}
	\lim_{n\to\infty}\frac{\log R_n( \vartheta_{\infty})}{\log n}=-\beta
	\quad 
	\text{if and only if}
	\quad
	\limsup_n \frac{\log |r_n-r_{\infty}|}{\log n} \leq -2\beta.
\end{eqnarray*}
Let us denote by $\log r_n = M_n + S_n$ the decomposition \eqref{eq:decomposition-radius} and $\log r_{\infty}=M_{\infty}+S_{\infty}$, where $M_n$ is the martingale part and $M_{\infty}$ and $S_{\infty}$ are the almost sure limits of $M_n$ and $S_n$ respectively. Then,
\begin{eqnarray*}
	|r_{\infty}-r_n|
	&=&
	\e^{M_{\infty}+S_{\infty}}\left| 1-\e^{M_n-M_{\infty}+S_n-S_{\infty}}\right|
	\simeq
	\e^{M_{\infty}+S_{\infty}}\left| M_n-M_{\infty}+S_n-S_{\infty}\right|\\
	&\lesssim&
	\e^{M_{\infty}+S_{\infty}} n^{-2\beta+2\epsilon},
\end{eqnarray*}
by the last statement of Lemma \ref{thm:martingales} with $\gamma = \frac12 + 2\beta - \epsilon$ and $A_j=O(j^{-1-2\beta+\epsilon})$.
This finishes the proof of Proposition \ref{thm:decay-eigenfunctions}.
\end{proof}
We state \cite[Theorem 8.3]{LS} which allowed us to prove pure point spectrum in the sub-critical regime:
\begin{theorem}\label{thm:LS-osc}
	Let $(A_n)_{n\geq 1}$ be $2\times2$ real unimodular matrices and let ${\bf A}_n=A_n \cdots A_1$ such that
	\begin{eqnarray*}
		\sum_{n\geq 1} \frac{\| A_{n+1}\|}{\| {\bf A}_n \|}<\infty.
	\end{eqnarray*}
	Suppose there exists a monotone increasing function $g:\n^* \to (0,\infty)$ such that
	\begin{eqnarray*}
		\lim_{n\to\infty} \frac{\log \| A_n \|}{g(n)}=0
		\quad
		\text{and}
		\quad
		\lim_{n\to\infty} \frac{\log \| {\bf A}_n \|}{f(n)}=1,
	\end{eqnarray*}
	and such that
	\begin{eqnarray*}
		\sum_{n\geq 1} \e^{-\epsilon g(n)}<\infty,
	\end{eqnarray*}
	for all $\epsilon>0$. Then, there exists an angle $\vartheta_0$ such that
	\begin{eqnarray*}
		\lim_{n\to\infty} \frac{\log \| {\bf A}_n \widehat{\vartheta}_0 \|}{g(n)}=-1.
	\end{eqnarray*}
\end{theorem}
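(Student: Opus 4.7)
The plan is to construct $\vartheta_0$ as the limit of the shrinking directions of the partial products $\mathbf{A}_n$, then verify that this limiting direction achieves the claimed decay.

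For $n$ large enough that $\|\mathbf{A}_n\|>1$ (which holds eventually since $\log\|\mathbf{A}_n\|\sim g(n)\to\infty$), let $\widehat{\vartheta}_n$ be the unit direction (from the SVD of $\mathbf{A}_n$) along which $\|\mathbf{A}_n\widehat{\vartheta}_n\|=\|\mathbf{A}_n\|^{-1}$. To control $\vartheta_{n+1}-\vartheta_n$, I decompose $\widehat{\vartheta}_n$ in the SVD basis of $\mathbf{A}_{n+1}$ as $\widehat{\vartheta}_n=\cos\alpha_n\,\widehat{\vartheta}_{n+1}^{\perp}+\sin\alpha_n\,\widehat{\vartheta}_{n+1}$. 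Comparing the identity
$$
\|\mathbf{A}_{n+1}\widehat{\vartheta}_n\|^2=\cos^2\alpha_n\,\|\mathbf{A}_{n+1}\|^2+\sin^2\alpha_n\,\|\mathbf{A}_{n+1}\|^{-2}
$$
with the trivial bound $\|\mathbf{A}_{n+1}\widehat{\vartheta}_n\|\leq\|A_{n+1}\|/\|\mathbf{A}_n\|$ yields
$$
|\sin(\vartheta_{n+1}-\vartheta_n)|=|\cos\alpha_n|\leq\frac{\|A_{n+1}\|}{\|\mathbf{A}_n\|\,\|\mathbf{A}_{n+1}\|}.
$$

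Next, I upgrade this to a quantitative Cauchy estimate. Fix $\eta>0$: for $n$ large, $\|A_{n+1}\|\leq e^{\eta g(n+1)}$ and $\|\mathbf{A}_n\|\geq e^{(1-\eta)g(n)}$, so the angle bound is at most $e^{-(2-3\eta)g(n)}$ after using monotonicity of $g$. The assumption $\sum e^{-\epsilon g(n)}<\infty$ for every $\epsilon>0$ permits the tail estimate $\sum_{k\geq n}e^{-(2-3\eta)g(k)}\leq C_\eta\,e^{-(2-4\eta)g(n)}$, obtained by writing $e^{-(2-3\eta)g(k)}=e^{-(2-4\eta)g(k)}\cdot e^{-\eta g(k)}$ and pulling out the first factor at $k=n$. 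Hence $\vartheta_n\to\vartheta_0$ with
$$
|\vartheta_0-\vartheta_n|\leq C_\eta\,e^{-(2-4\eta)g(n)}.
$$

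Finally, I evaluate $\|\mathbf{A}_n\widehat{\vartheta}_0\|$ by decomposing $\widehat{\vartheta}_0=\cos(\vartheta_0-\vartheta_n)\widehat{\vartheta}_n+\sin(\vartheta_0-\vartheta_n)\widehat{\vartheta}_n^{\perp}$ in the SVD basis of $\mathbf{A}_n$:
$$
\|\mathbf{A}_n\widehat{\vartheta}_0\|^2=\cos^2(\vartheta_0-\vartheta_n)\|\mathbf{A}_n\|^{-2}+\sin^2(\vartheta_0-\vartheta_n)\|\mathbf{A}_n\|^2.
$$
The first term equals $(1+o(1))\|\mathbf{A}_n\|^{-2}=e^{-(2+o(1))g(n)}$, giving $\liminf\log\|\mathbf{A}_n\widehat{\vartheta}_0\|/g(n)\geq-1$. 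The second is controlled by $|\vartheta_0-\vartheta_n|^2\|\mathbf{A}_n\|^2\leq e^{-(4-8\eta)g(n)+(2+2\eta)g(n)}=e^{-(2-10\eta)g(n)}$, using the quantitative Cauchy estimate and $\|\mathbf{A}_n\|\leq e^{(1+\eta)g(n)}$. Sending $\eta\to 0$ yields $\limsup\leq-1$ and closes the proof. The main obstacle is exactly this last calculation: the convergence rate of $\vartheta_n$ must beat $\|\mathbf{A}_n\|^2$ on the exponential scale, so the SVD angle estimate must deliver a genuine rate $e^{-2g(n)}$ (not just summability) after summation, which is precisely what the full strength of $\sum e^{-\epsilon g(n)}<\infty$ for all $\epsilon>0$ (combined with the sharp growth rate $\log\|\mathbf{A}_n\|\sim g(n)$) makes possible.
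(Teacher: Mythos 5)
The paper does not supply its own proof of this statement: Theorem \ref{thm:LS-osc} is quoted verbatim from Last--Simon \cite[Theorem~8.3]{LS} and is used as a black box in Section~\ref{sec:DL-pp}. So there is nothing internal to the paper to compare your argument against. Judged on its own merits, your proof is correct, and it follows the standard Oseledec-type scheme that \cite{LS} also uses: identify the contracting direction $\widehat{\vartheta}_n$ of $\mathbf{A}_n$, show $(\vartheta_n)_n$ is Cauchy with a quantitative rate, and pass to the limit.

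The individual steps check out. The SVD decomposition of $\widehat{\vartheta}_n$ in the frame of $\mathbf{A}_{n+1}$ together with the trivial bound $\|\mathbf{A}_{n+1}\widehat{\vartheta}_n\|\leq\|A_{n+1}\|/\|\mathbf{A}_n\|$ correctly gives $|\sin(\vartheta_{n+1}-\vartheta_n)|\leq \|A_{n+1}\|/(\|\mathbf{A}_n\|\,\|\mathbf{A}_{n+1}\|)$. The quantitative Cauchy bound $|\vartheta_0-\vartheta_n|\leq C_\eta e^{-(2-4\eta)g(n)}$, obtained by combining the growth asymptotics $\log\|\mathbf{A}_n\|\sim g(n)$, $\log\|A_n\|=o(g(n))$, monotonicity of $g$, and the tail summability $\sum e^{-\epsilon g(n)}<\infty$, is exactly the rate needed so that $\sin^2(\vartheta_0-\vartheta_n)\|\mathbf{A}_n\|^2$ does not dominate the contracting term $\cos^2(\vartheta_0-\vartheta_n)\|\mathbf{A}_n\|^{-2}$; sending $\eta\to0$ then closes the two-sided bound on $\log\|\mathbf{A}_n\widehat{\vartheta}_0\|/g(n)$. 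Two cosmetic points you could tighten: (i) you should fix a branch of $\vartheta_n$ (e.g.\ choose $\widehat{\vartheta}_{n+1}$ with $\widehat{\vartheta}_n\cdot\widehat{\vartheta}_{n+1}\geq 0$, so $|\vartheta_{n+1}-\vartheta_n|\leq\pi/2$) before passing from $|\sin(\cdot)|$ to $|\cdot|$; (ii) the statement's second displayed limit has an obvious typo $f(n)$ for $g(n)$, which you correctly read as $g$. It is also worth noticing that your argument never actually invokes the hypothesis $\sum_n\|A_{n+1}\|/\|\mathbf{A}_n\|<\infty$; the extra factor $\|\mathbf{A}_{n+1}\|^{-1}$ in your angle bound, combined with the growth and summability assumptions on $g$, already yields summability of the angle increments. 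That is perfectly fine for proving the claim, but it means your proof is marginally stronger than the literal statement.
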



\subsection{A martingale inequality}\label{sec:martingales}

The following corresponds to \cite[Lemma 8.4]{KLS}. We formulate it in full generality but provide a short proof under the assumption that $V_{\omega,i}(n)=\lambda n^{-\alpha}\omega_{n,i}$ for uniformly bounded random variables $\omega_{n,i}$. 
\begin{lemma}\label{thm:martingales}
	Let $(Z_j)_j$ be i.i.d. random variables with $\esp[Z_n]=0$ and $\esp[|Z_n|^2]\leq n^{-2\gamma}$ for some $\gamma>0$. Let $\mathcal{G}_n=\sigma(Z_1,\cdots, \, Z_n)$ and let $Y_n \in \mathcal{G}_{n-1}$ for $n\geq 1$ such that $|Y_n|\leq 1$.
	Define
	\begin{eqnarray*}
		M_n = \sum^n_{j=1} Y_jZ_j \quad \text{and} \quad s_n = \sum^n_{j=1} \frac{1}{j^{2\gamma}}.
	\end{eqnarray*}
	Then, $(M_n)_n$ is a $\mathcal{G}_n$-martingale and 
	\begin{enumerate}[label=\alph*.-]
		\item[(i)] For $\gamma \leq \frac12$ and all $\varepsilon>0$,
			\begin{eqnarray*}
				\lim_{n\to\infty} s_n^{-\frac{1+\varepsilon}{2}}M_n= 0,\quad \quad \p-a.s.
			\end{eqnarray*}
	
		\vspace{1ex}
		
		\item[(ii)] For $\gamma>\frac12$, $(M_n)_n$ converges $\p$-almost surely to a finite (random) limit $M_{\infty}$ and, for all $\kappa<\gamma - \frac12$, we have
		\begin{eqnarray*}
			\lim_{n\to\infty} n^{\kappa} \left( M_{\infty}-M_n\right) = 0,\quad \p-a.s.
		\end{eqnarray*}
	\end{enumerate}		
\end{lemma}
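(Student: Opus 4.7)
The plan is to prove both parts by introducing weighted auxiliary martingales that are $L^2$-bounded, and then transferring the convergence back to $(M_n)$ through Kronecker's lemma for part (i) and through summation by parts for part (ii). The martingale property of $(M_n)$ is immediate: $\esp[Y_jZ_j \mid \mathcal{G}_{j-1}] = Y_j\esp[Z_j] = 0$, since $Y_j$ is $\mathcal{G}_{j-1}$-measurable and $Z_j$ is independent of $\mathcal{G}_{j-1}$ with mean zero.

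For part (i), I would set $b_n = s_n^{(1+\varepsilon)/2}$, which is increasing and tends to infinity when $\gamma\le\tfrac12$, and consider the rescaled martingale $\widetilde M_n = \sum_{j=1}^n (Y_jZ_j)/b_j$. Its predictable quadratic variation is controlled by
\begin{equation*}
    \sum_{j\ge 1}\frac{\esp[(Y_jZ_j)^2]}{b_j^2} \leq \sum_{j\ge 1}\frac{j^{-2\gamma}}{s_j^{1+\varepsilon}}.
\end{equation*}
Using $s_j \asymp j^{1-2\gamma}$ for $\gamma<\tfrac12$ and $s_j \asymp \log j$ for $\gamma=\tfrac12$, this reduces to one of the summable series $\sum_j j^{-1-(1-2\gamma)\varepsilon}$ or $\sum_j j^{-1}(\log j)^{-1-\varepsilon}$. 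Consequently $\widetilde M_n$ converges almost surely by Doob's convergence theorem, and Kronecker's lemma applied to the sequence $(Y_jZ_j)_j$ with the increasing weights $b_j\uparrow\infty$ then delivers $M_n/b_n \to 0$ almost surely.

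For part (ii), since $\gamma>\tfrac12$ the sum $s_\infty=\sum_j j^{-2\gamma}$ is finite, so $(M_n)$ is itself an $L^2$-bounded martingale and converges a.s.\ and in $L^2$ to $M_\infty$. To obtain the rate, I would fix $\kappa\in(0,\gamma-\tfrac12)$ and introduce the weighted martingale $T_n = \sum_{j=1}^n j^\kappa Y_jZ_j$; its quadratic variation is bounded by $\sum_j j^{2\kappa-2\gamma}$, which is summable precisely because $\kappa<\gamma-\tfrac12$. Hence $T_n\to T_\infty$ almost surely. Abel summation then gives, for $N>n$,
\begin{equation*}
    M_N - M_n = N^{-\kappa}T_N - (n+1)^{-\kappa}T_n + \sum_{j=n+1}^{N-1}\bigl(j^{-\kappa}-(j+1)^{-\kappa}\bigr)T_j,
\end{equation*}
and letting $N\to\infty$ (using $N^{-\kappa}T_N\to 0$) together with the telescoping identity $\sum_{j>n}(j^{-\kappa}-(j+1)^{-\kappa}) = (n+1)^{-\kappa}$ yields
\begin{equation*}
    M_\infty - M_n = (n+1)^{-\kappa}(T_\infty - T_n) - \sum_{j>n}\bigl(j^{-\kappa}-(j+1)^{-\kappa}\bigr)(T_\infty - T_j).
\end{equation*}
Multiplying by $n^\kappa$ and bounding the remaining sum in absolute value by $\sup_{j>n}|T_\infty - T_j|$ times $(n+1)^{-\kappa}$ concludes since $T_j\to T_\infty$ almost surely.

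The step I expect to be the main obstacle is the sharpness of the rate in part (ii). Applying Doob's maximal inequality directly to $(M_\infty - M_n)$ only produces summable probabilities in the weaker range $\kappa<\gamma-1$, and higher moment bounds are unavailable at the level of generality stated. The key idea that unlocks the threshold $\kappa<\gamma-\tfrac12$ is the insertion of the weighted martingale $T_n$, whose second-moment summability condition matches the claimed range exactly, with the rate extracted from $T_n$ via summation by parts rather than from $(M_n)$ directly.
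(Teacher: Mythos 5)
Your proof is correct, and it establishes the lemma in the full generality in which it is stated, using only the second-moment hypothesis on $(Z_j)$. The paper deliberately takes a shortcut: it only sketches the proof under the \emph{additional} assumption $Z_n = n^{-\gamma} X_n$ with $|X_n|\le 1$ a.s., which bounds the martingale increments and makes Azuma's inequality available. With Azuma it obtains exponential tail bounds for $M_n$ (resp.\ for $M_{n+i}-M_n$) and concludes both parts by Borel--Cantelli; for the unbounded case the paper simply defers to \cite[Lemma~8.4]{KLS}, remarking that there Azuma is replaced by Doob's inequality. Your route is in that latter spirit but organized differently: you pass to the weighted martingales $\widetilde M_n = \sum_j Y_jZ_j/b_j$ and $T_n=\sum_j j^{\kappa}Y_jZ_j$, prove $L^2$-boundedness (hence a.s.\ convergence) of each, and recover the rates for $M_n$ via Kronecker's lemma in part (i) and Abel summation in part (ii). The Abel-summation identity you exploit, $n^{\kappa}(M_\infty-M_n)\to 0$ once $T_n\to T_\infty$, is exactly what pushes the admissible range from the crude $\kappa<\gamma-1$ (direct Doob plus Borel--Cantelli) up to the sharp threshold $\kappa<\gamma-\tfrac12$, and your diagnosis of this as the delicate step is accurate. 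In short: the paper's argument is shorter but proves less; yours is self-contained, matches the lemma's stated hypotheses, and is arguably the more instructive presentation.
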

\begin{proof}
	The reader can consult the book \cite{Durrett} for the general properties of martingales used below. 
	The sequence $(M_n)_n$ is a martingale thanks to our hypothesis on $(Y_n)_n$ and $(Z_n)_n$: indeed, since $M_n,\, Y_{n+1} \in \mathcal{G}_n$, $Y_{n+1}$ is bounded and $Z_{n+1}$ is independent of $\mathcal{G}_n$ and centered, we have, $\p$-almost surely,
	\begin{eqnarray*}
		\esp[M_{n+1} | \mathcal{G}_n]
		&=&
		\esp\left[Y_{n+1}Z_{n+1}  + M_{n} \Big{|}\mathcal{G}_n\right]\\
		&=&
		Y_{n+1}\esp[Z_{n+1}] + M_n = M_n.
\end{eqnarray*}		
	As stated above, we assume $Z_n = n^{-\gamma} X_n$ with $|X_n|\leq 1$ and $\esp[X_n]=0$ to simplify the argument.
	Let $\gamma\leq \frac12$. We use Azuma's inequality \cite{Azuma}: let $(M_n)_n$ be a martingale such that $|M_n-M_{n-1}|\leq c_n$ for all $n\geq 1$. Then,
	\begin{eqnarray*}
		\p\left[ |M_n-M_0| \geq t\right] \leq 2 \exp\left\{ - \frac{t^2}{2 \sum^n_{j=1} c_j^2}\right\}.
	\end{eqnarray*}
	In our case, $M_0=0$, $c_j = 2 j^{-\gamma}$, and taking $t=s_n^{\frac{1+\varepsilon'}{2}}$ for $0<\varepsilon'<\varepsilon$, we obtain
	\begin{eqnarray*}
		\p\left[ |M_n| \geq s_n^{\frac{1+\varepsilon'}{2}}\right] \leq 2\ \e^{ - C n^{\varepsilon'}},
	\end{eqnarray*}
	for some $C>0$. The claim $(i)$ then follows from Borel-Cantelli's lemma.
	
	Now, let $\gamma>\frac12$. Noticing that, for $i<l$,
	\begin{eqnarray*}
		\esp[X_iY_i X_l Y_l] 
		=
		\esp[X_l]\esp[X_iY_i Y_l] 
		=0,
	\end{eqnarray*}
	we have
	\begin{eqnarray*}
		\sup_n\esp[M_n^2] = \sup_n \sum^n_{j=1} \frac{\esp[X_j^2 Y_j^2]}{j^{2\gamma}} \leq \sum_{j\geq 1} \frac{1}{j^{2\gamma}}<\infty.
	\end{eqnarray*}
	Hence, $(M_n)_n$ is bounded in $L^2$ and, as a consequence, converges almost surely, i.e., there exists a random variable $M_{\infty}$ such that $\lim_{n\to\infty}M_n=M_{\infty}$, $\p$-a.s.. Finally, applying Azuma's inequality to the martingale $(M_{n+i}-M_n)_{i\geq0}$, we obtain
	\begin{eqnarray*}
		\p\left[ n^{\kappa}\left| M_{n+i}-M_n\right| \geq 1\right]\leq 2 \exp\left\{ - C n^{2(\gamma-\frac12 -\kappa)}\right\},
	\end{eqnarray*}
	for all $i\geq0$.
	Choosing $\kappa<\gamma-\frac12$, the last claim follows from Fatou's lemma, the convergence of $(M_n)_n$ and Borel-Cantelli.
\end{proof}
\begin{remark}
	The result above is proved in \cite[Lemma 8.4]{KLS} under the second moment assumption replacing our use of Azuma's inequality by Doob's inequality.
	For a short proof assuming bounded exponential moments, see \cite[Lemma A.1]{CY}.
\end{remark}

\subsection{Control of the phases}\label{sec:control-phases}



The next lemma provides the control of the Pr\"ufer phases needed to complete the proof of Proposition \ref{thm:lyapunov-exponents}. The strategy is taken from \cite{KLS}. Recalling the definitions of $Q_{n,1}$ and $Q_{n,2}$ from \eqref{eq:sum-of-phases},
\begin{lemma}\label{thm:control-phases}
Assume \textbf{(A3a)} and \textbf{(A4)}.
Let $0<\alpha\leq \frac12$. For each fixed energy corresponding to a value of $k \in (-\pi, -\tfrac{\pi}{2})$ different from $-\frac{5\pi}{8},-\frac{3\pi}{4}$ and $-\frac{7\pi}{8}$,
	\begin{equation*}
		\lim_{n\to\infty} \frac{Q_{n,i}}{\sum^n_{j=1} j^{-2\alpha}}
		=
		0,
	\end{equation*}
		for $i=1,2$. Moreover, for each compact energy interval $I\subset \mathring\Sigma$, the convergence is uniform over all initial values $\theta_0 \in [0,2\pi)$ and $E\in I$ corresponding to values of $k$ different from $-\frac{5\pi}{8},-\frac{3\pi}{4}$ and $-\frac{7\pi}{8}$.
\end{lemma}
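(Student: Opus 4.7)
The plan is to apply summation by parts to $Q_{n,1}$ and $Q_{n,2}$, exploiting the near-deterministic drift of the Prüfer phases implied by \eqref{eq:prufer-transform-s1}. Writing $\bar\theta_{j+1} - \bar\theta_j = -2k + \Delta_j$, the Prüfer recursion together with the inequality $|\arg(1+z) - \Im z| \leq C|z|^2$ and the pathwise bound \textbf{(A4)} give $|\Delta_j| \leq C(\omega)\, j^{-2\alpha/3 - \varepsilon}$ almost surely.

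Expressing the cosines through complex exponentials reduces the statement to proving that, for $\ell \in \{2,4\}$,
\[
S^{(\ell)}_n \;:=\; \sum_{j=1}^{n-1} e^{i\ell\bar\theta_j}\,\esp[\vo(j)^2] \;=\; o(s_n),
\qquad s_n := \sum_{j=1}^{n} j^{-2\alpha}.
\]
I would set $c^{(\ell)}_j = e^{i\ell(\bar\theta_j + 2kj)}$, which by the drift identity satisfies $c^{(\ell)}_{j+1} = c^{(\ell)}_j\, e^{i\ell \Delta_j}$, and factor $e^{i\ell\bar\theta_j} = c^{(\ell)}_j\, e^{-2i\ell kj}$. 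The partial sums $T^{(\ell)}_N = \sum_{j=1}^N e^{-2i\ell kj}$ form a geometric series whose modulus is uniformly bounded in $N$ precisely when $e^{-2i\ell k} \neq 1$. The three excluded values $k \in \{-5\pi/8, -3\pi/4, -7\pi/8\}$ are exactly those of the interval $(-\pi,-\tfrac{\pi}{2})$ at which either this primary condition fails (at $k = -3\pi/4$ one has $e^{-8ik}=1$, killing the $\ell=4$ geometric sum) or the higher-order residuals generated by iterating the Abel procedure become resonant (namely where $e^{-16ik}=1$).

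A single Abel step yields
\[
S^{(\ell)}_n \;=\; c^{(\ell)}_{n-1}\esp[\vo(n-1)^2]\, T^{(\ell)}_{n-1}
\;-\; \sum_{j=1}^{n-2}\bigl(c^{(\ell)}_{j+1}\esp[\vo(j+1)^2] - c^{(\ell)}_j \esp[\vo(j)^2]\bigr)\, T^{(\ell)}_j.
\]
The boundary term is $O(n^{-2\alpha}) = o(s_n)$. Since $|c^{(\ell)}_{j+1} - c^{(\ell)}_j| \leq \ell |\Delta_j|$ and $\esp[\vo(j)^2] = \lambda^2 a_j^2$ with $a_j \sim j^{-\alpha}$, the residual sum is controlled by
\[
C(\omega)\sum_{j=1}^{n-2}\left(|\Delta_j|\,a_{j+1}^2 + |a_{j+1}^2 - a_j^2|\right)
\;\leq\; C(\omega)\sum_{j=1}^{n-2}\bigl(j^{-8\alpha/3 - \varepsilon} + j^{-2\alpha - 1}\bigr),
\]
which is $o(s_n)$ for every $\alpha \in (0, \tfrac12]$ because $8\alpha/3 > 2\alpha$. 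The treatment of $Q_{n,2}$ is identical after the harmless phase shift $\bar\theta_j \mapsto \bar\theta_j - k$, and the uniformity in the initial angle $\theta_0$ and in $E$ on compact subsets of $\mathring\Sigma$ away from the three resonant energies follows because all constants depend continuously on $E$ through $p_1, p_2, k, \sin(2k)$ and $\sup_N |T^{(\ell)}_N|$ remains bounded away from the resonances. The main difficulty is that the control of $\Delta_j$ must be pathwise rather than in expectation, since $\bar\theta_j$ is itself random and $\mathcal{F}_{j-1}$-measurable; the almost-sure bound in \textbf{(A4)} is precisely what allows the Abel estimate to be carried out without any probabilistic averaging.
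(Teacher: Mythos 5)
Your Abel-summation route is genuinely different from the paper's argument (which invokes the block-averaging inequality from \cite[Lemma 8.5]{KLS} after decomposing the sum into blocks of length $q_l$ tuned to $y=-8k$), and it is arguably cleaner. Both hinge on the same pathwise smallness $|\bar\theta_{j+1}-\bar\theta_j+2k|\lesssim_{\omega} j^{-2\alpha/3}$ coming from \textbf{(A4)}, but where the paper trades a direct sum for controlled blocks, you trade it for a controlled telescoping of $T^{(\ell)}_N$.

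Two points need attention. First, the pointwise bound $|a_{j+1}^2 - a_j^2|\leq C j^{-2\alpha-1}$ is not a consequence of the sole hypothesis $\lim_n n^\alpha a_n =1$; nothing prevents $a_n^2 = n^{-2\alpha}(1+\varepsilon_n)$ from oscillating so that $|a_{j+1}^2-a_j^2|\sim j^{-2\alpha}|\varepsilon_j-\varepsilon_{j+1}|$, which can be of order $j^{-2\alpha}$ along a subsequence. The fix is easy and standard: first replace $\esp[\vo(j)^2]=\lambda^2 a_j^2$ by $\lambda^2 j^{-2\alpha}$, at a cost of $\lambda^2\sum_j j^{-2\alpha}|\varepsilon_j|=o(s_n)$ by Ces\`aro (this is implicitly what the paper does, since its block estimate is run on $\sum_j j^{-2\alpha}\cos(4\bar\theta_j)$), and only then carry out the Abel step with the explicit power $j^{-2\alpha}$, for which $|(j+1)^{-2\alpha}-j^{-2\alpha}|\leq 2\alpha j^{-2\alpha-1}$ is immediate. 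Equivalently, one can observe directly that $\sum_j|a_{j+1}^2-a_j^2|=o(s_n)$ without the pointwise rate. Second, your parenthetical justification for the three excluded $k$ is incorrect: a single Abel step involves only $e^{-4ik}$ and $e^{-8ik}$; no iteration occurs and $e^{-16ik}$ never enters. Your argument actually works at $k=-\tfrac{5\pi}{8}$ and $k=-\tfrac{7\pi}{8}$ and only needs to exclude $k=-\tfrac{3\pi}{4}$; the other two values are excluded in the paper because \cite[Lemma 8.5]{KLS} requires $-8k\notin\pi\Z$ rather than $-8k\notin 2\pi\Z$. So you are proving a slightly stronger statement than claimed, and the speculative explanation should simply be deleted.
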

\begin{proof}
	We will show that
	\begin{eqnarray*}
	\limsup_{n\to\infty}
	\frac{\sum^n_{j=1}\esp[ V_{\omega,j}^2]\cos 4 \bar \theta_j}{\sum^n_{j=1} j^{-2\alpha}}
	=
	0,
	\qquad \p-\text{a.s.,}
\end{eqnarray*}
the other terms being handled similarly. Note that the prefactor accompanying this term in the definition of $Q_{n,1}$ is uniformly bounded over compact energy intervals. The computations below are uniform in the initial condition $\theta_0$ and only assume $k\notin \tfrac{\pi}{8}\Z$.

We begin with a simple observation: from \eqref{eq:prufer-transform-s1}, for any compact interval $I\subset \mathring\Sigma$, there exists a constant $C=C(I)\in(0,\infty)$ such that
\begin{eqnarray*}
	&&
	\left| \e^{i(\theta_{n+1}-\theta_n)}-1\right| 
	=
	\left| \frac{\zeta_{n+1}}{\zeta_n}-1\right|
	\\
	&&
	\phantom{blabla}
	\leq 
	C \left( 
		|\vo(n)|+|\vt(n+1)|+|\vo(n)\vt(n+1)|
	\right)
	\leq 1,
\end{eqnarray*}
for $n\geq n^*(\omega)$ for some $n^*(\omega)=n^*(\omega,I)<\infty$ thanks to \textbf{(A4)}. Hence, for $n\geq n^*(\omega)$, $|\theta_{n+1}-\theta_n|< \frac{\pi}{2}$ and, recalling \textbf{(A4)} once more, we have
\begin{eqnarray*}
	|\theta_{n+1}-\theta_n| \leq \frac{\pi}{2} |\sin(\theta_{n+1}-\theta_n)| 
	\leq 
	\frac{\pi}{2} \left|\e^{i(\theta_{n+1}-\theta_n)}-1\right| 
	\leq 
	c_0(\omega)\ n^{-\frac{2\alpha}{3}},
\end{eqnarray*}
for some $c_0(\omega)=c_0(\omega,I)\in(0,\infty)$.
This can be written in the equivalent form 
\begin{eqnarray}\label{eq:small-phases}
	|\bar{\theta}_{n+1}-\bar{\theta}_n+2k|\leq c_0(\omega) \ n^{-\frac{2\alpha}{3}},
\end{eqnarray}
which will be more suitable for our purposes. By possibly increasing the value of $c_0(\omega)$, we can assume that \eqref{eq:small-phases} holds for all $n\geq n^*(\omega)$ with $c_0(\omega)<\infty$ $\p$-amost surely. For $p\geq 1$, define
\begin{eqnarray*}
	\mathcal{E}_p = \left\{ \omega:\, c_0(\omega) \leq p\right\},
\end{eqnarray*}
and observe that $\Omega=\displaystyle\bigcup_{p\geq 1} \mathcal{E}_p$.

The key proof is \cite[Lemma 8.5]{KLS} which states the following: suppose that $y\in\R$ is not in $\pi \Z$. Then, there exists a sequence of integers $q_l \to \infty$ such that
	\begin{eqnarray}\label{eq:trigonometrics}
		\left|\sum^{q_l}_{j=1} \cos \theta_j \right| \leq 1 + \sum^{q_l}_{j=1} \left| \theta_j-\theta_0 - jy \right|,
	\end{eqnarray}
	for all $(\theta_j)_{j\geq 0}\subset \R$.
	We take $y=-8k$. Let $p\geq 1$ and $\omega\in\mathcal{E}_p$. 
	Let $n$ be large enough so that it can be written as 	$n=n_0+Kq_l$ with $n_0\ge q_l^2$ and $4c_0(\omega) n_0^{-\alpha}\le q_l^{-2}$. Then,
\begin{eqnarray*}
	 \left|\sum_{j=n_0+1}^n j^{-2\alpha}\cos(4\bar\theta_j)\right|
	 &=&
	 \left|\sum_{m=0}^K \sum_{r=1}^{q_l}(n_0+mq_l+r)^{-2\alpha}\cos(4\bar\theta(n_0+mq_l+r))\right|
	 \\
	&\leq&	 
	\sum_{m=0}^K(n_0+mq_l)^{-2\alpha} \left|\sum_{r=1}^{q_l}\cos(4\bar\theta(n_0+mq_l+r))\right|
	\\
	&&
	+
	\sum_{m=0}^K \sum_{r=1}^{q_l} \left|(n_0+mq_l+r)^{-2\alpha}-(n_0+mq_l)^{-2\alpha}\right|
	\\
	&=:&
	A + B.
\end{eqnarray*}
We first estimae the term $A$ using \eqref{eq:trigonometrics} to get
\begin{eqnarray*}
	A
	&\leq&
	\sum_{m=0}^{K}(n_0+mq_l)^{-2\alpha}\left(1+4\sum_{r=1}^{q_l}|\bar\theta(n_0+mq_l+r)-\bar\theta(n_0+mq_l)+2kr|\right).
\end{eqnarray*}
Now, by \eqref{eq:small-phases} it follows that
\begin{eqnarray*}
	&&
	4\sum_{r=1}^{q_l}|\bar\theta(n_0+mq_l+r)-\bar\theta(n_0+mq_l)+2kr|
	\leq
	c_0 \sum_{r=1}^{q_l}\sum_{s=1}^r(n_0+mq_l+r)^{-\frac{2\alpha}{3}}
	\\
	&&
	\phantom{blablablabla}
	\leq
	c_0(\omega)(n_0+mq_l)^{-\frac{2\alpha}{3}}\sum_{r=1}^{q_l} r
	\leq
	c_0(\omega)q_l^2 n_0^{-\frac{2\alpha}{3}}
	\leq
	1.
\end{eqnarray*}
Thus,
\begin{eqnarray*}
	A
	\leq
	2
	\sum_{m=0}^{K}(n_0+mq_l)^{-2\alpha}
	\leq
	2 q_l^{-2\alpha} \sum_{m=0}^{K}(n_0q_l^{-1}+m)^{-2\alpha}
	\leq 
	c_1 q_l^{-2\alpha} \sum_{j=1}^{K}j^{-2\alpha},
\end{eqnarray*}
for some finite $c_1>0$. To estimate $B$, we use that
\begin{eqnarray*}
	\left| (n_0+mq_l+r)^{-2\alpha}-(n_0+mq_l)^{-2\alpha}\right|
	\leq
	c_2 (n_0+mq_l)^{-2\alpha-1} r,
\end{eqnarray*}
for some finite $c_2>0$ which allows us to write
\begin{eqnarray*}
	B
	&\leq&
	c_2 \sum_{m=0}^K \sum_{r=1}^{q_l} (n_0+mq_l)^{-2\alpha-1} r
	\leq 
	c_2 q_l^2 n_0^{-1}
	\sum_{m=0}^K
	(1+n_0^{-1}mq_l)^{-1}(n_0+mq_l)^{-2\alpha}
	\\
	&\leq&
	c_2
	\sum_{m=0}^K
	(n_0+mq_l)^{-2\alpha},
\end{eqnarray*}
where we used $q_l^2 n_0^{-1}\leq 1$. This last sum can be estimated as above. 
Combining, we obtain
\begin{eqnarray*}
	\left|\sum^{n}_{j=1} j^{-2\alpha} \cos 4 \bar{\theta}_j \right|
	\leq
	\sum^{n_0}_{j=1}j^{-2\alpha} + c_3 q_l^{-2\alpha} \sum_{j=1}^{K}j^{-2\alpha},
\end{eqnarray*}
for some finite $c_3>0$ and all $\omega\in\mathcal{E}_p$. Hence,
\begin{eqnarray*}
	\limsup_{n\to\infty}\frac{\left|\sum^{n}_{j=1} j^{-2\alpha} \cos 4 \bar{\theta}_j \right|}{\sum^n_{j=1}j^{-2\alpha}}
	\leq c_3 q_l^{-2\alpha},
\end{eqnarray*}
for all $\omega\in\mathcal{E}_p$.
We can then let $l\to\infty$.
As the events $\mathcal{E}_p$ exhaust $\Omega$, this finishes the proof.
\end{proof}
The next lemma provides the control of the phases needed to complete the proof of Proposition \ref{thm:integrated-lyapunov}.
\begin{lemma}\label{thm:control-phases-integrated}
Let $0<\alpha\leq \frac12$.
Assume \textbf{(A1)}-\textbf{(A3a)} and \textbf{(A5)}.
 For each fixed energy corresponding to a value of $k \in (-\pi, -\tfrac{\pi}{2})$ different from $-\frac{5\pi}{8},-\frac{3\pi}{4}$ and $-\frac{7\pi}{8}$,
	\begin{eqnarray*}
		\lim_{n\to\infty} \frac{\esp[Q_{n,i}]}{\sum^n_{j=1} j^{-2\alpha}}
		=
		0,
	\end{eqnarray*}
	for $i=1,2$. Moreover, for each compact energy interval $I\subset \mathring\Sigma$, the convergence is uniform over all initial values $\theta_0 \in [0,2\pi)$ and $E\in I$ corresponding to values of $k$ different from $-\frac{5\pi}{8},-\frac{3\pi}{4}$ and $-\frac{7\pi}{8}$.
\end{lemma}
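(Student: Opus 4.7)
The proof proceeds by integrating the argument of Lemma \ref{thm:control-phases}, replacing the almost-sure control of phase increments (which in the pathwise proof came from \textbf{(A4)}) by an $L^1$ estimate. The representative term
$$T_n = \sum_{j=1}^{n-1} j^{-2\alpha}\esp[\cos(4\bar\theta_j)]$$
controls all other contributions in $Q_{n,1}, Q_{n,2}$ up to prefactors uniformly bounded on compact sub-intervals of $\mathring\Sigma$; the factors $\esp[\vo(j)^2], \esp[\vt(j+1)^2]$ are deterministic and of order $\lambda^2 a_j^2$. By Jensen's inequality, it suffices to show that
$$\frac{1}{\sum_{j=1}^n j^{-2\alpha}}\,\esp\!\left[\,\left|\sum_{j=1}^n j^{-2\alpha}\cos(4\bar\theta_j)\right|\,\right]\longrightarrow 0 \quad\text{as }n\to\infty,$$
uniformly in $\theta_0$ and in $E\in I$.

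The first key step is to establish that, for every compact interval $I\subset \mathring\Sigma$, there exists $C=C(I)<\infty$ such that
$$\esp\bigl[|\bar\theta_{n+1}-\bar\theta_n+2k|\bigr]\leq C a_n,\qquad n\ge 1.$$
Since $\bar\theta_{n+1}-\bar\theta_n+2k=\theta_{n+1}-\theta_n=\arg(\zeta_{n+1}/\zeta_n)$, we split the expectation according to whether $|\zeta_{n+1}/\zeta_n-1|\le 1/2$ or not. On the good event, the inequality $|\arg w|\le 2|w-1|$ combined with the recursion \eqref{eq:prufer-transform-s1} yields a pointwise bound by $C(|\vo(n)|+|\vt(n+1)|+|\vo(n)\vt(n+1)|)$; integrating and using \textbf{(A3a)} with Cauchy--Schwarz produces a contribution of order $a_n$. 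On the complementary event, the argument is at most $\pi$, while Chebyshev's inequality applied with \textbf{(A3a)} shows that the probability is $O(a_n^2)$. By the triangle inequality, one then obtains the telescoping estimate
$$\esp\bigl[|\bar\theta_{j+r}-\bar\theta_j+2kr|\bigr]\leq C r\,a_j,\qquad r\ge 0.$$

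The rest mirrors the pathwise proof. Since $y=-8k\notin\pi\Z$ (guaranteed by $k\neq -\tfrac{5\pi}{8},-\tfrac{3\pi}{4},-\tfrac{7\pi}{8}$), the trigonometric identity \cite[Lemma~8.5]{KLS} yields an unbounded sequence $q_l\to\infty$ such that, for each block of length $q_l$ starting at index $j$,
$$\Bigl|\sum_{r=1}^{q_l}\cos(4\bar\theta_{j+r})\Bigr|\le 1 + 4\sum_{r=1}^{q_l}|\bar\theta_{j+r}-\bar\theta_j+2kr|.$$
Writing $n=n_0+Kq_l$ and splitting into blocks as in Lemma \ref{thm:control-phases}, the variation of $j^{-2\alpha}$ within each block contributes an absolutely convergent error, and taking expectations together with the moment bound above reduces the main piece to
$$C\sum_{m=0}^{K-1}(n_0+mq_l)^{-2\alpha}\bigl(1 + Cq_l^2(n_0+mq_l)^{-\alpha}\bigr).$$
Choosing $n_0$ so that $Cq_l^2 n_0^{-\alpha}\le 1$ bounds this by $c_1 q_l^{-2\alpha}\sum_{j=1}^n j^{-2\alpha}$. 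Dividing, letting $n\to\infty$ and then $l\to\infty$ yields the claim, with all constants depending only on $I$, hence the desired uniformity.

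The main obstacle is precisely the moment bound in the second paragraph: unlike in Lemma \ref{thm:control-phases}, no deterministic pointwise control on the disorder is available, so that the phase increment may occasionally be large. Splitting the expectation according to the size of $|\zeta_{n+1}/\zeta_n-1|$ and combining \textbf{(A3a)} with Chebyshev's inequality provides the natural remedy; hypothesis \textbf{(A5)} could alternatively be used to obtain sharper control via higher-order Markov estimates. Once this moment bound is secured, the combinatorial scheme of the pathwise proof transports verbatim to expectations.
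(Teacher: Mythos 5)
Your argument is correct, but it follows a genuinely different route from the paper's. The paper's proof piggy-backs on the pathwise Lemma~\ref{thm:control-phases}: it invokes a Borel--Cantelli stopping time $\tau(\omega)$ beyond which the almost-sure phase-increment bound from \textbf{(A4)} is in force, splits the sum into a random head $\sum_{j<\tau(\omega)}$ and a well-controlled tail, and then shows $\esp\bigl[\sum_{j\le\tau(\omega)} j^{-2\alpha}\bigr]<\infty$ by estimating $\p[\tau(\omega)=k]$ through Markov's inequality with the third-moment hypothesis \textbf{(A5)}. You instead re-run the entire Kiselev--Last--Simon block decomposition \emph{in expectation}, and the key new ingredient is the $L^1$ phase-increment bound $\esp\bigl[|\bar\theta_{n+1}-\bar\theta_n+2k|\bigr]\le C a_n$, obtained from \textbf{(A3a)} alone by a good-event/bad-event split combined with $|\arg w|\le 2|w-1|$ on $\{|w-1|\le\tfrac12\}$ and Chebyshev on the complement. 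This buys you several things at once: the resulting increment bound is of order $a_n\sim n^{-\alpha}$ (stronger than the $n^{-2\alpha/3}$ available pathwise under \textbf{(A4)}); the block-alignment threshold $n_0$ satisfying $Cq_l^2 n_0^{-\alpha}\le 1$ can be chosen deterministically, so there is no random head to control; and hypothesis \textbf{(A5)} is not actually needed for this lemma (which the paper cannot avoid, since its Borel--Cantelli step relies on it). Your reduction step (``Jensen'') is really just $|\esp[\cdot]|\le\esp[|\cdot|]$ and the observation that the prefactors $\esp[V_{\omega,i}(j)^2]\sim\lambda^2 j^{-2\alpha}$ are deterministic; and you correctly track the role of the exclusion $k\notin\{-\tfrac{5\pi}{8},-\tfrac{3\pi}{4},-\tfrac{7\pi}{8}\}$, which is exactly $-8k\notin\pi\Z$ as required by the trigonometric lemma. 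One small caveat: your phrase ``absolutely convergent error'' for the $B$-term is imprecise --- as in the pathwise proof, $B$ is bounded by a constant times $q_l^{-2\alpha}\sum_{j\le n}j^{-2\alpha}$ after using $q_l^2 n_0^{-1}\le1$, which is not absolutely convergent but is negligible after dividing and sending $l\to\infty$; this does not affect the argument.
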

\begin{proof}
	By Borel-Cantelli ,
		\begin{eqnarray*}
			|V_{\omega,i}(n)| \leq n^{-\frac{2\alpha}{3}-\frac{\varepsilon}{2}},
		\end{eqnarray*}
	for all $n\geq \tau$, $i=1,2$ for some $\p$-almost surely finite $\tau=\tau(\omega)$. Thanks to the uniform control of the previous lemma, we have
	\begin{eqnarray*}
		\lim_{n\to\infty}
		\frac{\esp\left[\left|\sum^{n}_{j=\tau(\omega)} j^{-2\alpha} \cos 4 \bar{\theta}_j \right|\right]}{\sum^n_{j=1}j^{-2\alpha}}
		=
		0.
	\end{eqnarray*} 
	It is then enough to show that
	\begin{eqnarray*}
		\esp\left[ \sum^{\tau(\omega)}_{j=1} j^{-2\alpha}\right]
		<
		\infty.
	\end{eqnarray*}
	If $\alpha \neq \frac12$, we have
	\begin{eqnarray*}
		\esp\left[ \sum^{\tau(\omega)}_{j=1} j^{-2\alpha}\right]
		&=&
		\sum_{k\geq 1} 
		\left(		
			\sum^k_{j=1} j^{-2\alpha}
		\right)
		\p[\tau(\omega)=k]
		\\
		&\leq&
		C_1
		\sum_{k\geq 1} k^{1-2\alpha}
		\p[\tau(\omega)=k],
	\end{eqnarray*}
	for some finite $C_1>0$. We can estimate the probability inside the sum:
	\begin{eqnarray*}
		\p[\tau(\omega)=k]
		&\leq&
		\p\left[|V_j| > j^{-\frac{2\alpha}{3}-\frac{\varepsilon}{2}},\, \forall \, j<k\right]
		\\
		&=&
		\prod^{k-1}_{j=1} \p\left[|V_j| > j^{-\frac{2\alpha}{3}-\frac{\varepsilon}{2}}\right]
		\leq
		\prod^{k-1}_{j=1} 
		j^{(\frac{2\alpha}{3}+\frac{\varepsilon}{2})p}
		\esp\left[|V_j|^p\right]
		\\
		&\leq&
		C_2^k \prod^{k-1}_{j=1} j^{-\varepsilon p}
		\leq
		C_2^k \e^{-c \varepsilon p k \log k},
	\end{eqnarray*}
	for some finite $C_1>0$ and $c>0$. Hence,
	\begin{eqnarray*}
		\esp\left[ \sum^{\tau(\omega)}_{j=1} j^{-2\alpha}\right]
		\leq
		C_1
		\sum_{k\geq 1} k^{1-2\alpha} C_2^k \e^{-c \varepsilon p k \log k}
		<
		\infty.
	\end{eqnarray*}
	The case $\alpha=\frac12$ is similar. All the above estimates hold uniformly in $E\in I$ corresponding to values of $k$ different from $-\frac{5\pi}{8},-\frac{3\pi}{4}$ and $-\frac{7\pi}{8}$.
\end{proof}

\end{document}